\documentclass[11pt, a4paper, leqno]{amsart}
\usepackage{esint} 
\usepackage{graphicx} 
\usepackage[usenames,dvipsnames]{xcolor}
\usepackage{varwidth}
\usepackage[colorlinks=true,linkcolor=Red,citecolor=Green]{hyperref}

\usepackage{amsthm}
\usepackage{amsmath}
\usepackage{amsfonts}
\usepackage{amssymb}
\usepackage{amscd}
\usepackage{mathrsfs}
\usepackage{enumerate}
\usepackage{bm}
\usepackage{dsfont}

\usepackage[utf8]{inputenc}
\usepackage[english]{babel}
\usepackage{anysize}

\usepackage{lipsum}

\newcommand{\R}{\mathbb{R}}
\newcommand{\Op}{\operatorname{Op}}

\newtheorem{prop}{Proposition}
\newtheorem{lemma}{Lemma}

\newtheorem{definition}{Definition}
\newtheorem{corol}{Corollary}
\newtheorem{teor}{Theorem}

\theoremstyle{remark}

\definecolor{darkgreen}{RGB}{0,100,10}

\theoremstyle{remark}
\newtheorem{remark}{Remark}

\title[Convergence of quantum Lindstedt series]{Convergence of quantum Lindstedt series and semiclassical renormalization}
\author{Víctor Arnaiz}

\address{Laboratoire de Mathématiques Jean Leray, Nantes Université. UMR CNRS 6629, 2
	rue de la Houssinière, 44322 Nantes Cedex 03, France.}

\email{victor.arnaiz@univ-nantes.fr}

\begin{document}

\begin{abstract}
In this work we consider the KAM renormalizability problem for small pseudodifferential perturbations of the semiclassical iso\-chro\-nous transport operator with Diophantine frequencies on the torus. Assuming that the symbol of the perturbation is real analytic and  globally bounded, we prove convergence of the quantum Lindstedt series and describe completely the set of semiclassical measures and quantum limits of the renormalized system. Each of these measures is given by symplectic deformation of the Haar measure on an invariant torus for the unperturbed classical system.
\end{abstract}

\maketitle

\section{Introduction and main results}

The present work is concerned with the \textit{renormalization problem} for quantum Hamiltonian systems. Let $\mathbb{T}^d := \R^d/2\pi \mathbb{Z}^d$ be the flat torus, we consider the linear Hamiltonian $\mathcal{L}_\omega : T^*\mathbb{T}^d \to \R$ defined by
\begin{equation}
\label{classical_unperturbed_hamiltonian}
\mathcal{L}_\omega (x,\xi) = \omega \cdot \xi, \quad (x,\xi) \in \mathbb{T}^d \times \R^d \simeq T^*\mathbb{T}^d,
\end{equation}
where the vector of frequencies $\omega$ satisfies the Diophantine condition \eqref{e:diophantine}. The renor\-ma\-lization problem in the classical framework \cite{Elia89,Gall82,Gen96} wonders if, given an open neighborhood $\mathcal{U} \subset \R^d$ of $\xi = 0$, and a small analytic function $ V = V(x,\xi)$ defined on $\mathbb{T}^d \times \mathcal{U}$ and satisfying that $V(x,\xi) = O(\vert \xi \vert^2)$ as $\vert \xi \vert \to 0$, there exists an analytic \textit{counterterm} $R = R( \xi)$ (not depending on the $x$-variable) such that the renormalized Hamiltonian
$$
\mathcal{L}_\omega + V - R
$$
becomes integrable and canonically conjugate to the unperturbed Hamiltonian $\mathcal{L}_\omega$, that is, there exists a canonical transformation $\Phi : \mathbb{T}^d \times \mathcal{U}'' \to \mathbb{T}^d \times \mathcal{U}'$, where $\mathcal{U}'' \subset \mathcal{U}' \subset \mathcal{U}$ are open neighborhoods of zero, such that
\begin{equation}
\label{e:classic_conjugation}
\Phi^* \big( \mathcal{L}_\omega + V - R \big) = \mathcal{L}_\omega.
\end{equation}
This statement, conjectured by Gallavotti \cite{Gall82} and first proven by Eliasson \cite{Elia89} (see also \cite{Gen96}), can be regarded as a control theory theorem. Despite the fact that small perturbations of $\mathcal{L}_\omega$ could generate even ergodic behavior (see \cite{Kat73}) due to degeneracy of the Hessian of $\mathcal{L}_\omega$, this shows that modifying in a suitable way the integrable part of the Hamiltonian, the system remains stable. This also extends a theorem of Rüssmann \cite{Russmann67} which shows convergence of the canonical transformation $\Phi$ provided that $\omega$ satisfies \eqref{e:diophantine} and the (formal) normal form is linear, meaning that all terms in the formal series vanish except the first linear one $\mathcal{L}_\omega$. 

Our main goal is to obtain a quantum analogue of \cite[Thm. A]{Elia89}. We consider families of pseudodifferential perturbations of the iso\-chro\-nous transport operator $-i \hbar \, \omega \cdot \nabla_x$ (where $\hbar \in (0,1]$ is regarded as semiclassical parameter) and study its renormalizability in quantum sense, that is, the existence of suitable integrable counterterms that renormalize the system to make it unitarily conjugate to the unperturbed one. As an application of our result, we will be able to describe completely the semiclassical asymptotics of the renormalized system, obtaining, from any prescribed perturbation, a unique isospectral deformation of $-i \hbar \, \omega \cdot \nabla_x$ for which the quantum Birkhoff normal form becomes \textit{convergent} uniformly as $\hbar \to 0$, and an \textit{exact quantization formula} in the sense of \cite{Pau12,Paul16}, where the authors describe some of the very few examples of systems with convergent quantum Birkhoff normal form so far. This work provides a large new family of KAM systems for which the Birkhoff normal form converges both in classical and quantum sense.

Our motivation comes from previous studies on semiclassical asymptotics for integrable systems (see for instance \cite{ALM:16,ALMCras} for the study of quantum limits of the Dirichlet Laplacian on the disk, \cite{Ar_Mac22,Ar_Mac22B,Ar_Riv18} for semiclassical asymptotics of harmonic oscillators, \cite{An_Mac14,Mac_Riv19} in the case of the Laplacian on the flat torus, or \cite{MaciaZoll,MaciaRiviere16,Mac_Riv18} for Zoll manifols, among many others). A small perturbation of a quantum integrable operator, similarly as in the classical setting, can sometimes dramatically change spectra and propagation phenomena (see \cite{Galkowski18} for an extreme situation in which a small perturbation of an integrable Laplacian becomes unique quantum ergodic). Perturbations of quantum integrable systems for which KAM theory applies are however considerably more subtle and little is known about the precise description of quantum limits for this kind of systems. Most of the works dealing with KAM theory in the semiclassical setting are based on the construction of quasimodes (see \cite{Laz93,Pop00I,Pop00II}), giving rise to semiclassical asymptotics providing very precise estimates of the spectrum distribution. Only very recent results \cite{Gomes18,Gomes_Hassell22} shed some light on the semiclassical asymptotics of sequences of true eigenfunctions of certain KAM systems with discret spectrum, showing absence of quantum ergodicity in general dimension and positive concentration on invariant tori for some KAM systems in dimension two. Our results go in the converse direction; instead of studying directly the spectrum of the perturbed operator, we identify that (resonant) part of the perturbation which generates divergencies in the normal form and outweight it by addition of  integrable counterterms. In \cite{Ar20}, the author considered the case of subprincipal perturbations of the transport operator with Diophantine frequencies via the study of convergence of the quantum Birkhoff normal form by a KAM iterative method. The present work generalizes this result to the case of principal $O(1)$ perturbations and provides a unified approach to the renormalization problem in the quantum semiclassical setting via the study of the Lindstedt series.

Renormalization techniques have been studied by several authors in the context of formal perturbation expansions in quantum field theory, as well as its connection with KAM theory (see for instance \cite{Chan98, Feld92, Gall95, Koch99, Khan88}), adressing in particular the study of convergence of Hamiltonian series expansions arising in the study of quasiperiodic solutions of Hamiltonian systems coming back to the works of Lindstedt and Poincaré \cite{Ar89,Poincare57}.

We also mention that the problem considered here is intimately connected with the reducibility problem for linear quantum Hamiltonian systems (see \cite{Bam_Gre18,Langella19,Procesi19} among others). We emphasize, comparing with these results which hold at quantum ($\hbar = 1$) level, that obtaining our results in full generality, that is, proving the stability of the convergence of the Birkhoff normal form in the semiclassical limit as $\hbar \to 0^+$, requires to estimate the formal series by a careful use of semiclassical pseudodifferential calculus, and not only the algebra of operators. Moreover, we do not restrict ourselves to consider linear or quadratic perturbations (for which the quantum-classical correspondence given by Egorov's theorem becomes exact), allowing the perturbation to belong to a whole space of operators with analytic symbols. This entails some difficulties (see \cite{Pau12} for a similar setting) appearing along the iterative KAM scheme via loss of analiticity. We elucidate and overcome most of these difficulties in this work.

\subsection{Main results.}

We now state the quantum version of the above problem. First of all, we make a strong global assumption. Instead of considering local perturbations of $\mathcal{L}_\omega$ near $\xi = 0$, we consider small analytic perturbations $V(\epsilon,x,\xi)$ globally bounded on the whole phase-space $T^*\mathbb{T}^d \simeq \mathbb{T}^d \times \R^d$. More precisely, we consider, for any $s > 0$, $\epsilon > 0$, the space of real analytic functions on $T^*\mathbb{T}^d$:
\begin{equation}
\label{e:analytic_family}
\mathcal{A}_{s,\epsilon}(T^*\mathbb{T}^d) := \left \{ a \in \mathcal{C}^\omega\big([0,\epsilon) \times T^* \mathbb{T}^d \big) \, : \, \Vert a \Vert_{s,\epsilon} < \infty  \right \},
\end{equation}
with norm $\Vert \cdot \Vert_{s,\epsilon}$ given by 
\begin{equation}
\label{e:norm}
\Vert a \Vert_{s, \epsilon} := \sum_{n=0}^\infty \epsilon^n \Vert a_n \Vert_s,
\end{equation}
where $a_n(x,\xi) = \frac{ \partial_t^n a(t,x,\xi)}{n!} \vert_{t = 0}$, and we define the weighted norm:
\begin{equation}
\label{e:norm_intro}
\Vert a \Vert_s := \int_{\mathcal{Z}^d} \vert \mathcal{F}a(w) \vert e^{s \vert w \vert} \kappa(w),
\end{equation}
where $\mathcal{F} : L^2(T^*\mathbb{T}^d) \to L^2(\mathcal{Z}^d, \kappa)$ is the Fourier transform given by \eqref{e:fourier_transform}. Here we use the conventions of Appendix \ref{s:tools_of_analytic_calculus}, so that $w = (k,\eta) \in \mathcal{Z}^d = \mathbb{Z}^d \times \R^d$ and \eqref{e:norm_intro} is written as a Lebesgue-Stieltjes integral in terms of the measure $\kappa$ defined by \eqref{e:lebesgue_stieltjes}. Similarly we define $\mathcal{A}_{s,\epsilon}(\R^d)$ the space of symbols that do not depend on $x \in \mathbb{T}^d$. If the functions considered do not depend on $\epsilon$, we drop this index in the above definitions. These spaces behave particularly well with respect to the symbolic pseudodifferential calculus. In particular, Calderón-Vaillancourt theorem and precise commutator estimates hold on this family of spaces (see Appendix \ref{s:tools_of_analytic_calculus}).

We now define the semiclassical transport operator
$$
\widehat{L}_{\omega,\hbar} := \Op_\hbar ( \mathcal{L}_\omega) = - i \hbar \, \omega \cdot \nabla,
$$ 
where $\Op_\hbar(\cdot)$ stands for the semiclassical Weyl quantization with semiclassical parameter $\hbar \in (0,1]$ (see Definition \ref{d:semiclassical:quantization}), and we consider perturbations of $\widehat{L}_{\omega,\hbar}$ of the form
$$
\widehat{P}_{\hbar,t}(V) := \widehat{L}_{\omega,\hbar} + t \Op_\hbar(V), \quad 0 \leq t \leq \epsilon,
$$
where $V \in \mathcal{A}_s(T^*\mathbb{T}^d)$ is a prescribed real analytic function. In the sequel, we need to impose a strong non-resonant condition on the vector of frequencies $\omega \in \R^d$; we assume that $\omega$ satisfies the following Diophantine condition: there exist $\varsigma > 0$ and $\gamma > d-1$ such that
\begin{equation}
\label{e:diophantine}
\vert \omega \cdot k \vert \geq \frac{\varsigma}{\vert k \vert^\gamma}, \quad k \in \mathbb{Z}^d \setminus \{0 \}.
\end{equation}

In order to renormalize the operator $\widehat{P}_{\hbar,\epsilon}(V)$, we require the addition of an integrable counterterm. Let $R \in \mathcal{A}_{s,\epsilon}(\R^d)$, we set:
 $$
 \widehat{P}_{\hbar,t}(V,R) :=  \widehat{L}_{\omega,\hbar} + \Op_\hbar(tV - R(t)), \quad 0 \leq t \leq \epsilon.
 $$
 The main result of this work is the following global semiclassical version of the classical renormalization problem described above:

\begin{teor}
\label{c:1}
Let $\omega \in \R^d$ satisfy \eqref{e:diophantine}. Then, given $s_0 > 0$ and  $V \in \mathcal{A}_{s_0}(T^*\mathbb{T}^d)$, there exist $\epsilon = \epsilon(V,\omega) > 0$, $0 < s \leq s_0$, a counterterm $R = R_\hbar \in \mathcal{A}_{s,\epsilon}(\R^d)$, uniformly bounded for $\hbar \in [0,1]$, and a unitary operator $t \mapsto \mathcal{U}_\hbar(t)$ on $L^2(\mathbb{T}^d)$, depending analytically on $t \in [0,\epsilon]$ such that, for every $ \hbar \in (0,1]$ and $t \in [0,\epsilon]$,
\begin{equation}
\label{e:quantum_renormalization}
\mathcal{U}_\hbar(t)^* \widehat{P}_{\hbar,t}(V,R) \, \mathcal{U}_\hbar(t) = \widehat{L}_{\omega,\hbar}.
\end{equation}
\end{teor}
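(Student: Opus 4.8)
\emph{Overview of the argument.} I would construct $\mathcal{U}_\hbar(t)$ as an infinite composition of elementary conjugations, $\mathcal{U}_\hbar(t)=\lim_{j}\,e^{\frac{i}{\hbar}\Op_\hbar(\psi_0(t))}\cdots e^{\frac{i}{\hbar}\Op_\hbar(\psi_{j}(t))}$ with real symbols $\psi_j(t)\in\mathcal{A}_{s_j,\epsilon}(T^*\mathbb{T}^d)$, and the counterterm as a parallel sum $R(t)=\sum_j\delta R_j(t)$ with $\delta R_j(t)\in\mathcal{A}_{s_j,\epsilon}(\R^d)$ independent of $x$, both produced by a quadratically convergent KAM iteration whose estimates are uniform in $\hbar\in(0,1]$. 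Convergence of the quantum Lindstedt series then follows because $t\mapsto\mathcal{U}_\hbar(t)$ and $t\mapsto R(t)$ turn out to be analytic in $t$, so their Taylor coefficients are exactly the (uniquely determined) coefficients of that formal series.

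\emph{One KAM step.} Write $\{a,b\}_\hbar$ for the Moyal bracket, so that $\tfrac{i}{\hbar}[\Op_\hbar(a),\Op_\hbar(b)]=\Op_\hbar(\{a,b\}_\hbar)$ and $\{a,b\}_\hbar=\{a,b\}+O(\hbar^2)$; because the symbol $\mathcal{L}_\omega(x,\xi)=\omega\cdot\xi$ is linear, the series truncates and $\{a,\mathcal{L}_\omega\}_\hbar=\{a,\mathcal{L}_\omega\}=-\omega\cdot\partial_x a$ \emph{exactly}, so the Egorov transport of $\widehat L_{\omega,\hbar}$ is particularly transparent. Suppose that after $j$ steps, with $U_j:=e^{\frac{i}{\hbar}\Op_\hbar(\psi_0)}\cdots e^{\frac{i}{\hbar}\Op_\hbar(\psi_{j-1})}$ and $R^{(j)}:=\sum_{i<j}\delta R_i$ ($U_0=\mathrm{Id}$, $R^{(0)}=0$), one has
\begin{equation*}
U_j^{*}\big(\widehat L_{\omega,\hbar}+\Op_\hbar(tV)-\Op_\hbar(R^{(j)})\big)\,U_j=\widehat L_{\omega,\hbar}+\Op_\hbar(\mathcal{P}_j),
\end{equation*}
with $\mathcal{P}_j\in\mathcal{A}_{s_j,\epsilon}$ small ($\mathcal{P}_0=tV$). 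One then looks for $\psi_j$ and an $x$-independent $\delta R_j$ so that conjugation by $e^{\frac{i}{\hbar}\Op_\hbar(\psi_j)}$, after subtracting $\Op_\hbar(\delta R_j)$, removes $\mathcal{P}_j$ modulo quadratically small terms. Expanding (and keeping track of the small Egorov correction produced when $\Op_\hbar(\delta R_j)$ is conjugated back through $U_j$, which has relative size $\sum_{i<j}\|\psi_i\|$ and is harmless), this reduces to the homological equation $\omega\cdot\partial_x\psi_j=\delta R_j-\mathcal{P}_j+(\text{small correction})$; its solvability requires the $x$-average of the right-hand side to vanish, which \emph{forces} $\delta R_j$ to be the $x$-average of $\mathcal{P}_j$ (up to the correction) — at the first step, $\delta R_0=t\,\langle V\rangle$, the $x$-average of the perturbation — and then $\psi_j$ is the unique zero-average solution, obtained by dividing by $i\,\omega\cdot k$ in the $x$-Fourier variable. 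By the Diophantine condition \eqref{e:diophantine} this costs only a loss of analyticity: $\|\psi_j\|_{s_j-\delta_j,\epsilon}\lesssim\varsigma^{-1}\delta_j^{-\gamma}\,\|\mathcal{P}_j\|_{s_j,\epsilon}$. The new remainder $\mathcal{P}_{j+1}$ collects the quadratic terms, with $\|\mathcal{P}_{j+1}\|_{s_{j+1},\epsilon}\le C\,\varsigma^{-2}\delta_j^{-N}\,\|\mathcal{P}_j\|_{s_j,\epsilon}^{2}$ for some fixed $N=N(d,\gamma)$.

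\emph{Iteration, limit and conclusion.} Choosing $s_j\downarrow s:=s_0/2$ and a suitable decreasing sequence $\delta_j$, the quadratic recursion converges doubly-exponentially as soon as the initial size $\|\mathcal{P}_0\|_{s_0,\epsilon}=\|tV\|_{s_0,\epsilon}=\epsilon\|V\|_{s_0}$ lies below a threshold depending only on $\|V\|_{s_0},s_0,\varsigma,\gamma,d$; this fixes $\epsilon=\epsilon(V,\omega)>0$ and $s\in(0,s_0]$. All constants are independent of $\hbar\in(0,1]$: the exponentially weighted norm \eqref{e:norm} is a Banach-algebra norm for the Moyal product (the Moyal twist being a unimodular phase on the Fourier side), and $\{\cdot,\cdot\}_\hbar$ satisfies the same analytic estimates as the Poisson bracket, with the same constants, uniformly down to $\hbar=0$ — this is the content of Appendix~\ref{s:tools_of_analytic_calculus}. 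Since $\sum_j\|\psi_j\|_{s,\epsilon}<\infty$, the $U_j$ converge in operator norm, locally uniformly and analytically in $t\in[0,\epsilon]$, to a unitary $\mathcal{U}_\hbar(t)$ (unitarity comes from $\psi_j(t)$ being a \emph{real} symbol, so that $\Op_\hbar(\psi_j(t))$ is bounded and self-adjoint by Calder\'on--Vaillancourt — reality being propagated by the iteration because $\overline{\{a,b\}_\hbar}=\{\bar a,\bar b\}_\hbar$ and division by $i\,\omega\cdot k$ preserves real functions); and $R(t)=\sum_j\delta R_j(t)$ converges in $\mathcal{A}_{s,\epsilon}(\R^d)$ and is $x$-independent. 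Passing to the limit in the displayed identity yields $\mathcal{U}_\hbar(t)^{*}\big(\widehat L_{\omega,\hbar}+\Op_\hbar(tV-R(t))\big)\mathcal{U}_\hbar(t)=\widehat L_{\omega,\hbar}$, which is \eqref{e:quantum_renormalization}; analyticity in $t$ gives convergence of the Lindstedt series as explained above.

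\emph{Main obstacle.} The hard point is the KAM step and its estimates. Unlike in linear, quadratic or polynomial perturbation problems, $V$ is an arbitrary element of an analytic class, so the normal-form rearrangement at each step is a genuine \emph{infinite} series of iterated Moyal brackets, not a finite Birkhoff polynomial; one must therefore bound that whole series in the scale $\mathcal{A}_{s,\epsilon}$, controlling the loss of analyticity it creates, while it competes with the small-divisor loss coming from $(\omega\cdot\partial_x)^{-1}$ — and all of this with constants that do not degenerate as $\hbar\to0^{+}$. The quadratic (rather than merely order-by-order) character of the scheme is what makes the per-step analyticity losses summable, and the two stability properties of $\mathcal{A}_{s,\epsilon}$ under the semiclassical Moyal calculus that keep the bounds $\hbar$-uniform are exactly what Appendix~\ref{s:tools_of_analytic_calculus} is designed to provide; granting these inputs, the scheme closes.
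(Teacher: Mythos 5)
Your scheme is genuinely different from the paper's (which never runs an iteration: it solves the conjugation equation \eqref{e:main_equation_for_lindstedt} formally, organizes the coefficients by Eliasson's tree/resonance combinatorics, and proves convergence of the resulting Lindstedt series directly, via the small-divisor bound of Lemma \ref{l:lemma_eliasson} together with the new counting bound $\rho(\delta,v)\le C^n$ of Proposition \ref{p:combinatorial_lemma}, which exploits cancellations between admissible families of resonances). The problem is that the one point you dismiss in a parenthesis is exactly where your scheme breaks. Write $\varepsilon_j=\Vert\mathcal{P}_j\Vert_{s_j,\epsilon}$ and $\varepsilon_0\sim\epsilon\Vert V\Vert_{s_0}$. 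The counterterm increment $\delta R_j$ is added at the \emph{original} level, so what actually enters the $j$-th homological equation is $U_j^*\Op_\hbar(\delta R_j)U_j=\Op_\hbar(\Psi_j(\delta R_j))$, and $E_j:=\Psi_j(\delta R_j)-\delta R_j$ is in general $x$-dependent, of size $\sim\varepsilon_0\,\varepsilon_j$. If, as you propose, you take $\delta R_j$ to be (essentially) the $x$-average of $\mathcal{P}_j$ and push $E_j$ into $\mathcal{P}_{j+1}$, the claimed recursion $\varepsilon_{j+1}\le C\delta_j^{-N}\varepsilon_j^{2}$ is false: you only get $\varepsilon_{j+1}\le C\delta_j^{-N}(\varepsilon_0\varepsilon_j+\varepsilon_j^{2})$, i.e.\ \emph{linear} contraction with fixed ratio $O(\varepsilon_0)$. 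With $\sum_j\delta_j<s_0/2$ one has $\delta_j^{-N}\gtrsim j^{2N}$, so the accumulated analyticity-loss constants grow super-exponentially and a fixed-ratio linear contraction cannot absorb them; the iteration does not close, no matter how small $\epsilon$ is. This is not a technicality: the interaction between the counterterm (fixed in the original coordinates) and the already-performed conjugations is precisely the crux of the renormalization problem, and the paper explicitly notes that no proof of the classical Gallavotti–Eliasson counterterm theorem by an iterative KAM scheme is known — only Lindstedt-series proofs.

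To rescue quadraticity you would have to choose $\delta R_j$ by the \emph{implicit} condition that the $x$-average of $\Psi_j(\delta R_j)$ equals that of $\mathcal{P}_j$, so that the whole of $\Psi_j(\delta R_j)$ (error included) sits on the right-hand side of the homological equation $\omega\cdot\partial_x\psi_j=\mathcal{P}_j-\Psi_j(\delta R_j)$; but then you must invert the map $\delta R\mapsto\langle\Psi_j(\delta R)\rangle$ on $x$-independent symbols, uniformly along the scheme and uniformly in $\hbar$, inside the scale $\mathcal{A}_{s,\epsilon}$ where each application of $\Psi_j-\operatorname{Id}$ costs analyticity (Lemma \ref{l:propagator_symbolic_lemma}), so a naive Neumann series in a fixed norm is not available. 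None of this is addressed in your proposal, and it is the actual content of the theorem; by contrast the paper bypasses it entirely by estimating the formal series itself (Theorem \ref{t:bound_lindstedt_series}) and then obtaining \eqref{e:quantum_renormalization} by a one-line integration of the conjugation equation. As written, your argument has a genuine gap at its central step.
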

This shows in particular that the $L^2(\mathbb{T}^d)$-spectrum of the operator $ \widehat{P}_{\hbar,t}(V,R)$ coincides with that of the unperturbed operator $\widehat{L}_{\omega,\hbar}$. In other words, the renormalization procedure generates an isospectral deformation of $\widehat{L}_{\omega,\hbar}$ from any prescribed perturbation $V \in \mathcal{A}_{s_0}(T^*\mathbb{T}^d)$. In particular, the spectrum all along this family is pure-point, as it is the spectrum of $\widehat{L}_{\omega,\hbar}$, that is: there exists and orthonormal basis of $L^2(\mathbb{T}^d)$ consisting of eigenfunctions. Recall that the point spectrum of $\widehat{L}_{\omega,\hbar}$ is given by:
$$
\operatorname{Sp}^{\textnormal{p}}_{L^2(\mathbb{T}^d)} (\widehat{L}_{\omega,\hbar}) = \{ \hbar  \, \omega \cdot k \, : \, k \in \mathbb{Z}^d \}. 
$$
 
We next aim at describing the semiclassical asymptotics of the renormalized system $\widehat{P}_{\hbar,t}(V,R)$. Let us define the set of semiclassical measures of the operator $  \widehat{P}_{\hbar,t}(V,R)$ as the set of probabilty measures supported on $ (\mathcal{L}_\omega + tV - R(t))^{-1}(1) \subset T^* \mathbb{T}^d $ that are weak-$\star$ limits of sequences of Wigner distributions associated with normalized sequences of eigenfunctions $(\Psi_\hbar, \lambda_\hbar)$ satisfying
\begin{equation}
\label{e:eigenfunction_sequence}
 \widehat{P}_{\hbar,t}(V,R) \Psi_\hbar = \lambda_\hbar \Psi_\hbar, \quad \Vert \Psi_\hbar \Vert_{L^2(\mathbb{T}^d)} = 1, \quad \lambda_\hbar \to 1.
\end{equation}
That is, $\mu$ is a semiclassical measure (see \cite{Ger90}) of $\widehat{P}_{\hbar,t}(V,R)$ for the sequence $(\Psi_\hbar, \lambda_\hbar)$ if, modulo a subsequence, for every $a \in \mathcal{C}_c^\infty(T^* \mathbb{T}^d)$,
 $$
 \lim_{\hbar \to 0^+} \big \langle \Op_\hbar(a) \Psi_\hbar, \Psi_\hbar \big \rangle_{L^2(\mathbb{T}^d)}  = \int_{T^* \mathbb{T}^d} a \, d\mu.
 $$
Notice that semiclassical measures are defined on the phase-space $T^*\mathbb{T}^d$. One can project these measures onto the position variable by testing the sequence against symbols $a$ depending on the $x$ variable only. These projections are usually called \textit{quantum limits}. More precisely, a quantum limit $\nu$ for a sequence $(\Psi_\hbar, \lambda_\hbar)$ satisfying \eqref{e:eigenfunction_sequence} is a probability measure on $\mathbb{T}^d$ such that, for any $b \in \mathcal{C}(\mathbb{T}^d)$,
$$
\lim_{\hbar \to 0} \int_{\mathbb{T}^d} b(x) \vert \Psi_\hbar(x) \vert^2  dx = \int_{\mathbb{T}^d} b(x) d\nu(x).
$$
In particular, any quantum limit is obtained by projection of a semiclassical measure, so that
$$
\nu(x) = \int_{\R^d} \mu(x,d\xi).
$$
\begin{teor}
\label{c:2}
Let $\omega \in \R^d$ satisfy \eqref{e:diophantine}, and let $s_0  > 0$. Given $V \in \mathcal{A}_{s_0}(T^*\mathbb{T}^d)$, let $\epsilon = \epsilon(V, \omega) > 0$ be given by Theorem \ref{c:1}. Then there exists a symplectomorphism $t \mapsto \Phi_t : T^*\mathbb{T}^d \to T^*\mathbb{T}^d$ depending analytically on $t \in [0,\epsilon)$ such that 
$$
\Vert \operatorname{Id} - \Phi_t \Vert_{s} \leq C \epsilon, \quad C = C(V,\omega) > 0,
$$ 
and the set of semiclassical measures of the operator $\widehat{P}_{\hbar,t}(V,R)$ is precisely:
$$
\left \{ \mu = (\Phi_t)_* \, \delta_{\mathbb{T}^d \times \{ \xi_0 \} }  \, : \, \mathbb{T}^d \times \{\xi_0 \} \subset \mathcal{L}_\omega^{-1}(1) \right \},
$$
where $\delta_{\mathbb{T}^d \times \{ \xi_0 \}}$ denotes the uniform probability measure (normalized Haar measure) on $\mathbb{T}^d \times \{ \xi_0 \}$. 
\end{teor}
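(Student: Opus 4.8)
The plan is to deduce Theorem~\ref{c:2} from the structure of the conjugating operator $\mathcal{U}_\hbar(t)$ furnished by Theorem~\ref{c:1}. By construction (convergence of the quantum Lindstedt series behind Theorem~\ref{c:1}), $\mathcal{U}_\hbar(t)$ is a composition, operator-norm convergent uniformly in $\hbar\in(0,1]$, of unitaries of Hamiltonian type $e^{\frac{i}{\hbar}\Op_\hbar(W_j(t))}$ with real-valued generating symbols $W_j(t)\in\mathcal{A}_{s_j,\epsilon}(T^*\mathbb{T}^d)$, $s_j\downarrow s>0$, whose norms are $O(\epsilon)$ and decay rapidly in $j$. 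Consequently $\mathcal{U}_\hbar(t)$ is a semiclassical Fourier integral operator quantizing the symplectomorphism $\Phi_t:=\cdots\circ\varphi^1_{W_2(t)}\circ\varphi^1_{W_1(t)}$, where $\varphi^1_S$ is the time-one Hamiltonian flow of the principal symbol of $S$. Since each $W_j(t)$ is a globally bounded analytic symbol, the flows are complete, the composition converges to a global symplectomorphism of $T^*\mathbb{T}^d$ depending analytically on $t$ (because each $W_j(t)$ does), and $\Vert\operatorname{Id}-\Phi_t\Vert_s\le\sum_j C\Vert W_j(t)\Vert_{s_j}\le C'\epsilon$. Taking principal symbols in \eqref{e:quantum_renormalization} gives the classical identity $\Phi_t^*\big(\mathcal{L}_\omega+tV-R(t)\big)=\mathcal{L}_\omega$, so $\Phi_t(\mathbb{T}^d\times\{\xi_0\})\subset(\mathcal{L}_\omega+tV-R(t))^{-1}(1)$ whenever $\omega\cdot\xi_0=1$; in particular every measure appearing in the statement is supported on the correct energy surface.

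The second ingredient is an Egorov theorem for $\mathcal{U}_\hbar(t)$ with remainder uniform as $\hbar\to0^+$: using the analytic symbolic and commutator estimates of Appendix~\ref{s:tools_of_analytic_calculus}, for every $a\in\mathcal{C}_c^\infty(T^*\mathbb{T}^d)$ one gets $\mathcal{U}_\hbar(t)^*\Op_\hbar(a)\mathcal{U}_\hbar(t)=\Op_\hbar(a\circ\Phi_t)+\hbar\,\Op_\hbar(r_\hbar)$ with $\sup_{\hbar\in(0,1]}\Vert r_\hbar\Vert_{L^\infty}<\infty$ (one Egorov step per factor $e^{\frac{i}{\hbar}\Op_\hbar(W_j(t))}$, the errors being summable thanks to the decay of $\Vert W_j(t)\Vert_{s_j}$). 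Combined with the elementary identity $\langle\Op_\hbar(b)e_k,e_k\rangle=(2\pi)^{-d}\int_{\mathbb{T}^d}b(x,\hbar k)\,dx$ for $e_k(x)=(2\pi)^{-d/2}e^{ik\cdot x}$, this yields the concentration statement: if $\hbar k(\hbar)\to\xi_0$ and $\Psi_\hbar:=\mathcal{U}_\hbar(t)e_{k(\hbar)}$, then $\langle\Op_\hbar(a)\Psi_\hbar,\Psi_\hbar\rangle\to\int_{T^*\mathbb{T}^d}a\,d\big((\Phi_t)_*\delta_{\mathbb{T}^d\times\{\xi_0\}}\big)$ for all $a\in\mathcal{C}_c^\infty(T^*\mathbb{T}^d)$.

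With these tools both inclusions are short. For "$\supseteq$": given $\xi_0$ with $\omega\cdot\xi_0=1$, choose rationals $p_n/q_n\to\xi_0$ componentwise with $q_n\to\infty$, set $\hbar_n:=1/q_n$ and $k_n:=p_n\in\mathbb{Z}^d$, so $\hbar_nk_n\to\xi_0$; then $\Psi_{\hbar_n}:=\mathcal{U}_{\hbar_n}(t)e_{k_n}$ is a normalized eigenfunction of $\widehat{L}_{\omega,\hbar_n}+\Op_{\hbar_n}(tV-R(t))$ with eigenvalue $\hbar_n\,\omega\cdot k_n=\omega\cdot(\hbar_nk_n)\to1$, and by the concentration statement its Wigner distributions converge to $(\Phi_t)_*\delta_{\mathbb{T}^d\times\{\xi_0\}}$. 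For "$\subseteq$": let $(\Psi_\hbar,\lambda_\hbar)$ be a normalized eigenfunction sequence with $\lambda_\hbar\to1$ whose Wigner distributions converge to a probability measure $\mu$. Since $\omega$ is non-resonant the spectrum of $\widehat{L}_{\omega,\hbar}$ is simple, hence so is that of the unitarily conjugate perturbed operator; thus for each $\hbar$ there is a unique $k(\hbar)\in\mathbb{Z}^d$ with $\lambda_\hbar=\hbar\,\omega\cdot k(\hbar)$ and $\Psi_\hbar=c_\hbar\,\mathcal{U}_\hbar(t)e_{k(\hbar)}$, $|c_\hbar|=1$ (the phase is irrelevant for Wigner distributions). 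Set $\xi_\hbar:=\hbar k(\hbar)$, so $\omega\cdot\xi_\hbar=\lambda_\hbar\to1$. If $(\xi_\hbar)$ were unbounded, then along a subsequence $|\xi_\hbar|\to\infty$; since $\Phi_t$ is proper, $a\circ\Phi_t$ has compact support for $a\in\mathcal{C}_c^\infty(T^*\mathbb{T}^d)$, and then $\langle\Op_\hbar(a)\Psi_\hbar,\Psi_\hbar\rangle=(2\pi)^{-d}\int_{\mathbb{T}^d}(a\circ\Phi_t)(x,\xi_\hbar)\,dx+O(\hbar)=O(\hbar)$ eventually, forcing $\mu=0$, which is impossible. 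Hence $(\xi_\hbar)$ is bounded; extracting $\xi_\hbar\to\xi_0$ (necessarily $\omega\cdot\xi_0=1$) and applying the concentration statement gives $\mu=(\Phi_t)_*\delta_{\mathbb{T}^d\times\{\xi_0\}}$.

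The main obstacle is the first step: extracting from the proof of Theorem~\ref{c:1} that $\mathcal{U}_\hbar(t)$ genuinely quantizes a canonical transformation and, crucially, that Egorov's theorem for it carries a remainder bounded \emph{uniformly in $\hbar$ as $\hbar\to0^+$} — the naive estimates only control finite truncations, and one must sum the Egorov errors along the whole scheme using the rapid decay of $\Vert W_j(t)\Vert_{s_j}$, which is exactly what the analytic pseudodifferential calculus of the appendix is designed to supply. By contrast, the escape-of-mass argument in the "$\subseteq$" direction and the classical-limit identity pinning down the energy surface are routine.
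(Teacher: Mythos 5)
Your overall strategy coincides with the paper's: conjugate the eigenfunctions by $\mathcal{U}_\hbar(t)$, invoke an Egorov theorem to replace $\Op_\hbar(a)$ by $\Op_\hbar(a\circ\Phi_t)$, and then classify the semiclassical measures of the free operator $\widehat{L}_{\omega,\hbar}$ (which the paper outsources to \cite[Prop.~1]{Ar20}, whereas you reprove it by hand via rational approximation of $\xi_0$, simplicity of the spectrum, and an escape-of-mass argument; that part of your write-up is correct and in fact more self-contained than the paper's citation).

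The genuine gap is the first step, which you yourself flag as "the main obstacle" and then justify by an incorrect structural claim. You assert that the proof of Theorem \ref{c:1} produces $\mathcal{U}_\hbar(t)$ as an infinite, operator-norm convergent composition of time-one unitaries $e^{\frac{i}{\hbar}\Op_\hbar(W_j(t))}$ with rapidly decaying generators, as in an iterative KAM scheme, and you rely on summing Egorov errors over that composition to get a remainder uniform in $\hbar$; this is never established and is not what the Lindstedt-series construction gives. In the paper, $\mathcal{U}_\hbar(t)=U_{-H}(t)^*$ where $U_{-H}(t)$ solves the time-dependent Schr\"odinger equation \eqref{e:evolution_problem} generated by the single Hamiltonian $\Op_\hbar(-H(t))$, with $H\in\mathcal{A}_{s,\epsilon}(T^*\mathbb{T}^d)$ obtained from the convergent series of Theorem \ref{t:bound_lindstedt_series}. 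Consequently the uniform Egorov statement is not something to be re-derived by resumming an infinite scheme: it is the standard semiclassical Egorov theorem for a propagator with globally bounded, analytic time-dependent symbol (the paper cites \cite[Thm.~11.1]{Zw12}), with $\Phi_t=(\Phi_t^{-H})^{-1}$ the inverse of the time-dependent Hamiltonian flow of $-H$, and the bound $\Vert\operatorname{Id}-\Phi_t\Vert_s\le C\epsilon$ follows from $\Phi_t(z)-z=\int_0^t\Phi_\tau^*(\Omega\nabla H)\,d\tau$ together with Lemma \ref{l:propagator_symbolic_lemma} (this is \eqref{e:estimate_flow} in the proof of Theorem \ref{t:classic_main_theorem}). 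So your conclusion is reachable, but as written the crucial FIO/Egorov input is asserted on the basis of a structure that the construction does not have; replacing it by the time-dependent-propagator description closes the gap and actually simplifies your argument.
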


\begin{remark}
If $V \equiv 0$, the result is trivial (see \cite[Prop. 1]{Ar20}). 
\end{remark}

\begin{remark}
The canonical transformation $\Phi_t$ necessarily concides with the one coming from the classical problem. Moreover, $\mathcal{U}_\hbar(t)$ is a Fourier integral operator quantizing this symplectomorphism which satisfies Egorov's theorem:
$$
\mathcal{U}_\hbar(t)^* \, \Op_\hbar(a) \, \mathcal{U}_\hbar(t) = \Op_\hbar(a \circ \Phi_t) + O_{\mathcal{L}(L^2(\mathbb{T}^d))}(\hbar).
$$
\end{remark}

\begin{remark}
If we allow $\epsilon$ to depend on $\hbar$ and to be of the form $\epsilon_\hbar = \epsilon \hbar$ (subprincipal perturbation), then Theorems \ref{c:1} and \ref{c:2} have essentially been proven in \cite[Thm. 2]{Ar20}. In this case, moreover, the symplectormorphism $\Phi$ of Theorem \ref{c:2} is actually $\Phi = \operatorname{Id}$. The proof is based on a normal-form approach (KAM iterative method) at operator level. However, if the perturbation is of order $\epsilon$ (principal level), then the normal form has to be estimated necessarily using the semiclassical calculus, which brings out many new issues. To the best of the author knowledge, no other proofs of the classical renormalization problem exist out from the study of convergence of Lindstedt series (\cite{Elia89,Gen96} and the references therein). This justifies the use of this method ahead of other KAM methods such as Nash-Moser theorem or other KAM iterative schemes. The problem of renormalization of the semiclassical operator $\widehat{L}_{\omega,\hbar} + \epsilon \Op_\hbar(V)$ (uniformly in the semiclassical parameter $\hbar > 0$) is a major challenge, since the relative scale of the perturbation with respect to the principal transport operator is much larger, so it is not sufficient to work at operator level and we are forced to deal with the semiclassical pseudodifferential calculus. On the other hand, in the particular case in which $\hbar \equiv 1$ is fixed, then \cite[Thm. 2]{Ar20} is enough to renormalize the operator $\widehat{L}_{\omega,1} + \epsilon \Op_1(V)$ and it is most likely that other KAM methods can be adapted to show convergence of the Birkhoff normal form.
\end{remark}

\begin{remark}
If $V$ has the particular form $V = V(x,\omega \cdot \xi)$, and
$$
\sum_{k \in \mathbb{Z}^d}  \int_{\R} \vert \mathcal{F}{V}(k,\tau) \vert \exp \big( s_0(\vert k \vert + \vert \tau \vert) \big) d\tau <  \infty,
$$ 
for some $s_0 > 0$, where $\mathcal{F}$ denotes here the Fourier transform in $(x,y) \in \mathbb{T}^d \times \R$, then the normal form converges in quantum sense (without necessity of renormalization), that is, there exists $0 < \epsilon = \epsilon(V,\omega)$ and $\mathcal{U}_\hbar$ so that
$$
\mathcal{U}_\hbar^* \, \big( \widehat{L}_{\omega,\hbar} + \epsilon \Op_\hbar( V ) \big) \, \mathcal{U}_\hbar = \widehat{L}_{\omega,\hbar} + \sum_{n=1}^\infty  \Op_\hbar(R_{n,\hbar}),
$$
where $R_{n,\hbar} = R_{n,\hbar}(\epsilon, \xi)$, and $R_\hbar = \sum_{n=1}^\infty R_{n,\hbar}$ satisfies $\Vert R_\hbar \Vert_{s} < \infty$ for some $0 < s \leq s_0$. This is shown in \cite{Pau12} (see also \cite{Paul16} for a more general result in this direction), where the authors prove convergence of the quantum normal form at semiclassical level, working in the family of symbols \eqref{e:analytic_family}. From this result, one can prove easily Theorem \ref{c:2} (for the original operator $\widehat{L}_{\omega,\hbar} + \epsilon \Op_\hbar(V)$) also in this case.
\end{remark}

Notice that the classical result \cite{Elia89}, \cite{Gen96} has been only proven locally in a neighborhood of $\xi = 0$, while our quantum extension is global in $T^*\mathbb{T}^d$. We obtain as a byproduct of our study the following global classical theorem:

\begin{teor}
\label{t:classic_main_theorem}
Let $\omega \in \R^d$ satisfy \eqref{e:diophantine}, and let $s_0  > 0$. Then, given $V \in \mathcal{A}_{s_0}(T^*\mathbb{T}^d)$ there exist $0 < s \leq s_0$, $0 < \epsilon = \epsilon(V,\omega)$, a counterterm $R \in \mathcal{A}_{s,\epsilon}(\R^d)$ and a symplectomorphism $t \mapsto \Phi_t : T^*\mathbb{T}^d \to T^* \mathbb{T}^d$, depending analytically on $t \in [0,\epsilon)$ such that
$$
\Vert \operatorname{Id} - \Phi_t \Vert_{s} \leq C \epsilon, \quad \Phi_t^* \big( \mathcal{L}_\omega +  t V - R(t) \big)  = \mathcal{L}_\omega.
$$ 
\end{teor}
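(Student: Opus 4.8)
The plan is to obtain Theorem~\ref{t:classic_main_theorem} as a specialization of the quantum construction underlying Theorems~\ref{c:1} and~\ref{c:2}, by taking the semiclassical limit $\hbar\to 0$. The point is that the quantum Lindstedt series producing the counterterm $R$ and the conjugating unitary $\mathcal{U}_\hbar(t)$ is built entirely at the level of (semiclassical) symbols: one constructs, recursively in powers of $t$, generating symbols $g_n$ for $\mathcal{U}_\hbar(t)$ together with counterterm symbols $R_n$, solving at each order a homological equation of the form $\omega\cdot\nabla_x g_n=(\text{data from orders }<n)-R_n$, where $R_n$ is fixed to be the $x$-average of the right-hand side so that the equation becomes solvable, and $g_n$ is then recovered by Fourier division using the Diophantine condition \eqref{e:diophantine}, at the cost of a loss of analyticity controlled by $|k|^\gamma/\varsigma$. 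Because the perturbation enters at principal order $O(1)$ and the Moyal corrections to the relevant brackets are $O(\hbar)$, setting $\hbar=0$ in this recursion yields exactly the classical Lindstedt recursion for the conjugation $(\mathcal{L}_\omega+tV-R(t))\circ\Phi_t=\mathcal{L}_\omega$; this is also why the counterterm $R$ in Theorem~\ref{c:1} carries no $\hbar$-dependence.

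Granting this, the convergence estimates in the norm $\Vert\cdot\Vert_{s,\epsilon}$ proven for Theorem~\ref{c:1}, which are uniform in $\hbar\in(0,1]$, specialize at $\hbar=0$: for some $0<s\le s_0$ and $\epsilon=\epsilon(V,\omega)>0$ the classical counterterm $R=\sum_{n\ge 1}t^nR_n$ converges in $\mathcal{A}_{s,\epsilon}(\R^d)$ and so do the classical generating functions $g=\sum_{n\ge 1}t^ng_n$. Assembling $g$ into a symplectomorphism — e.g. by taking $\Phi_t$ to be the time-one Hamiltonian flow of $g$, with $t$ a parameter — produces the required $\Phi_t:T^*\mathbb{T}^d\to T^*\mathbb{T}^d$, analytic in $t\in[0,\epsilon)$; the bound $\Vert\operatorname{Id}-\Phi_t\Vert_s\le C\epsilon$ comes from the first-order term $g_1$ together with the convergence of the tail. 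Finally, $(\mathcal{L}_\omega+tV-R(t))\circ\Phi_t=\mathcal{L}_\omega$ holds because this identity is satisfied order by order in $t$ by the very construction of $g_n$ and $R_n$, and both sides (after subtracting the unbounded linear part $\mathcal{L}_\omega$, common to both) are convergent $t$-power series in $\mathcal{A}_{s,\epsilon}$.

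As an independent cross-check, one can also read off Theorem~\ref{t:classic_main_theorem} directly from Theorems~\ref{c:1} and~\ref{c:2} by a semiclassical-limit argument: conjugating the identity \eqref{e:quantum_renormalization} and using that $\mathcal{U}_\hbar(t)$ is a Fourier integral operator quantizing $\Phi_t$ with the Egorov property, one gets
$$
\Op_\hbar\big((\mathcal{L}_\omega+tV-R(t))\circ\Phi_t-\mathcal{L}_\omega\big)=O_{\mathcal{L}(L^2(\mathbb{T}^d))}(\hbar);
$$
since $tV-R(t)\in\mathcal{A}_s$ is bounded and $\mathcal{L}_\omega\circ\Phi_t-\mathcal{L}_\omega=\omega\cdot(\Phi_t-\operatorname{Id})$ is also a bounded symbol in $\mathcal{A}_s$ by the estimate on $\Phi_t$, the symbol inside is $\hbar$-independent and bounded, so testing against coherent states and letting $\hbar\to 0$ forces it to vanish identically. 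In either route the substantive work is not in this corollary but in Theorem~\ref{c:1} itself: the main obstacle is the control, along the Lindstedt series, of the successive losses of analyticity introduced by the small divisors $1/(\omega\cdot k)$, which requires summing the associated tree/graph expansion against the Diophantine losses; the only additional point specific to the classical statement is the unboundedness of $\mathcal{L}_\omega$, which is harmless precisely because $\mathcal{L}_\omega\circ\Phi_t-\mathcal{L}_\omega$ remains a bounded analytic symbol.
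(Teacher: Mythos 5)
Your main route is essentially the paper's own proof: Theorem \ref{t:classic_main_theorem} is obtained there by rerunning the proof of Theorem \ref{c:1} with the Moyal commutator $[\cdot,\cdot]_\hbar$ replaced by the Poisson bracket and the quantum propagator $U_{-H}(t)$ replaced by the time-dependent Hamiltonian flow $\Phi_t^{-H}$, observing that Lemmas \ref{l:Jacobi_rule}, \ref{l:commutator_loss} and \ref{l:propagator_symbolic_lemma} (hence the whole convergence argument, whose bounds are uniform in $\hbar$) remain valid in the classical setting --- which is exactly your ``set $\hbar=0$ in the symbol-level recursion and reuse the uniform estimates''. Two caveats. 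First, the paper does not use the time-one flow of a single generator: it takes $\Phi_t=(\Phi_t^{-H})^{-1}$ for the time-dependent Hamiltonian $H(t)=\sum_{n\ge1}t^{n-1}H_n$ produced by the Lindstedt series, and gets $\Vert\operatorname{Id}-\Phi_t\Vert_s\le C\epsilon$ from $\Phi_t(z)-z=\int_0^t\Phi_\tau^*(\Omega\nabla H)\,d\tau$ together with Lemma \ref{l:propagator_symbolic_lemma}; your hedged ``time-one flow of $g$'' would require a different (though analogous) recursion from the one derived in \eqref{e:cohomological_general_equation}. Second, your parenthetical claim that the counterterm of Theorem \ref{c:1} carries no $\hbar$-dependence is incorrect: for $n\ge2$ the quantum $R'_n$ is built from the Moyal data $\sigma_\hbar$ (e.g.\ $R'_2$ is the average of $[H_1,V-R'_1]_\hbar$), so the quantum counterterm depends on $\hbar$ (uniformly), and the classical counterterm is its $\hbar=0$ counterpart. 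This does not affect your main route, but it does invalidate the ``independent cross-check'' as written, since the symbol $(\mathcal{L}_\omega+tV-R(t))\circ\Phi_t-\mathcal{L}_\omega$ is then not $\hbar$-independent and the coherent-state limit only controls its $\hbar\to0$ limit, which brings you back to the same specialization-at-$\hbar=0$ analysis as the main argument.
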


\begin{remark}
The counterterms $R_\hbar$ and $R$ obtained respectively for the quantum and the classical problem satisfy $R_\hbar = R + O_{\mathcal{A}_{s,\epsilon}(\R^d)}(\hbar)$.
\end{remark}

From the proof of Theorem \ref{c:1} one can easily obtain some other (weaker) versions of the renormalization problem in the quantum setting. In particular, if $\hbar = 1$ is fixed or the size of the perturbation is of subprincipal type ($\epsilon_\hbar = \epsilon \hbar$), one can show Theorem \ref{c:1} even relaxing the analyticity hypothesis in the $\xi$ variable, and requering only that $\xi \mapsto V(x,\xi)$ is bounded together with all its derivatives. By the Lindstedt series approach one can show \cite[Thm. 2]{Ar20} also in the Sjöstrand-class $\mathcal{A}_{W,s}(T^*\mathbb{T}^d)$ considered in \cite[Def. 3]{Ar20}. Also in the case of linear perturbations of the form $V(x,\xi) = v(x) \cdot \xi$, \cite[Corol. 1]{Ar20} can be reproved using the Lindstedt series approach, showing moreover analyticity of the renormalization with respect to the size of the perturbation (see Remark \ref{r:simplification}).


\subsection*{Acknowledgments}

The author would like to warmly thank Fabricio Macià, Gabriel Rivière, Colin Guillarmou, Chenmin Sun, Benoît Grébert, and Georgi Popov for usefull discussions on this and related problems, and to Alberto Maspero and Massimiliano Berti for their kindly invitation to the SISSA (Trieste) in 2019, where this work was conceived. This research has been supported by ANR project Aléatoire, Dynamique et Spectre. The author is also partially supported by projects: MTM2017-85934-C3-3-P and PID2021-124195NB-C31 (MINECO, Spain).

\medskip

\section{The Lindstedt series}
\label{s:lindstedt_series}

We start our study from the formal conjugation problem \eqref{e:main_equation_for_lindstedt} which brings out the Lindstedt series. The analysis of this series will be the key ingredient for the proof of Theorem \ref{c:1}. We  revisit some parts of the work of Eliasson \cite{Elia89}  (see also \cite{Eliasson90}, \cite{Eliasson96}) and adapt it to the quantum setting. Some steps in the proofs are here detailed or modified to a more convenient exposition in line with the present work. The main difference with respect to this series of works is that, instead of looking for the canonical transformation $\Phi_t$ as a general symplectic isomorphism homotopic to the identity, we restrict ourselves to consider time-dependent Hamiltonian flows $\Phi_t = \Phi_t^H$ so that $\Phi_t^H \vert_{t = 0} = \operatorname{Id}$. We show that this reduction does not lead to a loss of generality (the solution is formally unique, see \cite{Gall82}) and turns out to be very usefull to quantize the problem. Indeed we show formal solvability and convergence of the Birkhoff normal form in a unified way, so that the classical result is derived as a byproduct from the quantum one. Moreover, we show that the Lindstedt series for the formal expansion of $H$ can be linked with the one obtained by Eliasson for  the canonical transformation $\Phi_t$.

\subsection{Time-dependent Hamiltonian flows}
\label{s:hamiltonians}

In this section we describe the evolution of quantum Hamiltonian systems with operators having symbols in the analytic spaces \eqref{e:analytic_family}. Let $H \in \mathcal{A}_{s,\epsilon}(T^*\mathbb{T}^d)$ be a time-dependent classical Hamiltonian. We consider the quantum initial value problem:
\begin{equation}
\label{e:evolution_problem}
\left \lbrace \begin{array}{l}
\hbar D_t U_H(t) + \Op_\hbar(H(t)) U_H(t) = 0, \\[0.2cm]
 U_H(0) = \operatorname{Id}.
\end{array} \right.
\end{equation} 
The existence and uniqueness of the solution to \eqref{e:evolution_problem} follows by \cite[Thm. 10.1]{Zw12} since the operator $\Op_\hbar(H(t))$ is selfadjoint, depends smoothly on $t$, and by Calderón-Vaillancourt theorem (see Lemma \ref{l:analytic_calderon_vaillancourt_torus})  is uniformly bounded on $L^2(\mathbb{T}^d)$ for $t \in [0, \epsilon]$. Moreover, if $U_H(t)$ satisfies \eqref{e:evolution_problem}, then $U_H(t)^*$ solves the adjoint evolution problem
\begin{equation}
\label{e:adjoint_evolution_equation}
\left \lbrace \begin{array}{l}
\hbar D_t U_H(t)^* - U_H(t)^*  \Op_\hbar(H(t)) = 0, \\[0.2cm]
 U(0)^* = \operatorname{Id}.
\end{array} \right.
\end{equation}
Then, for any semiclassical operator $\Op_\hbar(a)$ with $a \in \mathcal{A}_s(T^*\mathbb{T}^d)$, we have:
\begin{equation}
\label{e:general_conjugation_equation}
\hbar D_t \big( U_H(t)^* \Op_\hbar(a) U_H(t) \big) = U_H(t)^*[ \Op_\hbar (H), \Op_\hbar(a)]  U_H(t).
\end{equation}
This conjugation equation can be read also at symbol level in the family of analytic spaces \eqref{e:analytic_family}, due to the particularly good behavior of the symbolic calculus on these spaces. Indeed, by Lemma \ref{l:commutator_loss}, if $a,b \in \mathcal{A}_s(T^*\mathbb{T}^d)$, then 
$$
\frac{i}{\hbar} [\Op_h(a), \Op_h(b)]_\hbar :=  \Op_h([a,b]_h)
$$ 
with $[a,b]_\hbar \in \mathcal{A}_{s-\sigma}(T^*\mathbb{T}^d)$ for every $0 < \sigma < s$. Moreover, by Lemma \ref{l:propagator_symbolic_lemma}, for every $0 < \sigma < s$, if $\epsilon \Vert H \Vert_{s,\epsilon}$ is sufficiently small with respect to $\sigma$, then for every $a \in \mathcal{A}_s(T^*\mathbb{T}^d)$, 
$$
U_H(t)^* \Op_\hbar(a) \, U_H(t) = \Op_h( \Psi_t^H(a)),
$$ 
with $\Psi_t^H(a) \in \mathcal{A}_{s-\sigma}(T^*\mathbb{T}^d)$ for every $0 \leq t \leq \epsilon$. This means that, at symbol level, \eqref{e:general_conjugation_equation} reduces to the equation:
\begin{equation}
\label{e:symbolic_differential_equation}
\frac{d}{dt} \Psi_t^H(a) = \Psi_t^H\big( [H(t), a]_\hbar \big).
\end{equation}
Similarly, denoting $U_H(t) \Op_\hbar(a) U_H(t)^* = \Op_\hbar( (\Psi_t^H)^{-1}(a))$, we get:
\begin{equation}
\label{e:inverse_symbol_equation}
\frac{d}{dt} (\Psi_t^H)^{-1}(a) = - [H(t), (\Psi_t^H)^{-1}(a)]_\hbar.
\end{equation}
Notice that, by the symbolic calculus (Weyl quantization) and Egorov's theorem,
$$
[a,b]_\hbar =  \{ a, b \} + O(\hbar^2); \quad \Psi_t^H(a) = \Phi_t^H(a) + O(\hbar),
$$
where $\Phi_t^H$ denotes the (time-dependent) classic flow generated by $H$. Since we work in the analytic framework, we can estimate globally these objects via loss of analyticity (see \cite{Ar20,Pau12} and Appendix \ref{s:tools_of_analytic_calculus}).

We now provide some explicit formulas for the flow $(\Psi_t^{-H})^{-1}$, which will be particularly usefull in next section, regarding the conjugation equation \eqref{e:conjugation_equation} below. Using \eqref{e:inverse_symbol_equation}, we observe that
\begin{equation}
\label{e:key_flow}
\frac{d}{dt} (\Psi_t^{-H})^{-1}(a) = [H , (\Psi_t^{-H})^{-1}(a)]_\hbar.
\end{equation}
This identity allows us, after expanding formally
$$
H(t) = \sum_{n=1}^\infty t^{n-1} H_n, \quad (\Psi_t^{-H})^{-1}(a) = \sum_{n=0}^\infty t^n \psi_n^{-1}(a),
$$
to find $\psi_0^{-1}(a) = 0$ and, for $n \geq 1$, the recursive relation
$$
\psi_n^{-1}(a) = \frac{1}{n} \sum_{j=0}^{n-1} [H_{n-j}, \psi_j^{-1}(a)]_\hbar.
$$
This provides closed formulas for the coefficients $\psi_n^{-1}(a)$ for $n \geq 1$. We precisely have:
\begin{equation}
\label{e:general_coefficient}
\psi_n^{-1}(a) = \sum_{j=1}^n \sum_{k_1+ \cdots + k_j = n} \mathbf{c}_{k_1,\ldots,k_j}[ H_{k_j}, \cdots ,[ H_{k_1}, a]_\hbar \cdots ]_\hbar,
\end{equation}
where
\begin{equation}
\label{e:coefficients}
\mathbf{c}_{k_1,\ldots, k_j} := \frac{1}{k_1 + \cdots + k_j} \cdot \frac{1}{k_1 + \cdots + k_{j-1}} \cdots \frac{1}{k_1}.
\end{equation}

We finally show some elementary combinatorial lemmas regarding the coefficients $\mathbf{c}_{k_1,\ldots, k_j}$ defined above, which will be usefull in the sequel.
\begin{lemma}
\label{l:break_order}
Let $k_1, \ldots, k_j \in \mathbb{N}$. Let $\pi_j$ be the group of permutations of $j$ elements. Then:
$$
\sum_{\sigma \in \pi_j} \mathbf{c}_{\sigma(k_1, \ldots, k_j)} = \frac{1}{k_1} \cdots \frac{1}{k_j}.
$$
\end{lemma}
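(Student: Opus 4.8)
The plan is to argue by induction on $j$, splitting the symmetric group $\mathcal{P}_j$ according to which index gets sent to the last slot. The base case $j=1$ is immediate, since $\mathbf{c}_{k_1} = 1/k_1$ by \eqref{e:coefficients}.

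For the inductive step I would first record that, writing $S := k_1 + \cdots + k_j$, every permutation $\sigma \in \mathcal{P}_j$ contributes
$$
\mathbf{c}_{\sigma(k_1,\ldots,k_j)} = \prod_{i=1}^j \frac{1}{k_{\sigma(1)} + \cdots + k_{\sigma(i)}} = \frac{1}{S}\prod_{i=1}^{j-1} \frac{1}{k_{\sigma(1)} + \cdots + k_{\sigma(i)}},
$$
because the $i=j$ factor always equals $1/S$. Now group the permutations by the value $m = \sigma(j) \in \{1,\ldots,j\}$. For fixed $m$, the restriction of $\sigma$ to $\{1,\ldots,j-1\}$ ranges exactly once over all bijections onto $\{1,\ldots,j\}\setminus\{m\}$, so the remaining product, summed over these $\sigma$, is precisely the left-hand side of the lemma for the $(j-1)$-tuple obtained from $(k_1,\ldots,k_j)$ by deleting $k_m$; by the induction hypothesis it equals $\prod_{i \neq m} k_i^{-1}$. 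Hence
$$
\sum_{\sigma \in \mathcal{P}_j} \mathbf{c}_{\sigma(k_1,\ldots,k_j)} = \frac{1}{S}\sum_{m=1}^j \prod_{i \neq m} \frac{1}{k_i} = \frac{1}{S} \cdot \frac{1}{k_1 \cdots k_j}\sum_{m=1}^j k_m = \frac{1}{k_1 \cdots k_j},
$$
which is the desired identity.

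There is no real obstacle here; the only point needing a word of care is the convention when some of the $k_i$ coincide — the sum must be read as running over the $j!$ elements of $\mathcal{P}_j$ acting on positions (with multiplicity), not over distinct tuples. With that convention the partition of $\mathcal{P}_j$ by the value of $\sigma(j)$ and the reduction to the $(j-1)$-variable statement go through verbatim. (One could equivalently induct by peeling off the \emph{first} slot $\sigma(1)$, but then the telescoping of the denominators is slightly less transparent, so peeling off the last slot is the cleaner route.)
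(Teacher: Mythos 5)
Your proof is correct and follows essentially the same route as the paper: both induct on $j$, factor out the common last factor $1/(k_1+\cdots+k_j)$, group permutations according to which $k_m$ sits in the last slot, apply the induction hypothesis to the remaining $(j-1)$-tuple, and conclude via $\sum_m k_m = S$. The remark about reading the sum over the $j!$ permutations acting on positions (when some $k_i$ coincide) is a fine clarification but matches the paper's implicit convention.
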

\begin{proof}
The proof is an easy induction in $j$. The case $j = 1$ is trivial. Let $j - 1 \geq 1$. For any $i =1, \ldots j$, we write $\mathbf{k}^i = (k_1, \ldots, k_{i-1}, k_{i+1}, \ldots k_j) \in \mathbb{N}^{j-1}$ the vector obtained by removing the term $k_i$. Now we use the induction hypothesis to get:
\begin{align*}
\sum_{\sigma \in \pi_j} \mathbf{c}_{\sigma(k_1, \ldots, k_j)} & = \sum_{i=1}^j \sum_{\sigma' \in \pi_{j-1}} \mathbf{c}_{\sigma'(\mathbf{k}^i),i}  \\[0.2cm]
 & = \frac{1}{k_1 + \cdots + k_j} \sum_{i=1}^j \sum_{\sigma' \in \pi_{j-1}} \mathbf{c}_{\sigma'(\mathbf{k}^i)} \\[0.2cm]
 & = \frac{1}{k_1 + \cdots + k_j} \sum_{i=1}^j \frac{1}{k_1 \cdots k_{i-1} k_{i+1} \cdots k_j} \\[0.2cm]
 & = \frac{1}{k_1 \cdots k _j}.
\end{align*}
\end{proof}

\begin{lemma}
\label{l:ordered_permutations}
Let $k_1, \ldots, k_j \in \mathbb{N}$. Let $r \in \{1, \ldots, j \}$, $I^r_1 = \{1, \ldots, r \}$, and $I^r_2 = \{r+1, \ldots, j \}$. Let $\mathcal{O}(r,j)$ be the subset of permutations $\sigma_* \in \pi_j$ that verifies the following property: for every $j = 1,2$, if $i,i' \in I^r_j$ with $i < i'$, then $\sigma_*(i) < \sigma_*(i')$. Then:
$$
\sum_{\sigma_* \in \mathcal{O}(j,r)} \mathbf{c}_{\sigma_*(k_1, \ldots, k_j)} = \mathbf{c}_{k_1, \ldots, k_r} \mathbf{c}_{k_{r+1}, \ldots, k_j}.
$$
\end{lemma}
\begin{proof}
We proceed by induction. For $j = 1$ the claim is trivial. Assume that the claim holds for $j -1 \geq 1$. The case $r = j$ is also trivial. Assume that $r \leq j-1$. Using the induction hypothesis we get:
\begin{align*}
\sum_{\sigma_* \in \mathcal{O}(j,r)} \mathbf{c}_{\sigma_*(k_1, \ldots, k_j)} & = \sum_{\sigma_* \in \mathcal{O}(j-1,r)} \mathbf{c}_{\sigma_*(k_1, \ldots, k_{j-1}),k_j} + \sum_{\sigma_* \in \mathcal{O}(j-1,r-1)} \mathbf{c}_{\sigma_*( \mathbf{k}^r),k_r} \\[0.2cm]
 & = \frac{1}{k_1 + \cdots + k_j} \Big( \sum_{\sigma_* \in \mathcal{O}(j-1,r)} \mathbf{c}_{\sigma_*(k_1, \ldots, k_{j-1})} + \sum_{\sigma_* \in \mathcal{O}(j-1,r-1)} \mathbf{c}_{\sigma_*( \mathbf{k}^r)} \Big) \\[0.2cm]
 & = \frac{1}{k_1 + \cdots + k_j}  \Big( \mathbf{c}_{k_1, \ldots, k_r} \mathbf{c}_{k_{r+1}, \ldots, k_{j-1}} + \mathbf{c}_{k_1, \ldots, k_{r-1}} \mathbf{c}_{k_{r+1}, \ldots, k_j} \Big) \\[0.2cm]
 & = \mathbf{c}_{k_1, \ldots, k_r} \mathbf{c}_{k_{r+1}, \ldots, k_j}.
\end{align*}
\end{proof}

\begin{lemma}
\label{l:jacobi_coefficients}
Let $l_1, l^0_1, \ldots, l^0_i \in \mathbb{N}$. Then:
$$
\frac{1}{l_1} \mathbf{c}_{l_1^0, \ldots, l_i^0} = \mathbf{c}_{l_1^0, \ldots, l_i^0, l_1} + \frac{1}{l_1} \mathbf{c}_{l_1^0, \ldots, l_i^0 + l_1}.
$$
\end{lemma}
\begin{proof}
The proof follows inmediately from \eqref{e:coefficients}.
\end{proof}
\begin{lemma}
\label{l:transposition_lemma}
Let $l_1, \ldots, l_s, l^0_1, \ldots, l^0_i \in \mathbb{N}$. Then:
\begin{align*}
\mathbf{c}_{l_1, \ldots, l_s} \mathbf{c}_{l^0_1, \ldots, l^0_i} & = \mathbf{c}_{l^0_1, \ldots, l^0_i, l_1, \ldots, l_s} + \mathbf{c}_{l_1} \mathbf{c}_{l_1^0, \ldots, l^0_i + l_1, l_2, \ldots, l_s} \\[0.2cm]
 & \quad + \cdots  + \mathbf{c}_{l_1, \ldots, l_s} \mathbf{c}_{l^0_1, \ldots, l^0_i + l_1 + \cdots + l_s}.
\end{align*}
\end{lemma}
\begin{proof}
The case $s = 1$ follows by Lemma \ref{l:jacobi_coefficients}. For the general case, notice that:
\begin{align*}
\mathbf{c}_{l_1, \ldots, l_s} \mathbf{c}_{l^0_1, \ldots, l^0_i} - \mathbf{c}_{l_1, \ldots, l_s} \mathbf{c}_{l^0_1, \ldots, l^0_i + l_1 + \cdots + l_s} & \\[0.2cm]
 & \hspace*{-3cm}  = \mathbf{c}_{l_1, \ldots, l_s} \mathbf{c}_{l^0_1, \ldots, l^0_{i-1}} \left( \frac{1}{l^0_1 + \cdots + l^0_i} - \frac{1}{l^0_1+ \cdots + l^0_i + l_1 + \cdots + l_s}\right) \\[0.2cm]
 & \hspace*{-3cm} =  \mathbf{c}_{l_1, \ldots, l_{s-1}}  \mathbf{c}_{l^0_1, \ldots, l^0_{i}, l_1 + \cdots + l_s}.
\end{align*}
Similarly,
\begin{align*}
\mathbf{c}_{l_1, \ldots, l_{s-1}}  \mathbf{c}_{l^0_1, \ldots, l^0_{i},l_1 + \cdots + l_s} -  \mathbf{c}_{l_1, \ldots, l_{s-1}} \mathbf{c}_{l^0_1, \ldots, l^0_i + l_1 + \cdots + l_{s-1}, l_s} & \\[0.2cm]
 & \hspace*{-7.5cm} =  \frac{\mathbf{c}_{l_1, \ldots, l_{s-1}} \mathbf{c}_{l^0_1, \ldots, l^0_{i-1}}}{l^0_1+ \cdots + l^0_i + l_1 + \cdots + l_s} \left( \frac{1}{l^0_1 + \cdots + l^0_i} - \frac{1}{l^0_1 + \cdots + l^0_i + l_1 + \cdots + l_{s-1}} \right) \\[0.2cm]
 & \hspace*{-7.5cm} =  \mathbf{c}_{l_1, \ldots, l_{s-2} } \mathbf{c}_{l^0_1, \ldots, l^0_{i},l_1 + \cdots + l_{s-1},l_s} .
\end{align*}
Iterating this process we obtain the claim. Notice that in the last iteration we find
$$
 \mathbf{c}_{l^0_1, \ldots, l^0_i,l_1, \ldots, l_s} - \mathbf{c}_{l^0_1, \ldots, l^0_i, l_1, \ldots, l_s} = 0.
$$
\end{proof}

\subsection{Formal conjugation and cohomological equations}
In this section, we adress the following conjugation problem. Let $V \in \mathcal{A}_{s_0}(T^*\mathbb{T}^d)$, we consider the non-linear equation:
\begin{equation}
\label{e:main_equation_for_lindstedt}
\frac{i}{\hbar} [ \widehat{L}_{\omega,\hbar}, \Op_\hbar(H(t))] = U_{-H}(t) \Op_\hbar(  V - R'(t)) U_{-H}(t)^*, \quad t \in [0, \epsilon],
\end{equation}
with unknowns $H(t)$ and $R'(t)$. Our goal is to solve this equation for $H \in \mathcal{A}_{s,\epsilon}(T^*\mathbb{T}^d)$ and $R' \in \mathcal{A}_{s,\epsilon}(\R^d)$,  uniformly in $\hbar \in (0,1]$, for certain $0 < s < s_0$ and $\epsilon > 0$ sufficiently small. Notice that, since $\mathcal{L}_\omega$ is a linear symbol, one has the exact commutation relation $[\mathcal{L}_\omega, H(t) ]_\hbar = \{\mathcal{L}_\omega , H(t)\}$; then equation \eqref{e:main_equation_for_lindstedt}  at symbol level reads:
\begin{equation}
\label{e:conjugation_equation}
\{\mathcal{L}_\omega, H(t) \} = (\Psi_t^{-H})^{-1}\big( V - R'(t) \big).
\end{equation}
We next expand formally $H(t) = \sum_{n=1}^\infty t^{n-1} H_n$ and $R'(t) = \sum_{n=1}^\infty t^{n-1} R'_n$ in powers of $t$ and find the first cohomological equations by identifying the terms of same order in $t$:
\begin{align}
\label{e:first_cohomological_equation}
\{ \mathcal{L}_\omega, H_1 \} & = V - R'_1, \\[0.2cm]
\{ \mathcal{L}_\omega, H_2 \} & = [H_1, V - R'_1]_\hbar - R'_2, \\[0.2cm]
\{ \mathcal{L}_\omega, H_3 \} & = \frac{1}{2}[H_1, [H_1, V-R'_1]_\hbar]_\hbar + \frac{1}{2}[H_2, V-R'_1]_\hbar - [ H_1, R'_2]_\hbar - R'_3,
\end{align}
and more generally, we get the $n$-th cohomological equation:
\begin{align}
\label{e:cohomological_general_equation}
\{ \mathcal{L}_\omega , H_n \} & = \sum_{j=1}^{n-1} \sum_{k_1 + \cdots + k_j = n} \mathbf{c}_{k_1,\ldots, k_j} [H_{k_j}, \ldots, [ H_{k_1}, V - R'_1]_\hbar \cdots ]_\hbar  \\[0.2cm]
 & \quad - \sum_{m = 2}^{n-1} \sum_{j=1}^{n-m} \sum_{l_1 + \cdots + l_j = n-m} \mathbf{c}_{l_1, \ldots, l_j} [H_{l_j}, \ldots, [H_{l_1}, R'_m]_\hbar \cdots]_\hbar - R'_n \notag,
\end{align}
where the coefficients $\mathbf{c}_{k_1,\ldots,k_j}$ have been defined in \eqref{e:coefficients}.

Each of these cohomological equations can be solved using Lemma \ref{l:solution_cohomological_equation}. Moreover, we will see that the formal solution of $H_n$ and $R'_n$ can be written by recursive formulas in terms of a diagrammatic tree structure. Before going beyond, we revisit the theory of diagrammatic trees introduced by Eliasson \cite{Elia89,Eliasson96}.

\subsection{Index sets and tree structures} Let $n \in \mathbb{N}$, we consider the set $\Delta(n)$ of mappings $\delta : \{ 1, \ldots, n \} \to \mathbb{N}$ such that, denoting $\delta(i) = \delta_i$,
$$
\sum_{j \leq i \leq n} \delta_i \geq n - j + 1, \quad \text{if } 1 < j \leq n; \quad \sum_{1 \leq i \leq n} \delta_i = n -1.
$$
For notational purposes, in the sequel we identify $\delta$ with the vector $(\delta_1,\ldots, \delta_n)$. Given any subset $A \subset \mathbb{N}$ with $\# A = n$, we can define anologously $\delta: A \to \mathbb{N}$. 

\begin{lemma}
\label{l:counting_trees}
$\# \Delta(n) \leq 4^n$.
\end{lemma}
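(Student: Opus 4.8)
The plan is to bound $\#\Delta(n)$ by encoding each $\delta \in \Delta(n)$ as a lattice path and then counting paths. Recall $\delta : \{1,\dots,n\} \to \mathbb{N}$ lies in $\Delta(n)$ iff the partial sums from the right satisfy $\sum_{j \le i \le n} \delta_i \ge n-j+1$ for $1 < j \le n$, with total sum $n-1$. First I would rewrite these constraints in terms of the reversed partial sums $S_j := \sum_{i=j}^n \delta_i$, so that $S_j \ge n-j+1$ for $2 \le j \le n$ and $S_1 = n-1$; equivalently, reading indices from $n$ down to $1$, the running sum of the $\delta_i$ after including $\delta_i,\dots,\delta_n$ must not fall below a prescribed linear barrier. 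This is a ballot-type/Catalan-type constraint, and the number of such sequences is classically bounded by a central binomial coefficient.

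The cleanest route is a crude injection rather than an exact count: I would map $\delta \mapsto$ the binary word obtained by writing, for each $i = n, n-1, \dots, 1$ in turn, a block of $\delta_i$ symbols ``$U$'' (up-steps) followed by one symbol ``$D$'' (down-step). Since $\sum_i \delta_i = n-1$ and there are $n$ down-steps, this produces a word of length $(n-1) + n = 2n-1$ with exactly $n-1$ letters $U$. The tree/ordering constraints in the definition of $\Delta(n)$ translate precisely into the statement that every proper suffix of this word has at least as many $D$'s as $U$'s, i.e. the word is a (shifted) Dyck-type path; in particular $\delta$ is recovered uniquely from the word, so the map is injective. Hence $\#\Delta(n)$ is at most the number of binary words of length $2n-1$ with $n-1$ ones, which is $\binom{2n-1}{n-1} \le 2^{2n-1} < 4^n$.

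If one prefers to avoid the precise bijective bookkeeping, an even more wasteful estimate suffices for the stated inequality: each $\delta \in \Delta(n)$ is determined by the vector $(\delta_1,\dots,\delta_n) \in \mathbb{N}^n$ with $\sum \delta_i = n-1$, and the number of such compositions (weak compositions of $n-1$ into $n$ parts) is $\binom{2n-2}{n-1} \le 4^{n-1} \le 4^n$. This already gives the claim directly from the single equality constraint, without even using the inequality constraints. I would present this second argument as the main proof since it is self-contained and immediate, and mention the path encoding only as a remark giving the sharper Catalan-number bound. The only ``obstacle'' is purely bookkeeping: making sure the stars-and-bars count $\binom{2n-2}{n-1}$ is applied with the right number of parts (the map $\delta : \{1,\dots,n\} \to \mathbb{N}$ has $n$ free nonnegative entries constrained to sum to $n-1$) and that the elementary bound $\binom{m}{\lfloor m/2\rfloor} \le 2^m$ is invoked correctly — there is no real difficulty here.
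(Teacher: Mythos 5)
Your main argument is correct and is essentially the paper's own proof: both simply discard the inequality (tree) constraints and bound $\#\Delta(n)$ by the number of weak compositions, the paper via an inductive estimate $N(n,j)\leq 2^{n+j}$ and you via the exact stars-and-bars count $\binom{2n-2}{n-1}\leq 4^{n-1}\leq 4^n$. The Dyck-path remark is a nice sharper aside but not needed.
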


\begin{proof}
The proof is based on the simple combinatorial estimate:
$$
\# \Delta(n) \leq N(n,n),
$$ 
where
$$
N(n,j) = \#\{ (k_1, \ldots, k_j) \in \mathbb{N}_0^j \, : \, k_1 + \cdots + k_j = n \}.
$$
Indeed, the more general estimate $N(n,j) \leq 2^{n+j}$ follows by an easy induction on $j$. 
\end{proof}

\begin{remark}
Actually $\# \Delta(n)$ is given precisely by the Catalan number 
$$
\Delta(n) = \frac{1}{n} {2n-2 \choose n-1 },
$$
see \cite[Chpt. 2.7]{Goulden_Jackson}.
\end{remark}

\begin{definition}
A simple index set is a finite subset $A \subset \mathbb{Z}$ together with a map $\delta \in \Delta(n)$, $n = \#A$. An index set is a disjoint union of simple index sets.
\end{definition}

Simple index sets are in one-to-one correspence with \textit{rooted planar trees}. To show this correspondence, we first recall the notion of tree structure on $A \subset \mathbb{N}$. A tree structure $\mathcal{T} = (A,\prec)$ is defined by a partial ordering $\prec$  such that $A$ has a unique maximal element and such that $\{ d : c \preceq d \}$ is totally ordered for each $c \in A$ (see Figure \ref{f:figure_1}). Let us denote by $\mathfrak{T}$ the family of tree structures. Two points $c,d$ in such a tree are said to be unrelated if neither $c \preceq d$ nor $d \preceq c$. $d$ is said to be a predecessor of $c$, and $c$ a succesor of $d$, if $d \prec c$, and they are said to be immediate if for no $e$ it holds that $d \prec e \prec c$. We denote $A(c) = \{ e \in A \, : \, e \preceq c \}$ and
$$
A(B) = \bigcup_{c \in B } A(c).
$$
We also set:
\begin{align*}
[a,b] & := \{ c \in A \, : \, a \preceq c \preceq b \},
\end{align*}
and similarly we define $[a,b[$, $]a,b]$, and $]a,b[$.

\begin{center}
\begin{figure}[h]
\includegraphics[scale=0.5]{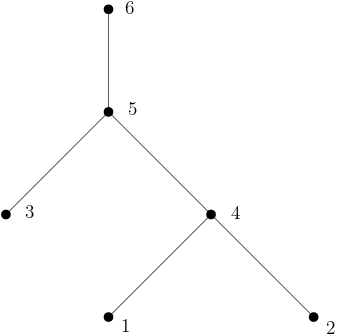}
\caption{Tree diagram with $n = 6$ and $\delta = (0,0,0,2,2,1 )$. We label the tree from the top to the bottom and from the right to the left.}
\label{f:figure_1}
\end{figure}
\end{center} 

We say that two tree structures $\mathcal{T} = (A,\prec)$ and $\mathcal{T}' = (A',\prec')$ define the same rooted planar tree, and denote it by $\mathcal{T} \sim \mathcal{T}'$, if there exists a bijection $\beta : A \to A'$ such that
$$
a \prec b \Leftrightarrow \beta(a) \prec' \beta(b),
$$
and moreover, for any $a \in A$, let  $a_1 < \cdots < a_r$ be the immediate predecessors of $a$, then $\beta(a_1) < \cdots < \beta(a_r)$ are the immediate predecessors of $\beta(a)$. We call $\mathfrak{T}' := \mathfrak{T}/\sim$ the family of rooted planar trees and denote by $[\mathcal{T}]$ the equivalence class of the tree structure $\mathcal{T}$.

\begin{definition}
Let $n \geq 1$ and let $\mathcal{T} = (A,\prec)$ be a tree structure with $n = \# A$. We say that two points $a,b \in A$ such that $a \preceq b$ are at distance $l$ if the set $[a,b[$ has exactly $l$ elements (in particular $a \in A$ is at distance zero from itself). We define the \textit{diameter} $\mathbf{d}(\mathcal{T})$ as:
$$
\mathbf{d}(\mathcal{T}) := \max \{ \# [a,b[  \, : \, a \prec b, \quad a,b \in A \}.
$$
In particular, if $n = 1$, then $\mathbf{d}(\mathcal{T}) = 0$. In other words, $\mathbf{d}(\mathcal{T})$ is the maximal distance from the root to another point of the tree.
\end{definition}

Let $A \subset \mathbb{N}$ with $\# A = n \in \mathbb{N}$. Given a rooted planar tree $[\mathcal{T}]$, we can label it from the root to its predecessors and from the right to the left (see Figure \ref{f:figure_1}). Let us assume that $\mathcal{T} = (A,\prec)$ is a tree structure representing the rooted planar tree $[\mathcal{T}]$. For any $0 \leq l \leq \mathbf{d}(\mathcal{T})$, let $(a_1, \ldots, a_{i_l})$ be the subset of points of $A$ that are at distance $l$ from the root, ordered so that $a_1 < \cdots < a_{i_l}$. Set
$\upsilon_l(\mathcal{T}) := (\upsilon^l_1, \ldots, \upsilon^l_{i_l}) \in \mathbb{N}^{i_l}$ where 
$$
\upsilon^l_j = \# \{ \text{immediate predecessors of }a_j\text{ in }\mathcal{T} \}, \quad j = 1, \ldots, i_l.
$$ 
Here, we have $i_0 = 1$ and:
$$
i_l = \sum_{j=1}^{i_{l-1}} \upsilon^{l-1}_j, \quad 1 \leq l \leq \mathbf{d}(\mathcal{T}).
$$ 
With $\mathcal{T} = (A, \prec)$ we associate the vector:
\begin{equation}
\label{e:labelling}
\delta(\mathcal{T}) = \upsilon_{\mathbf{d}}(\mathcal{T}) \times \cdots \times \upsilon_0(\mathcal{T}),
\end{equation}
where $\mathbf{d} = \mathbf{d}(\mathcal{T})$, and the product of two vectors is given by
$$
(a_1, \ldots, a_{l_1}) \times (b_1, \ldots b_{l_2}) = (a_1, \ldots, a_{l_1}, b_1, \ldots, b_{l_2}).
$$
In particular, $\upsilon^0_1 = \delta_n = i_1$. Notice that if $\mathcal{T} \sim \mathcal{T}'$ then $\delta(\mathcal{T}) = \delta(\mathcal{T}')$.

\begin{lemma}
\label{l:trees_and_oreders}
For any simple index set $(A,\delta)$ there is a unique rooted planar tree $[\mathcal{T}]$ such that \eqref{e:labelling} holds. Conversely, let $\mathcal{T} = (A,\prec)$ be a tree structure, then $\delta$ given by \eqref{e:labelling} defines a simple index set.
\end{lemma}

A proof of Lemma \ref{l:trees_and_oreders} is included in Appendix \ref{a:trees_and_index}. In the sequel, we denote $\upsilon_l(\delta) := \upsilon_l(\mathcal{T})$ and $\mathbf{d}(\delta ) := \mathbf{d}(\mathcal{T})$ where $[\mathcal{T}]$ is the rooted planar tree associated with $\delta$. Notice that if $\delta : A \to \mathbb{N}$ belongs to $\Delta(n)$, the identification $\delta \equiv (\delta_a)_{a\in A}$ allows us to modify the set $A$ without changing the tree defined by $\delta$, just by relabelling its nodes. We will only consider labellings $A$ ordered from the top to the botton and from the right to left, as described by \eqref{e:labelling} (see Figure \ref{f:figure_1}).

Let $(A,\delta)$ be a simple index set and let $B \subset A$ have a unique maximal element. Then the induced ordering on $B$ defines a rooted planar tree, hence corresponds to a $\delta_B \in \Delta(l)$, $l = \# B$. This $\delta_B$ will be denoted by $\delta/B$. Notice that, in general, $\delta/ B$ is not the restriction of $\delta$ (considered as a mapping $\delta: A \to \mathbb{N}$) to the subset $B$ (see Figure \ref{f:tree_with_examples}). Moreover, we define $\upsilon_l(\delta/B)$ with respect to the distance $l$ from the root of $B$ (and not from the root of $A$).  Let $(\delta,A)$ be a simple index set and $a \in A$. We call $A \setminus \{a\} = A_1 \cup \cdots \cup A_r$ the \textbf{natural decomposition of $A \setminus \{a\}$ into simple index sets} if $A_\iota = A(a_\iota)$ with $a_\iota$ an immediate predecessor of $a$ (for the tree $\mathcal{T}$ given by \eqref{e:labelling}) for each $\iota \in \{1,\ldots,r \}$ and if $\iota < \jmath$, then $a_\iota > a_\jmath$. 

\begin{definition}
Let $\delta^j \in \Delta(n_j)$ with $n_j \in \mathbb{N}$ for $j = 1,2$. Let $\delta^j : A^j \to \mathbb{Z}^d$, and let $a \in A^2$ be at distance $l(a)$ of the root, and $A(a) \setminus \{a \} = A^2_1 \cup \cdots \cup A^2_r$ be the natural decomposition of $A(a) \setminus \{a \} \subset A^2$ into simple index sets. We define the tree $\delta = \delta^1 \triangleleft_\iota^a \delta^2$ (see Figure \ref{f:connections}) as the tree resulting of connecting $\delta^1$ with $\delta^2$ through the node $a$ of $\delta^2$ at position $0 \leq \iota \leq r$. That is, assuming that $A^1 \cap A^2 = \emptyset$ (changing $A^2$ if necessary), $\delta : A^1 \cup A^2 \to \mathbb{N}$ is the simple index set satisfying $\delta/ A^j = \delta^j$,
and
$$
\upsilon_l(\delta/A(a)\setminus \{a\}) = \upsilon_{l}(\delta^2/A^2_1) \times \cdots \times \upsilon_{l}(\delta^2/A^2_\iota) \times \upsilon_{l}(\delta^1) \times \upsilon_{l}(\delta^2/A^2_{\iota+1}) \times \cdots \times \upsilon_{l}(\delta^2/A^2_r),
$$ 
for $0 \leq l \leq \max \{ \mathbf{d}(\delta^1), \mathbf{d}(\delta^2/A(a)\setminus \{a \}) \}$, where $\upsilon_l(\delta') = \emptyset$ if $l \notin \{0, \ldots, \mathbf{d}(\delta') \}$ for any $\delta' \in \Delta(n')$.

More generally, let $\delta^i \in \Delta(n_i)$ such that $\delta^i : A^i \to \mathbb{N}$ is a simple index set for $i \in \{1, \ldots, j \}$. We define the simple index set $\delta : A^1 \cup \cdots \cup A^j \to \mathbb{N}$ (changing the $A^i$ if necessary to respect the labelling \eqref{e:labelling}) resulting of connecting $\delta^i$ with $\delta^j$ (for $i \in \{ 1, \ldots, j -1 \}$) through the node $a_i \in A^j$ at position $\iota_i$. That is: 
$$
\delta = \delta^1 \triangleleft_{\iota_1}^{a_1} \delta^2 \triangleleft_{\iota_2}^{a_2} \cdots \triangleleft_{\iota_{j-2}}^{a_{j-2}} \delta^{j-1} \triangleleft_{\iota_{j-1}}^{a_{j-1}} \delta^j :=   \delta^1 \triangleleft_{\iota_1}^{a_1} ( \delta^2 \triangleleft_{\iota_2}^{a_2} \cdots \triangleleft_{\iota_{j-2}}^{a_{j-2}} ( \delta^{j-1} \triangleleft_{\iota_{j-1}}^{a_{j-1}} \delta^j) \cdots).
$$

\end{definition}

\begin{center}
\begin{figure}[h]
\includegraphics[scale=0.5]{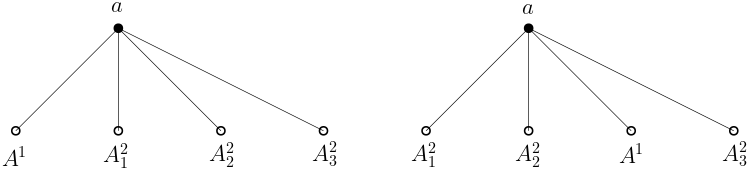}
\caption{Let $\delta^1 : A^1 \to \mathbb{Z}^d$ and $\delta^2 : A^2 \to \mathbb{Z}^d$ two simple index sets. Let $A(a) \setminus \{a\} = A^2_1 \cup A^2_2 \cup A^2_3$. On the left, we represent $\delta^1 \triangleleft_0^a \delta^2$. On the right, we represent $\delta^1 \triangleleft_2^a \delta^2$. We assume without loss of generality that $A^1 \cap A^2 = \emptyset$ and that the tree induced by $(\delta, A_1 \cup A_2)$ is labelled according to \eqref{e:labelling}. }
\label{f:connections}
\end{figure}
\end{center}


\subsection{Semiclassical pseudodifferential calculus} 

To describe properly the solution to \eqref{e:cohomological_general_equation}, we first need to introduce some semiclassical calculus.  We start by recalling the following formula for the commutator of two symbols using the notations of Appendix \ref{s:tools_of_analytic_calculus}. 
Let $H, a \in \mathcal{A}_s(T^* \mathbb{T}^d)$, by \eqref{e:symbol_commutator} we have
\begin{equation}
\label{e:def_symbol_commutator}
[H,a]_\hbar(z) = \frac{2}{\hbar} \int_{\mathcal{Z}^{d} \times \mathcal{Z}^{d}} \mathcal{F}H(w_1) \mathcal{F}a(w) \sin \left( \frac{\hbar}{2} \{ w_1, w \} \right) \frac{e^{i(w+w_1)\cdot z}}{(2\pi)^{2d}} \kappa(dw_1) \, \kappa(dw),
\end{equation}
where $\{w,w' \} = \eta \cdot k' - k \cdot \eta$ is the standard symplectic product, we denote $w = (k,\eta)$ and $w' = (k',\eta')$. We set, for $w,w_1 \in \mathcal{Z}^d := \mathbb{Z}^d \times \R^d$:
\begin{align}
\label{e:sigma_definition}
\sigma^1_\hbar(w,w_1) & := \frac{2}{\hbar} \sin \left( \frac{\hbar}{2} \{ w , w_1 \} \right).
\end{align}
We next generalize this expression and write succesive commutators of symbols, as those appearing in \eqref{e:general_coefficient}, in terms of Fourier multipliers using the space $\mathcal{Z}^{d}$. We define by recursive formula, for $j \geq 2$, $w, w_1, \ldots, w_j \in \mathcal{Z}^{d}$:
\begin{align}
\label{e:multilinear_map}
\sigma_\hbar^j(w,w_1,\ldots, w_j) & := \sigma_\hbar^{j-1}(w,w_1,\ldots, w_{j-1}) \sigma_\hbar^1(w+w_1+ \cdots + w_{j-1},w_j).
\end{align}
\begin{lemma}
\label{l:sequences_of_commutators}
The following holds:
\begin{align*}
[H_j, \ldots, [H_1,a]_\hbar \cdots ]_\hbar &  \\[0.2cm]
 & \hspace*{-2cm} = \int_{(\mathcal{Z}^{d})^{j+1}} \mathcal{F}a(w) \mathcal{F}H_1(w_1) \cdots \mathcal{F}H_j(w_j) \sigma_\hbar^j(\mathbf{w})e^{i(w + w_1 + \cdots + w_j) \cdot z} \kappa(d \mathbf{w}),
\end{align*}
where $\mathbf{w} = (w,w_1,\ldots,w_j)$, and $\kappa(\mathbf{w}) = \kappa(w) \kappa(w_1) \cdots \kappa(w_j)$.
\end{lemma}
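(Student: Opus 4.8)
The plan is to prove the formula by induction on $j$, using the recursive definition \eqref{e:multilinear_map} of $\sigma_\hbar^j$ together with the basic commutator formula \eqref{e:def_symbol_commutator}. The base case $j=1$ is precisely \eqref{e:def_symbol_commutator}, after identifying $\sigma_\hbar^1(w,w_1) = \frac{2}{\hbar}\sin(\frac{\hbar}{2}\{w,w_1\})$ and relabeling the dummy integration variables so that the Fourier variable of $a$ is $w$ and that of $H_1$ is $w_1$ (note the commutator \eqref{e:def_symbol_commutator} is written with $\mathscr{F}H(w_1)\mathscr{F}a(w)$, and $\sin(\frac{\hbar}{2}\{w_1,w\}) = -\sin(\frac{\hbar}{2}\{w,w_1\})$, so one must be careful about the sign convention; since $[H_1,a]_\hbar = -[a,H_1]_\hbar$, the signs match once one writes the formula for $[H_1,a]_\hbar$ rather than $[a,H_1]_\hbar$).

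For the inductive step, I would assume the formula holds for $j-1$, i.e. that $b := [H_{j-1},\ldots,[H_1,a]_\hbar\cdots]_\hbar$ has Fourier transform
\[
\mathscr{F}b(v) = \int_{(\mathcal{Z}^{2d})^{j}} \mathscr{F}a(w)\mathscr{F}H_1(w_1)\cdots\mathscr{F}H_{j-1}(w_{j-1})\,\sigma_\hbar^{j-1}(w,w_1,\ldots,w_{j-1})\,\delta(v - (w+w_1+\cdots+w_{j-1}))\,\kappa(d\mathbf{w}'),
\]
or more precisely that $b(z)$ equals the stated integral (with $j$ replaced by $j-1$). Then I apply the single commutator formula \eqref{e:def_symbol_commutator} to $[H_j, b]_\hbar$: this inserts a factor $\mathscr{F}H_j(w_j)$, a factor $\sigma_\hbar^1(v, w_j) = \frac{2}{\hbar}\sin(\frac{\hbar}{2}\{v,w_j\})$ where $v = w + w_1 + \cdots + w_{j-1}$ is the total frequency carried by $b$, an extra integration $\kappa(dw_j)$, and replaces the exponential $e^{iv\cdot z}$ by $e^{i(v+w_j)\cdot z}$. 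Substituting $v = w+w_1+\cdots+w_{j-1}$, the new weight becomes $\sigma_\hbar^{j-1}(w,w_1,\ldots,w_{j-1})\cdot\sigma_\hbar^1(w+w_1+\cdots+w_{j-1},w_j)$, which by \eqref{e:multilinear_map} is exactly $\sigma_\hbar^j(w,w_1,\ldots,w_j)$, and the total exponential is $e^{i(w+w_1+\cdots+w_j)\cdot z}$. This is precisely the claimed identity for $j$.

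The main technical point — and the only place one should be slightly careful — is the legitimacy of the manipulations with the Fourier-multiplier representation: one needs that $b = [H_{j-1},\ldots,[H_1,a]_\hbar\cdots]_\hbar$ again lies in a space $\mathcal{A}_{s'}(T^*\mathbb{T}^d)$ to which \eqref{e:def_symbol_commutator} applies, and that the iterated integrals converge absolutely so that Fubini's theorem permits reordering. Both follow from the commutator estimate in Lemma \ref{l:commutator_loss} (which guarantees $[H_{j-1},b]_\hbar \in \mathcal{A}_{s'-\sigma}$ whenever $H_{j-1},b\in\mathcal{A}_{s'}$, hence the iterated commutators stay in analytic spaces with a small loss of radius at each step) together with the elementary bound $|\sigma_\hbar^1(w,w_1)| \le |\{w,w_1\}|$, which is uniform in $\hbar\in(0,1]$ and costs at most polynomial growth in $|w|,|w_1|$, absorbed by the exponential weights $e^{s(|k|+|\eta|)}$. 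Granting this, the argument is a clean induction with no further obstacles.
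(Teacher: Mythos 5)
Your proof is correct and takes essentially the same route as the paper: the paper's own argument is exactly the one-line remark that the case $j=1$ is \eqref{e:def_symbol_commutator} and that the general case follows by applying \eqref{e:def_symbol_commutator} and \eqref{e:multilinear_map} repeatedly, which is the induction you spell out. Your extra care about absolute convergence (via Lemma \ref{l:commutator_loss} and the bound $\vert \sigma_\hbar^1(w,w_1)\vert \leq \vert \{w,w_1\}\vert$) and about the ordering of the arguments of $\sigma_\hbar^1$ only makes explicit what the paper leaves implicit.
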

\begin{proof}
The case $j=1$ has been already shown. To show the general case, we use repeteadly formulas \eqref{e:def_symbol_commutator} and \eqref{e:multilinear_map}. 
\end{proof}



Finally, we introduce the following generalization of the above Fourier multipliers in terms of the diagrammatic trees $\delta \in \Delta(n)$:
 \begin{definition}
Let $n \geq 2$, $\mathbf{w} = (w, w_1, \ldots, w_{n-1}) \in (\mathcal{Z}^{d})^{n}$ and $\delta \in \Delta(n)$. We define:
 \begin{align}
 \label{e:graph_commutator}
 \sigma_\hbar(\mathbf{w}, \delta)  = \sigma_\hbar^{r}(w, \Sigma_1(\mathbf{w}), \ldots, \Sigma_r(\mathbf{w})) \sigma_\hbar(\mathbf{w}/A_1,\delta/A_1) \cdots   \sigma_\hbar(\mathbf{w}/A_r,\delta/A_r),
 \end{align}
 where $\Sigma_l(\mathbf{w}) = \sum_{j \in A_l} w_j$ for $1 \leq  l \leq r$, $A \setminus \{ n \} = A_1 \cup \cdots \cup A_r$ is the natural decomposition of $A \setminus \{n \}$ into simple index sets, and we set $\sigma_\hbar(w,(0)) = 1$.
\end{definition}

\subsection{The Lindstedt series}

We are now in position to describe the tree structure giving the Lindstedt series for the problem \eqref{e:conjugation_equation}. The main result of this section is Theorem \ref{t:main_lindstedt_teor}.

\begin{definition}
We define coefficients $\mathbf{c}((0)) = 1$ and, for any $\delta \in \Delta(n)$ with $n \geq 2$:
$$
\mathbf{c}(\delta) := \mathbf{c}_{k_1, \ldots, k_r} \mathbf{c}(\delta/A_1) \cdots \mathbf{c}(\delta/A_r),
$$
where $k_l = \# A_l$ for $1 \leq l \leq r$ and $A \setminus \{n \}  = A_1 \cup \cdots \cup A_r$ is the natural decomponsition of $A \setminus \{ n \}$ into simple index sets.
\end{definition}

By solving recursively the cohomological equation \eqref{e:cohomological_general_equation}, we obtain the Lindstedt series for the formal solutions $H(t)$ and $R'(t)$ of \eqref{e:conjugation_equation}:
\begin{teor}
\label{t:main_lindstedt_teor} For every $n \geq 1$, the solution to the cohomological equation \eqref{e:cohomological_general_equation} is given by:
\begin{align}
H_{n}(x,\xi) & = \sum_{ \substack{ v \in (\mathbb{Z}^d)^n \\[0.1cm] \delta \in \Delta(n)}}  \int_{(\R^d)^n} \Omega_1(\delta,v) \widehat{\mathcal{F}}_\hbar(\delta,v,\eta) e^{i(v_1 + \cdots + v_n) \cdot x} e^{i (\eta_1 + \cdots + \eta_n) \cdot \xi} d\eta, \\[0.2cm]
R'_n(\xi) & = \sum_{ \substack{ v \in (\mathbb{Z}^d)^n \\[0.1cm] \delta \in \Delta(n)}} \int_{(\R^d)^n} \Omega_2(\delta,v) \widehat{\mathcal{F}}_\hbar(\delta,v,\eta) e^{i(\eta_1 + \cdots + \eta_n) \cdot \xi} d\eta,
\end{align}
where $v = (v_1, \ldots, v_n) \in (\mathbb{Z}^d)^n$, $\eta = (\eta_1, \ldots , \eta_n) \in (\mathbb{R}^d)^n$, and
\begin{equation}
\label{e:tree_derivative}
\widehat{\mathcal{F}}_\hbar(\delta,v,\eta) = \mathbf{c}(\delta)  \widehat{V}(w_1) \cdots \widehat{V}(w_n) \sigma_\hbar(\mathbf{w},\delta).
\end{equation}
Moreover, the coefficients $\Omega_1(\delta,v)$ are given by recursive formula described as follows. For any $v \in (\mathbb{Z}^d)^n$, denote $\Sigma(v) := v_1 + \cdots + v_n$; then, for $n = 1$, we have: $\Omega(\delta,v) = 0$, 
$$
\Omega_1(\delta,v) := \left \lbrace \begin{array}{ll} 0, & \text{if } v = 0; \\[0.2cm]
\displaystyle \frac{1}{i v \cdot \omega}, & \text{if } v \neq 0,
\end{array} \right. \quad \quad \Omega_2(\delta,v) := \left \lbrace \begin{array}{ll} 0, & \text{if } v \neq 0; \\[0.2cm]
\displaystyle 1, & \text{if } v = 0.
\end{array} \right.
$$ 
While for $n \geq 2$, the coefficients $\Omega_1$ and $\Omega_2$ satisfy:
\begin{align}
\label{e:Omega_1}
\Omega_1(\delta,v)  & = \left \lbrace \begin{array}{ll} \displaystyle \frac{1}{i \Sigma(v) \cdot \omega}    \big( \Omega_1(\delta,v/A_1) \cdots \Omega_1(\delta,v/A_r) - \Omega(\delta,v) \big), &  \Sigma(v) \neq 0, \\[0.4cm]
  0, &  \Sigma(v) = 0;
 \end{array} \right.
\end{align}
where $A \setminus \{ n \} = A_1 \cup \cdots \cup A_r$ is the natural decomposition into simple index sets, 
\begin{equation}
\label{e:definition_omega}
\Omega(\delta,v) = \sum_{B \in \Gamma_1(\delta)} \Omega_2(\delta,v/A\setminus A(B)) \Omega_1(\delta,v/A(b_1)) \cdots \Omega_1(\delta,v/A(b_s));
\end{equation}
where $\Gamma_1(\delta)$ is the family of subsets $B \subset A \setminus \{ n \}$ such that $B = \{b_1, \ldots, b_s \}$ consists of pairwise unrelated elements in $A \setminus \{ n \}$, and
\begin{equation}
\label{e:Omega_2}
\Omega_2(\delta,v) = \left \lbrace \begin{array}{ll} 0,  &  \Sigma(v) \neq 0, \\[0.4cm]\displaystyle    \Omega_1(\delta,v/A_1) \cdots \Omega_1(\delta,v/A_r) - \Omega(\delta,v,a), &  \Sigma(v) = 0.
 \end{array} \right.
\end{equation}
\end{teor}

\begin{remark}
The coefficients $\Omega_1(\delta,v)$ and $\Omega_2(\delta,v)$ coincide with those given in \cite[Lemma 3]{Elia89}. 
\end{remark}

\begin{center}
\begin{figure}[h]
\includegraphics[scale=0.5]{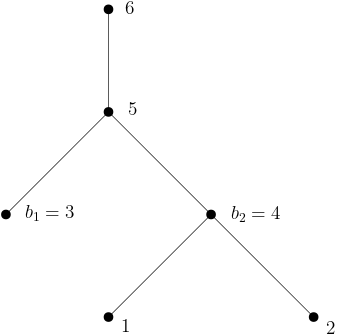}
\caption{Tree diagram with $n = 6$, $\delta = (0,0,0,2,2,1 )$, $n= 6$, $B = \{b_1,b_2 \} = \{3,4 \}$, $A \setminus A(B) = \{ 5,6 \}$, $\delta/ A(b_1) = ( 0)$, $\delta/ A(b_2) = (0,0,2 )$, and $\delta/ A \setminus A(B) = (0,1 )$.}
\label{f:tree_with_examples}
\end{figure}
\end{center} 

\begin{proof}
We proceed by induction. The case $n = 1$ follows by equation \eqref{e:first_cohomological_equation}. For the general case we invoke \eqref{e:cohomological_general_equation}. Let us split 
$$
H_n = \sum_{\delta \in \Delta(n)} H_\delta, \quad R'_n = \sum_{\delta \in \Delta(n)} R'_\delta,
$$
and assume that $\widehat{H}_\delta(v,\eta) = \Omega_1(\delta,v) \widehat{\mathcal{F}}_\hbar(\delta,v,\eta)$, for $\delta \in \Delta(k)$ with $1 \leq k \leq n-1$.  Then, by \eqref{e:cohomological_general_equation},
\begin{align*}
i \omega \cdot \nabla_x H_{n} & = \sum_{r=1}^{n-1} \sum_{k_1 + \cdots + k_r = n-1} \sum_{\delta^\jmath \in \Delta(k_\jmath)} \mathbf{c}_{k_1,\ldots, k_r} [H_{\delta^r}, \ldots, [ H_{\delta^1}, V - R'_1]_\hbar \cdots ]_\hbar  \\[0.2cm]
 & \quad - \sum_{m = 2}^{n-1} \sum_{s=1}^{n-m} \sum_{l_1 + \cdots + l_s = n-m} \sum_{\delta^\iota \in \Delta(l_\iota), \; \delta^0 \in \Delta(m)} \mathbf{c}_{l_1, \ldots, l_s} [H_{\delta^s}, \ldots, [H_{\delta^1}, R'_{\delta^0}]_\hbar \cdots]_\hbar - R'_n.
\end{align*}
From the first term of the right-hand-side, using that $\mathbf{c}(\delta) = \mathbf{c}_{k_1,\ldots, k_r} \mathbf{c}(\delta^1) \cdots \mathbf{c}(\delta^r)$, where $A \setminus \{n \} =  A_1 \cup \cdots \cup A_r$ is the natural decomposition into simple index sets, denoting $\delta(n) = r$ and $\delta/A_l = \delta^l$ for $1 \leq l \leq r$, we obtain the contribution
$$
 \Omega_1(\delta,v/A_1) \cdots \Omega_1(\delta,v/A_r) \widehat{\mathcal{F}}_\hbar(\delta,v,\eta)
$$
in expression \eqref{e:Omega_1} once we solve the cohomological equation using \eqref{e:solution_cohomological_equation}. On the other hand, to study the contribution of the second term, we recall definition \eqref{e:coefficients} of the coefficients $\mathbf{c}_{l_1,\ldots,l_s}$ and consider first the commutator
\begin{equation}
\label{e:first_considered_commutator}
-\frac{1}{l_1} \mathbf{c}(\delta^1) \mathbf{c}(\delta^0) [H_{\delta^1}, R'_{\delta^0}]_\hbar.
\end{equation}
We aim at showing by using repeteadly the Jacobi rule \eqref{e:jacobi_in_fourier} that this commutator generates a sum of terms connecting the tree $\delta^1$ with $\delta^0$ through the nodes of $\delta^0$. Precisely, if $\delta^0 : A_0 \to \mathbb{N}$ decomposes into simple index sets as $A_0 \setminus \{ n \} =  A^0_1 \cup \cdots \cup A^0_i$ with $\delta/A^0_\iota = \delta^0_\iota$, then denoting $l^0_\iota = \# A^0_\iota$, we have
$$
\frac{1}{l_1} \mathbf{c}(\delta^1) \mathbf{c}(\delta^0) = \frac{1}{l_1} \mathbf{c}_{l^0_1, \ldots, l^0_i} \mathbf{c}(\delta^0_1) \cdots \mathbf{c}(\delta^0_i).
$$ 
Moreover, by the induction hypothesis, we have that
\begin{align}
\label{e:contribution_term}
[H_{\delta^1}, R'_{\delta^0}]_\hbar & \\[0.2cm]
 & \hspace*{- 1.5cm} = \int \sigma_\hbar^1(\Sigma(\mathbf{w}^1), \Sigma(\mathbf{w}^0))  \Omega_2(\delta^0,v^0) \Omega_1(\delta^1,v^1) \widehat{\mathcal{F}}_\hbar(\delta^1,\mathbf{w}^1) \widehat{\mathcal{F}}_\hbar(\delta^0, \mathbf{w}^0) e^{i ( \Sigma(\mathbf{w}^1) + \Sigma(\mathbf{w}^0)) \cdot z} \kappa(d\mathbf{w}^1, d\mathbf{w}^0). \notag
\end{align}
Then we observe that:
\begin{align*}
\sigma_\hbar^1(\Sigma(\mathbf{w}^1), \Sigma(\mathbf{w}^0)) \sigma_\hbar(\mathbf{w}^1,\delta^1) \sigma_\hbar(\mathbf{w}^0,\delta^0) =  \sigma_\hbar( \mathbf{w}^1, \mathbf{w}^0, \delta^1 \triangleleft_0^n \delta^0).
\end{align*}
The term \eqref{e:contribution_term} contributes to $- \Omega_2(\delta^0,v^0) \Omega_1(\delta^1,v^1)  \widehat{\mathcal{F}}_\hbar(\delta,\mathbf{w})$ when $A = A_1 \cup A_0$ and $\delta = \delta^1 \triangleleft_0^n \delta^0$. However, the coefficient $\frac{1}{l_1} \mathbf{c}(\delta^1) \mathbf{c}(\delta^0)$ appearing in front of this term is in general larger than $\mathbf{c}(\delta^1 \triangleleft_0^n \delta^0)$. This means that this coefficient has to be splitted in a way that the commutator \eqref{e:first_considered_commutator} produces all terms corresponding with $- \Omega(\delta,v)\widehat{\mathcal{F}}_\hbar(\delta,\mathbf{w})$ when the tree $\delta$ is obtained connecting $\delta^1$ with $\delta^0$ through any node of $\delta^0$ (not necessarily the root) and in any other position on the plane.  Denoting $w = \mathbf{w}^0/ \{ n \}$, we have:
\begin{align*}
\sigma_\hbar( \mathbf{w}^1, \mathbf{w}^0, \delta^1 \triangleleft_0^n \delta^0) & \\[0.2cm]
 & \hspace*{-2.5cm} =  \sigma_\hbar^1(\Sigma(\mathbf{w}^1), \Sigma(\mathbf{w}^0)) \sigma_\hbar^i(w,\Sigma_1(\mathbf{w}^0), \ldots , \Sigma_i(\mathbf{w}^0)) \sigma_\hbar( \mathbf{w}^1,\delta^1) \prod_{\iota = 1}^i \sigma_\hbar(\mathbf{w}^0/A^0_\iota,\delta^0_\iota) \\[0.2cm]
 & \hspace*{-2.5cm} = \sigma_\hbar^1(\Sigma(\mathbf{w}^1), \Sigma(\mathbf{w}^0)) \sigma_\hbar^1(\Sigma_i(\mathbf{w}^0), w + \Sigma_1(\mathbf{w}^0) + \cdots + \Sigma_{i-1}(\mathbf{w}^0)) \\[0.2cm]
 & \hspace*{-2.5cm} \quad \quad \quad \quad \quad \times \sigma_\hbar^{i-1}(w,\Sigma_1(\mathbf{w}^0), \ldots , \Sigma_{i-1}(\mathbf{w}^0)) \sigma_\hbar( \mathbf{w}^1,\delta^1) \prod_{\iota = 1}^i \sigma_\hbar(\mathbf{w}^0/A^0_\iota,\delta^0_\iota).
\end{align*}
Now, by using the Jacobi rule \eqref{e:jacobi_in_fourier}, we have
\begin{align*}
\sigma_\hbar^1(\Sigma(\mathbf{w}^1), \Sigma(\mathbf{w}^0)) \sigma_\hbar^1(\Sigma_s(\mathbf{w}^0), w + \Sigma_1(\mathbf{w}^0) + \cdots + \Sigma_{i-1}(\mathbf{w}^0)) & \\[0.2cm]
 & \hspace*{-9cm} = \sigma_\hbar^1 ( \Sigma(\mathbf{w}^1) + \Sigma_i(\mathbf{w}^0), w + \Sigma_1(\mathbf{w}^0) + \cdots + \Sigma_{i-1}(\mathbf{w}^0)) \sigma_\hbar^1 ( \Sigma(\mathbf{w}^1), \Sigma_i(\mathbf{w}^0))  \\[0.2cm]
 & \hspace*{-9cm} \quad + \sigma_\hbar^1 ( \Sigma_i(\mathbf{w}^0), \Sigma(\mathbf{w}^1)+ w + \Sigma_1(\mathbf{w}^0) + \cdots + \Sigma_{i-1}(\mathbf{w}^0)) \\[0.2cm]
 & \hspace*{-9cm} \hspace*{3cm} \times \sigma_\hbar^1(\Sigma(\mathbf{w}^1), w + \Sigma_1(\mathbf{w}^0) + \cdots + \Sigma_{i-1}(\mathbf{w}^0)).
\end{align*}
Therefore, denoting $n_i \in A_0$ such that $A^0_i = A(n_i)$, we get:
$$
\sigma_\hbar( \mathbf{w}^1, \mathbf{w}^0, \delta^1 \triangleleft_0^n \delta^0) = \sigma_\hbar(\mathbf{w}^1,\mathbf{w}^0, \delta^1 \triangleleft_0^{n_i} \delta^0) + \sigma_\hbar(\mathbf{w}^1,\mathbf{w}^0, \delta^1 \triangleleft_1^n \delta^0).
$$
This procedure can be iterated, expanding next these two terms in the above expression again by the Jacobi rule, reducing the identification of the coefficients to an inductive scheme covering all the nodes of $\delta^0$ and all possible positions in the rooted planar tree obtained by connecting $\delta^1$ with $\delta^0$.  It remains to show that all these terms given by this procedure can be weighted to appear with the corresponding coefficients $\mathbf{c}(\delta)$ from the splitting of the coefficient $\frac{1}{l_1} \mathbf{c}(\delta^1) \mathbf{c}(\delta^0)$. To see this, we use Lemma \ref{l:jacobi_coefficients}, which gives us: 
\begin{align*}
\frac{1}{l_1} \mathbf{c}(\delta^1) \mathbf{c}(\delta^0) &  =  \mathbf{c}(\delta^1) \mathbf{c}(\delta^0_1) \cdots \mathbf{c}(\delta^0_i) \frac{1}{l_1} \mathbf{c}_{l^0_1, \ldots, l^0_i} \\[0.2cm]
 & = \mathbf{c}(\delta^1) \mathbf{c}(\delta^0_1) \cdots \mathbf{c}(\delta^0_i) \left( \mathbf{c}_{l^0_1, \ldots, l^0_i,l_1} + \frac{1}{l_1} \mathbf{c}_{l^0_1, \ldots, l^0_i + l_1}  \right).
\end{align*}

This splitting matches with the iteration of the Jacobi rule, so that:
\begin{align*}
\frac{1}{l_1} \mathbf{c}(\delta^1) \mathbf{c}(\delta^0) \sigma_\hbar( \mathbf{w}^1, \mathbf{w}^0, \delta^1 \triangleleft_0^n \delta^0) & \\[0.2cm]
 & \hspace*{-3cm} = \mathbf{c}(\delta^1 \triangleleft_0^n \delta^0) \sigma_\hbar( \mathbf{w}^1, \mathbf{w}^0, \delta^1 \triangleleft_0^n \delta^0) \\[0.2cm]
 & \hspace*{-3cm}   \quad +  \frac{1}{l_1}  \mathbf{c}_{l^0_1, \ldots, l^0_i + l_1} \mathbf{c}(\delta^1) \mathbf{c}(\delta^0_1) \cdots \mathbf{c}(\delta^0_i) \sigma_\hbar(\mathbf{w}^1,\mathbf{w}^0, \delta^1 \triangleleft_0^{n_i} \delta^0) \\[0.2cm]
 & \hspace*{-3cm}  \quad +  \frac{1}{l_1} \mathbf{c}_{l_1^0, \ldots, l_{i}^0+l_1} \mathbf{c}(\delta^1) \mathbf{c}(\delta^0_1) \cdots \mathbf{c}(\delta^0_i)  \sigma_\hbar(\mathbf{w}^1,\mathbf{w}^0, \delta^1 \triangleleft_1^n \delta^0).
\end{align*}
Iterating this procedure by induction to cover all the nodes and positions in $\delta^0$, we obtain:
\begin{align*}
\frac{1}{l_1} \mathbf{c}(\delta^1) \mathbf{c}(\delta^0) \sigma_\hbar( \mathbf{w}^1, \mathbf{w}^0, \delta^1 \triangleleft_0^n \delta^0) & = \sum_{a \in A_0} \sum_{\iota \in \{0, \ldots, \delta^0_a\}} \mathbf{c}(\delta^1 \triangleleft_\iota^a \delta^0) \sigma_\hbar( \mathbf{w}^1, \mathbf{w}^0, \delta^1 \triangleleft_\iota^a \delta^0) \\[0.2cm]
 & = \sum_{\delta \in \Delta(\delta^0, \delta^1)} \mathbf{c}(\delta) \sigma_\hbar( \mathbf{w}^1, \mathbf{w}^0, \delta),
\end{align*}
where $\Delta(\delta^0, \delta^1)$ is the set of trees $\delta : A \to \mathbb{Z}$ such that:
$$
A \setminus A_0 = A(b_1) = A_1,
$$
where $b_1 \in A \setminus A_0$ satisfies $b_1 \prec n$, and so that $\delta/A_\iota = \delta^\iota$ for $\iota = 0 ,1$. On the other hand, given $(l_1,\ldots, l_s)$ such that $l_1 + \cdots + l_s = n -m$, and let $\delta^\iota \in \Delta(l_\iota)$ for $\iota = 1, \ldots, s$, and $\delta^0 \in \Delta(m)$, the contribution of the commutator 
$$
-\sum_{\sigma \in \pi_s} \mathbf{c}_{l_{\sigma(1)}, \ldots, l_{\sigma(s)}} [H_{\delta^{\sigma(s)}}, \ldots, [H_{\delta^{\sigma(1)}}, R'_{\delta^0}]_\hbar \cdots]_\hbar
$$
can be treated by similar arguments.  We use Jacobi rule \eqref{e:jacobi_in_fourier} and  Lemmas \ref{l:ordered_permutations} and \ref{l:transposition_lemma} to generalize the above iterative argument. 
Given $1 \leq r_1 \leq r_2 \leq s$, let us consider the partition $\{1, \ldots, s \} =  I^{r_1,r_2}_1 \cup I^{r_1,r_2}_2 \cup I^{r_1,r_2}_3$  given by
\begin{align*}
I^{r_1,r_2}_1 & = \{ 1, \ldots, r_1 \}, \\[0.2cm]
I^{r_1,r_2}_2 & = \{ r_1 + 1, \ldots,  r_2 \}, \\[0.2cm]
I^{r_1,r_2}_3 & = \{r_2+1, \ldots,  s \}. 
\end{align*} 
In particular, if $r_1 = r_2$ we set $I^{r_1,r_2}_2 = \emptyset$, and if $r_2 = s$ we set $I^{r_1,r_2}_3 = \emptyset$. Let $\mathcal{C}(s,r_2)$ be the subset of permutations $\varsigma \in \pi_s$ of the form
$$ 
\varsigma(1, \ldots, s ) = (\jmath_1, \ldots, \jmath_{r_2}, \iota_{r_2+1}, \ldots, \iota_s ),
$$
where $(\iota_{r_2+1}, \ldots, \iota_s)$ is a combination of $\{1,\ldots, s \}$ ordered with respect to the natural order $<$, and $(\jmath_1, \ldots, \jmath_{r_2} )$ is its ordered complement\footnote{In particular $\# \mathcal{C}(s,r_2) = {s \choose r_2}$ since $\mathcal{C}(s,r_2)$ is in one-to-one correspondence with combinations of $r_2$ elements.}.  We define the subgroup   $\Pi(s,r_1,r_2)$ of permutations $\sigma' = (\sigma'_1,\sigma'_2,\sigma'_3) \in \pi_s$ such that $\sigma'(I^{r_1,r_2}_j) = I^{r_1,r_2}_j$ for $j=1,2,3$. Notice that for every $\sigma \in \pi_s$, there exist unique $\varsigma \in \mathcal{C}(s,r_2)$, $\sigma_* \in \mathcal{O}(r_1,r_2)$, and $\sigma' \in \Pi(s,r_1,r_2)$ such that
\begin{equation}
\label{e:decomposition_permutation}
\sigma =  \big( \sigma_*, \operatorname{Id}_3 \big) \circ \sigma' \circ \varsigma,
\end{equation}
where $\operatorname{Id}_3$ is the identity permutation on $(I^{r_1,r_2}_3,<)$. For any $\sigma \in \pi_s$, we set:
$$
\delta_{\sigma} :=  \delta^{\sigma(s)} \triangleleft_0^n \cdots \delta^{\sigma(r_2+1)} \triangleleft_0^n \delta^{\sigma(r_2)} \triangleleft_{0}^{n_i} \cdots \triangleleft_{0}^{n_i} \delta^{\sigma(r_1+1)} \triangleleft_0^{n_i} \delta^{\sigma(r_1)} \triangleleft_1^n \cdots \triangleleft_1^n \delta^{\sigma(1)} \triangleleft_1^n \delta^0,
$$ 
where $n$ is the root of $\delta^0 = \delta_\sigma/A_0$ for the labelling \eqref{e:labelling} of the tree induced by $(\delta_\sigma, A)$ with $A = A_0 \cup A_1 \cup \cdots \cup A_s$, and $n_i$ is the root of $\delta^0_i = \delta_\sigma/ A^0_i$, where $A_0 \setminus \{n \} = A^0_1 \cup \cdots \cup A^0_i$ is the natural decomposition into simple index sets. 
Using the Fourier inversion formula (similarly as in \eqref{e:contribution_term}) for the commutator
$$
- \sum_{\sigma \in \pi_s} \mathbf{c}_{l_{\sigma(1)}, \ldots, l_{\sigma(s)}} [H_{\delta^{\sigma(s)}}, \ldots, [H_{\delta^{\sigma(1)}}, R'_{\delta^0}]_\hbar \cdots]_\hbar,
$$
we lead to study the Fourier multiplier
$$
-\sum_{\sigma \in \pi_s} \mathbf{c}_{l_{\sigma(1)}, \ldots, l_{\sigma(s)}} \mathbf{c}(\delta^s) \cdots \mathbf{c}(\delta^1) \mathbf{c}(\delta^0) \sigma_\hbar(\mathbf{w}, \delta^{\sigma(s)} \triangleleft_0^n \cdots \triangleleft_0^n \delta^{\sigma(1)} \triangleleft_0^n \delta^0).
$$
By Jacobi rule \eqref{e:jacobi_in_fourier} and Lemma \ref{l:transposition_lemma}, we obtain:
\begin{align*}
\sum_{\sigma \in \pi_s} \mathbf{c}_{l_{\sigma(1)}, \ldots, l_{\sigma(s)}} \mathbf{c}(\delta^s) \cdots \mathbf{c}(\delta^1) \mathbf{c}(\delta^0) \sigma_\hbar(\mathbf{w}, \delta^{\sigma(s)} \triangleleft_0^n \cdots \triangleleft_0^n \delta^{\sigma(1)} \triangleleft_0^n \delta^0) & \\[0.2cm]
  &  \hspace*{-9cm} =  \sum_{r_2= 1}^{s} \sum_{r_1=1}^{r_2} \sum_{\sigma \in \pi_s} \mathbf{C}_{s,r_1,r_2}(\sigma) \mathbf{c}(\delta^{s}) \cdots \mathbf{c}(\delta^{1}) \mathbf{c}(\delta^0_1) \cdots  \mathbf{c}(\delta^0_{i}) \sigma_\hbar(\mathbf{w},\delta_\sigma),
\end{align*}
where 
$$
\mathbf{C}_{s,r_1,r_2}(\sigma) =  \mathbf{c}_{l_1^0, \ldots, l^0_i + l_{\sigma(1)} + \cdots + l_{\sigma(r_2)}, l_{\sigma(r_2+1)}, \ldots, l_{\sigma(s)}} \mathbf{c}_{l_{\sigma(1)}, \ldots, l_{\sigma(r_2)}}.
$$
Using the decomposition \eqref{e:decomposition_permutation} of $\sigma$, we observe that
$$
\mathbf{c}_{l_1^0, \ldots, l^0_i + l_{\sigma(1)} + \cdots + l_{\sigma(r_2)}, l_{\sigma(r_2+1)}, \ldots, l_{\sigma(s)}} = \mathbf{c}_{l_1^0, \ldots, l^0_i + l_{ \sigma' \circ \varsigma(1)} + \cdots + l_{\sigma' \circ \varsigma(r_2)}, l_{\sigma' \circ \varsigma(r_2+1)}, \ldots, l_{\sigma' \circ \varsigma(s)}},
$$
that is, this coefficient is independent of the permutation $\sigma_* \in \mathcal{O}(r_1,r_2)$. Notice also that
$$
\sigma_\hbar(\mathbf{w},\delta_\sigma) = \sigma_\hbar(\mathbf{w},\delta_{\sigma'\circ \rho})
$$ 
is also independent of the permutation $\sigma_* \in \mathcal{O}(r_1,r_2).$ Moreover, by Lemma \ref{l:ordered_permutations}, we have:
$$
\sum_{\sigma_* \in \mathcal{O}(r_1,r_2)} \mathbf{c}_{l_{\sigma_* \circ \sigma'\circ \varsigma(1)}, \ldots, l_{\sigma_* \circ \sigma'\circ \varsigma(r_2)}} = \mathbf{c}_{l_{\sigma'_1 \circ \varsigma(1)}, \ldots, l_{\sigma'_1 \circ \varsigma(r_1)}} \mathbf{c}_{l_{\sigma'_2 \circ \varsigma(r_1+1)}, \ldots, l_{\sigma'_2 \circ \varsigma(r_2)}}.
$$
Since $\sigma'_1$ runs over $\pi_{r_1}$ and $\sigma'_2$ runs over $\pi_{r_2-r_1}$, this splitting allows us to iterate this procedure, covering then the rest of nodes and positions of $\delta^0$ by an inductive process. Therefore we get:
\begin{align*}
\sum_{\sigma \in \pi_s} \mathbf{c}_{\sigma(l_1, \ldots, l_s)} \mathbf{c}(\delta^s) \cdots \mathbf{c}(\delta^1) \mathbf{c}(\delta^0) \sigma_\hbar(\mathbf{w}^s, \ldots, \mathbf{w}^1, \mathbf{w}^0, \delta^{\sigma(s)} \triangleleft_0^n \cdots \triangleleft_0^n \delta^{\sigma(1)} \triangleleft_0^n \delta^0) & \\[0.2cm] 
 & \hspace*{-10cm} =  \sum_{\delta \in \Delta(\delta^0, \delta^1, \ldots, \delta^s)} \mathbf{c}(\delta) \sigma_\hbar( \mathbf{w}^s, \ldots, \mathbf{w}^1, \mathbf{w}^0, \delta),
\end{align*}
where $n$ is the root of $\delta^0 : A_0 \to \mathbb{Z}$ and $\Delta(\delta^0, \delta^1, \ldots, \delta^s)$ is the set of trees $\delta : A \to \mathbb{Z}$ such that
$$
A \setminus A_0 = A(B) = A(b_1) \cup \cdots \cup A(b_s)
$$
is the decomposition into simple index sets with $B = \{b_1,\ldots, b_s\} \in \Gamma_1(\delta)$, $A_\iota = A(b_\iota)$, and $\delta/A_\iota = \delta^\iota$ for $\iota = 0 ,\ldots, s$. This term contributes with 
$$
- \Omega_2(\delta,v/A\setminus A(B)) \Omega_1(\delta,v/A(b_1)) \cdots \Omega_1(\delta,v/A(b_s)) \widehat{\mathcal{F}}_\hbar(\delta,v,\eta).
$$
Finally, solving the cohomological equation \eqref{e:cohomological_general_equation} by using \eqref{e:solution_cohomological_equation} we obtain the expression for $\Omega_1$. Similar considerations give the formula for $\Omega_2$. This concludes the proof.

\end{proof}

\section{Convergence of the Lindstedt series}
\label{s:convergence_of_the_lindstedt_series}

The coefficients $\Omega_1$ and $\Omega_2$ are sums of products of small divisors. The very technical study of these terms is the heart of the works of Eliasson \cite{Elia89,Eliasson90,Eliasson96}. Proving the convergence of the series giving $H_n$ (and $R'_n$) is a real challenge, since this series is \textit{absolutely divergent} (see \cite{Eliasson96}) and it is necessary to exploit very precise cancelations of signs between terms in this series to show its convergence.  Here we reduce the proof to Lemma \ref{l:lemma_eliasson}, which is consequence of \cite[Prop. 2]{Elia89}, used here as black-box, and from this key lemma we adapt and complete some parts of the work \cite{Elia89} to give a more compact exposition. Finally, in Lemma \ref{e:analytical_part} we show the key estimate regarding the analytic semiclassical pseudodifferential calculus.

\subsection{Admissible families of resonances.} We start by generalizing slightly the concept of resonance introduced in \cite[Def. page 20]{Elia89}. Given $v \in (\mathbb{Z}^d)^n$ and $\delta \in \Delta(n)$, let $\gamma = \gamma_{\delta,v} : A \to \mathbb{Z}^d$
be the map: 
$$
\gamma(a) := \sum_{b \in A(a)} v(b).
$$
Let $A' \subset A$ be such that $\delta/A'$ is a simple index set. We define $\gamma/A' := \gamma_{\delta,v/A'}$.
\begin{definition}
A $\gamma$-resonance (we will call it simply a resonance) is a pair $(B,a) \in A^r \times A$ with  $r \geq 0$, $B = (b_1, \ldots, b_r) \subset A(a) \setminus \{a\}$ of pairwise unrelated elements, such that
$$
\gamma(a) = \gamma(b_1) + \cdots + \gamma(b_r).
$$
We denote $\mathcal{B}_R := A(a) \setminus A(B)$, and notice that we can identify $R \equiv \mathcal{B}_R$. We emphasize that the case $r = 0$ is also covered, so in this case $B = \emptyset$ and $R = (\emptyset,a)$.
\end{definition}

If $R = (B,a)$ is a resonance, then
$$
\sum_{b \in \mathcal{B}_R} v(b) = 0.
$$

\begin{figure}[h]
\includegraphics[scale=0.5]{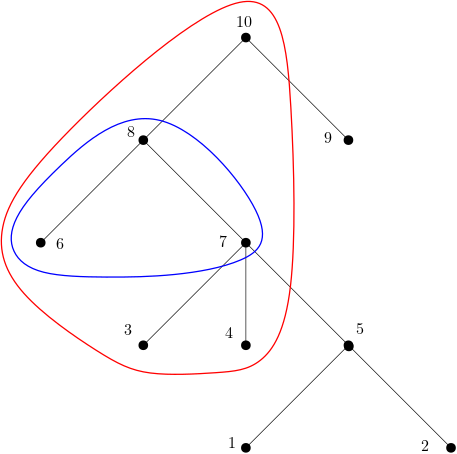}
\caption{Tree with two resonances $\color{blue} R_1 = (B_1,a_1)$ and $\color{red} R_2 = (B_2,a_2)$, where $B_1 = \{3,4,5\}$, $a_1 = 8$, $B_2 = \{ 5,9 \}$, $a_2 = 10$. We encircle the sets $\color{blue} \mathcal{B}_{R_1}$ and $\color{red} \mathcal{B}_{R_2}$.}
\label{f:resonances_1}
\end{figure}

 \begin{definition}
Let $R_1$ and $R_2$ be two resonances.
\begin{enumerate}

\item We say that $R_1 \subset R_2$ if $\mathcal{B}_{R_1} \subset \mathcal{B}_{R_2}$.

\item We say that $R_1$ and $R_2$ are disjoint if $\mathcal{B}_{R_1} \cap \mathcal{B}_{R_2} = \emptyset$.

\end{enumerate}
If (1) or (2) are satisfied, we say that $R_1$ and $R_2$ are non-overlapping.
\end{definition}

Let $J$ be a family of non-overlapping resonances. We define its support by
$$
\operatorname{supp} J := \bigcup_{(B,a) \in J} ]b_1, a[ \; \cup \cdots \cup \;]b_r,a[.  
$$
In the example of Figure \ref{f:resonances_1}, we have $(B_1,a_1) \subset (B_2,a_2)$. If $J = \{ (B_1,a_1), (B_2,a_2) \}$, then $\operatorname{supp} J = \{6,8\}$.

For any $c \in A$, we define $\gamma_J(c)$ in the following way: if $c \notin \operatorname{supp} J$ then $\gamma_J(c) = \gamma(c)$. While if $(B,a)$ is the smallest resonance of $J$ such that $c \in ]b_1, a[ \; \cup \cdots \cup \;]b_r,a[$, then 
$$
\gamma_J(c) :=  \sum_{ b \in  A(c) \setminus A(B) } v(b).
$$
If $\gamma_J(c) \neq 0$ for every $c \in A$ we say that $J$ is \textbf{admissible}. Let $\operatorname{ad}(\gamma)$ the set of all admissible families $J$. We also set $\operatorname{ad}^*(\gamma)$ the set of families of resonances $J$ such that
$$
J \in \operatorname{ad}^*(\gamma) \Leftrightarrow \left \lbrace \begin{array}{ll}
\gamma_J(c) \neq 0, & \forall c \in A \setminus \{ n \}, \\[0.2cm]
 \gamma_J(n) = 0. & \end{array} \right.
$$ 

\begin{lemma}
\label{l:first_description}
The following two expressions for the coefficients $\Omega_1$ and $\Omega_2$ hold:
\begin{align}
\label{e:first_omega}
\Omega_1(\delta,v) & = \sum_{ J \in \operatorname{ad}(\gamma)} \prod_{c \in A} (-1)^{\# J} ( i \omega \cdot \gamma_{J}(c))^{-1}, \\[0.2cm]
 \Omega_2(\delta,v) & = \sum_{ J \in \operatorname{ad}^*(\gamma)} \prod_{c \in A \setminus \{n \}} (-1)^{\# J} ( i \omega \cdot \gamma_{J}(c))^{-1}.
\end{align}
\end{lemma}

\begin{proof}
 We use the recursive definition of the coefficients $\Omega_1(\delta,v)$ given in Theorem \ref{t:main_lindstedt_teor}. Indeed, let $A \setminus \{ n \} = A_1 \cup \cdots \cup A_r$ be the natural decomposition into simple index sets. For any resonance $R = (B,n)$, let $E_R \subset \operatorname{ad}(\gamma)$ be the set of all admissible families $J$ such that $R \in J$ and $R \not\subset R'$ for any other $R' \in J$; and let $E$ be the set of those families which do not contain any resonance of the form $R = (B,n)$. Then we have:
$$
\operatorname{ad}(\gamma) = E \cup \bigcup_{R = (B,n)} E_R,
$$ 
and this union is disjoint by construction. By definition of $\Omega_1(\delta,v)$, we have:
\begin{equation}
\label{e:product_omega_1}
\Omega_1(\delta,v) = (i \omega \cdot \gamma(n))^{-1} \big( \Omega_1( \delta,v/A_1) \cdots \Omega_1(\delta,v/A_r) - \Omega(\delta,v) \big).
\end{equation}
Since $n$ is not in the support of any resonance, then $(i \omega \cdot \gamma(n))^{-1} = (i \omega \cdot \gamma_J(n))^{-1}$ for any family $J \in \operatorname{ad}(\gamma)$. Notice also that\footnote{Let $E_1$, $E_2$ to sets of families of sets, the product $E_1 \times E_2$ is given by:
$$
E_1 \times E_2 = \{ J = J_1 \cup J_2 \, : \, J_1 \in E_1, \quad J_2 \in E_2 \}.
$$} 
$E = \operatorname{ad}(\gamma / A_1) \times \cdots \times \operatorname{ad}(\gamma / A_r)$. This shows that the set $E$ contributes with the product
$$
(i \omega \cdot \gamma(n))^{-1} \Omega_1( \delta,v/A_1) \cdots \Omega_1(\delta,v/A_r) 
$$
in \eqref{e:product_omega_1}. Similarly, for every resonance $R = (B,n)$, the set $E_R$ contributes with the product
$$
- (i \omega \cdot \gamma_J(n))^{-1}  \Omega_2(\delta,v/A\setminus A(B)) \Omega_1(\delta,v/A(b_1)) \cdots \Omega_1(\delta,v/A(b_s)),
$$
where $B = \{ b_1, \ldots, b_s \}$. Iterating this procedure covering the tree from the root $n$ towards its predecessors, we obtain the claim. The proof for $\Omega_2(\delta,v)$ is similar.
\end{proof}

In the above sum, however, there are in general many cancelations of signs. To avoid counting  summands which actually cancel each other out, in $\operatorname{ad}(\gamma)$ we define the following equivalence relation:

\begin{definition}
Let $J_1 = \{ R_1, \ldots, R_n \}$ and $J_2= \{Q_1, \ldots, Q_m \}$ belong to $\operatorname{ad}(\gamma)$ (resp. to $\operatorname{ad}^*(\gamma)$). We say that $J_1 \sim J_2$ if for every $Q_j = (C_j,d_j) \in J_1 \setminus J_2$ there exist $R_i = (B_i,a_i), R_k = (B_k,a_k) \in J_1$ such that $R_i \subset Q_j \subset R_k$, $a_k = d_j = a_i$, and $B_k \subset C_j \subset B_i$; and for every $R_j = (B_j,a_j) \in J_1 \setminus J_2$, there exist $Q_i = (C_i,d_i), Q_k = (C_k,d_k) \in J_2$ such that $Q_i \subset R_j \subset Q_k$, $d_k = a_j = d_i$, and $C_k \subset B_j \subset C_i$.

We denote $\operatorname{\mathbf{ad}}(\gamma) = \operatorname{ad}(\gamma) / \sim$  (resp. $\operatorname{\mathbf{ad}}^*(\gamma) = \operatorname{ad}^*(\gamma)/\sim$). We identify $[J]$ with $\mathbf{J}$, and define, for any equivalent class $[J] \in \operatorname{\mathbf{ad}}(\gamma)$ (resp. $\operatorname{\mathbf{ad}}^*(\gamma)$), the minimal element $\mathbf{J} \in [J]$ satisfying that $\mathbf{J} \subset J$ for every $J \in [J]$.
\end{definition}

\begin{figure}[h]
\label{f:resonances_2}
\includegraphics[scale=0.5]{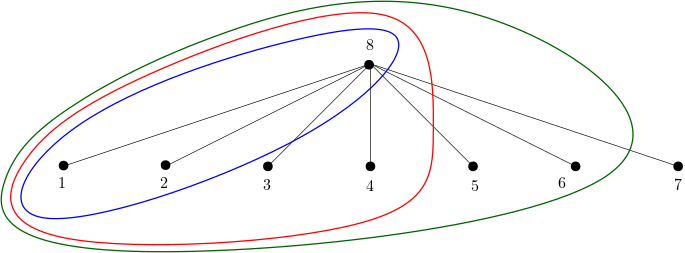}
\caption{Tree given by $\delta = (0, \ldots, 0, 7 )$. Let $\color{blue} R_1 = (B_1,a)$, $\color{red} R_2 = (B_2,a)$, $ \color{darkgreen} R_3 = (B_3,a)$ with $a= 8$, $B_1 = \{3,4,5,6,7 \}$, $B_2 = \{ 5,6,7 \}$, and $B_3 = \{7\}$. We have that $J_1 = \{R_1,R_3 \}$ and $J_2 = \{R_1,R_2,R_3 \}$ satisfy $J_1 \sim J_2$.}
\end{figure}

\begin{lemma}
\label{l:second_description}
The following two identities hold:
\begin{align}
\label{e:simplification}
\Omega_1(\delta,v) & = \sum_{ \mathbf{J} \in \operatorname{\mathbf{ad}}(\gamma)}   \chi(\mathbf{J}) \prod_{c \in A}  ( i \omega \cdot \gamma_{\mathbf{J}}(c))^{-1}, \\[0.2cm]
 \Omega_2(\delta,v) & = \sum_{ \mathbf{J} \in \operatorname{\mathbf{ad}}^*(\gamma)}   \chi(\mathbf{J}) \prod_{c \in A \setminus \{n \}}  ( i \omega \cdot \gamma_{\mathbf{J}}(c))^{-1},
\end{align}
for some coefficients $\chi(\mathbf{J}) \in \mathbb{Z}$ satisfying  $\vert \chi(\mathbf{J}) \vert \leq {4^n}$.
\end{lemma}

\begin{proof}
 The proof for $\Omega_2$ follows by the same argments as for $\Omega_1$, so we concentrate on the latter. By Lemma \ref{l:first_description}, we have
$$
\Omega_1(\delta,v) = \sum_{ J \in \operatorname{ad}(\gamma)} \prod_{c \in A} (-1)^{\# J} ( i \omega \cdot \gamma_{J}(c))^{-1}.
$$
Since $\sim$ is an equivalent relation in $\mathbf{ad}(\gamma)$, it remains to show that the following cancelation of signs holds: for every $[J] \in \operatorname{\mathbf{ad}}(\gamma)$,
$$
\sum_{ J \in [J]  } \prod_{c \in A} (-1)^{\# J} ( i \omega \cdot \gamma_{J}(c))^{-1} = \chi(\mathbf{J}) \prod_{c \in A}  ( i \omega \cdot \gamma_{\mathbf{J}}(c))^{-1},
$$
for certain explicit coefficient $\chi(\mathbf{J})$ given below. Assume first that $\mathbf{J} = \{ R_0, R^0 \}$ with $R_0 = (B_0,a)$, $R^0 = (B^0,a)$ and $B^0 \subset B_0$. We consider the natural decomposition of $\mathcal{B} = \mathcal{B}_{R^0} \setminus \mathcal{B}_{R_0}$ into simple index sets (see figure \ref{f:resonances_4}):
$$
 \mathcal{B} = A_1 \cup \cdots \cup A_p.
$$
By hypothesis on $R_0$ and $R^0$, we have:
$$
\sum_{c \in \mathcal{B}} v(c) = 0.
$$
We call $\mathcal{\tau} = \{ \mathcal{A}_1, \ldots, \mathcal{A}_q \}$ a \textbf{covering decomposition} of $\mathcal{B} = \mathcal{B}_{R^0} \setminus \mathcal{B}_{R_0}$ if $\mathcal{B} = \mathcal{A}_1 \cup \cdots \cup \mathcal{A}_q$ is a disjoint union of index sets $\mathcal{A}_i$, each of them decomposing into a disjoint union of simple index sets $A_j$. In particular,
\begin{equation}
\label{e:property_tau}
\sum_{c \in \mathcal{A}_i} v(c) = 0, \quad i = 1, \ldots, q.
\end{equation}
We call $\kappa(\tau) := q$ the degree of the covering decomposition $\tau$. Let $\tau$ be a covering decomposition of $\mathcal{B}$, we say that $\tau$ it is a \textbf{maximal covering decomposition} of $\mathcal{B}$ if moreover each $\mathcal{A}_i$ is maximal with respect to property \eqref{e:property_tau}, that is, $\mathcal{A}_i$ can not be decomposed itself into a non-trivial covering decomposition. In general, a maximal covering decomposition is not unique (see Figure \ref{f:resonances_4}).
Let 
$$
\mathscr{T} := \{ \tau_1, \ldots, \tau_N \}
$$ 
be the set of maximal covering decompositions of $\mathcal{B} = \mathcal{B}_{R^0} \setminus \mathcal{B}_{R_0}$. Assume first that $\mathscr{T} = \{ \tau_1 \}$. Let $\tau_1 = \{ \mathcal{A}_1 , \ldots , \mathcal{A}_{\kappa(\tau_1)} \}$. Let us split 
$$
[J] = \{\mathbf{J} \} \cup [J]_1 \cup \cdots \cup [J]_k,
$$
where $J' \in [J]_j$ if $\# J' = \# \mathbf{J} + j$. It turns out that $k = \kappa(\tau_1)-1$. Indeed, if 
$$
J' = \{ R_0 , R_1, \ldots, R_k ,R^0 \} \in [J]_k,
$$
with $R_0 \subset R_1 \subset \cdots \subset R_k \subset R^0$, then we have
$$
\mathcal{B}_{R_j} = \mathcal{B}_{R_0} \cup \mathcal{A}_{i_1} \cup \cdots \cup \mathcal{A}_{i_j}, \quad j = 1, \ldots, k,
$$
for some $\{i_1, \ldots, i_j \} \subset \{1, \ldots, \kappa(\tau_1) \}$. This means that we can write $[J]_k$ as a disjoint union:
$$
[J]_k = \bigcup_{ \{ l_1, \ldots, l_{k} \} \in I_k^k} \{ R_0, R_1^{l_1}, \ldots, R_{k}^{l_{k}}, R^0 \},
$$
where we say that $\{ l_1, \ldots, l_{k} \} \in I_k^k$ if by definition $R_1^{l_1} \subset \cdots \subset R_k^{l_k}$, where the index $l_\iota$ runs over the set  $\{1, \ldots, {k+1 \choose \iota} \}$ for all $\iota = 1, \ldots, k$. In other words, let $\mathcal{P}^{k+1}_\iota$ be the set of parts of $\{1, \ldots, k+1 \}$ of $\iota$ elements, for every $l_\iota \in \{1, \ldots, {k+1 \choose \iota} \}$, there exists a unique subset $\{i_1, \ldots, i_\iota\} \in \mathcal{P}^{k+1}_\iota$ such that
$$
\mathcal{B}_{R_\iota^{l_\iota}}  =  \mathcal{B}_{R_0} \cup \mathcal{A}_{i_1} \cup \cdots \cup \mathcal{A}_{i_\iota}.
$$
More generally, for $1 \leq j \leq k$,
$$
[J]_j = \bigcup_{ \{ s_1, \ldots, s_j \} \in S_j^k} \; \bigcup_{ \{ l_{s_1}, \ldots, l_{s_j} \} \in I_j^k(s_1, \ldots, s_j)} \{ R_0, R_{s_1}^{l_{s_1}}, \ldots, R_{s_j}^{l_{s_j}}, R^0 \},
$$
where  $\{ s_1, \ldots, s_j \} \in S_j^k$ if by definition $1 \leq s_1 < \cdots < s_j \leq k$, and we say that $\{ l_{s_1}, \ldots, l_{s_j} \} \in I_j^k(s_1, \ldots, s_j)$ if $R_{s_1}^{l_{s_1}} \subset \cdots \subset R_{s_j}^{l_{s_j}}$. 

Counting elements, we find $\# [J]_k = (k+1)!$, and more generally
\begin{align*}
\#[J]_j & = \sum_{\{ s_1, \ldots, s_j \} \in S_j^k} { k+1 \choose s_1} {k+1-s_1 \choose s_2 -s_1 } \cdots {k+1- s_{j-1} \choose s_j - s_{j-1} } \\[0.2cm]
 & = \sum_{\{ s_1, \ldots, s_j \} \in S_j^k} \frac{(k+1)!}{s_1! \cdots (s_j - s_{j-1})! (k+1-s_j)!}.
\end{align*}
Let us define the multinomial coefficients:
$$
{k+1 \choose i_1 \cdots i_j } := \frac{(k+1)!}{i_1 ! \cdots i_j (k+1 - i_1 - \cdots - i_j)!}.
$$
Since $\prod_{c \in A} ( i \omega \cdot \gamma_{J}(c))^{-1} = C_{[J]}$  is a constant for every $J \in [J]$, we obtain that 
\begin{equation}
\label{e:huge_cancelation}
 C_{[J]} \sum_{ J \in [J] } (-1)^{\# J} = C_{[J]} (-1)^{\# \mathbf{J}} \left( 1 +  \sum_{j=1}^k \sum_{\substack{ j \leq i_1+ \cdots + i_j \leq k \\ i_l \geq 1}} (-1)^{j} {k + 1 \choose i_1 \cdots i_j} \right) = C_{[J]} (-1)^{\# \mathbf{J} + k},
\end{equation}
by the multinomial theorem. Indeed, we have:
\begin{align*}
1 +  \sum_{j=1}^k \sum_{\substack{ j \leq i_1+ \cdots + i_j \leq k \\ i_l \geq 1}} (-1)^{j} {k + 1 \choose i_1 \cdots i_j} &  = 1 - \sum_{j=2}^{k+1} \sum_{i=1}^j (-1)^i i^{k+1} {j \choose i} \\[0.2cm]
 & =  -\sum_{i=1}^{k+1} \sum_{j=i}^{k+1} (-1)^{i} i^{k+1} { j \choose i } \\[0.2cm]
 & = - \sum_{i=1}^{k+1} (-1)^{i} i^{k+1} {k+2 \choose i+1} \\[0.2cm]
 & = (-1)^k,
\end{align*}
where the last equality holds by finite differencies\footnote{Let $P$ be any polynomial of degree less than $n$, then:
$$
\sum_{j=0}^n (-1)^j {n \choose j } P(j) = 0.
$$}. Defining $\chi(\mathbf{J}) := (-1)^{ \# \mathbf{J} + k}$, the claim holds in this case. 

Let us consider now the case in which $\mathbf{J} = \{R_0,R^0 \}$ as before but in this case $\mathscr{T} = \{ \tau_1, \ldots, \tau_N \}$ with $ N \geq 2$. Let $\tau, \tau'$ be two covering decompositions of $\mathcal{B} = \mathcal{B}_{R^0} \setminus \mathcal{B}_{R_0}$ (not necessarily maximal). We say that $\tau \subseteq \tau'$ if for each $\mathcal{A} \in \tau$ there exists $\mathcal{A}' \in \tau'$ such that $\mathcal{A} \subset \mathcal{A}'$. Let $\tau, \tau'$ be two covering decompositions of $\mathcal{B}$. We define
$$
\tau \wedge \tau'
$$
as the unique covering decomposition of $\mathcal{B}$ such that $\tau,\tau' \subseteq \tau \wedge \tau'$ and there is no other covering decomposition $\tau''$ verifying $\tau,\tau' \subseteq \tau'' \subsetneq \tau \wedge \tau'$.
\begin{figure}[h]
\includegraphics[scale=0.5]{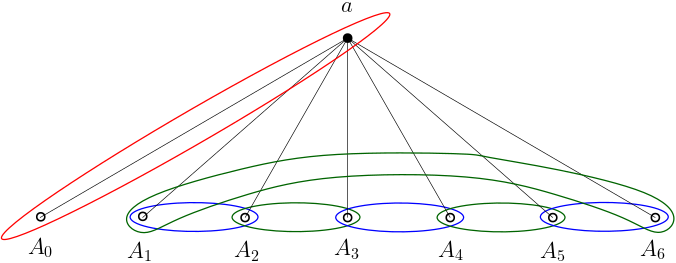}
\caption{We represent here $\mathbf{J} = \{ R_0, R^0 \}$ with $\mathcal{B}_{R_0} = \{a\} \cup A_0$ and $\mathcal{B}_{R^0} = \{ a \} \cup A_0 \cup A_1 \cup \cdots \cup A_6$. There are two maximal covering decompositions of $\mathcal{B} = \mathcal{B}_{R^0} \setminus \mathcal{B}_{R_0}$ given by $\color{blue} \tau_1 = \{ \mathcal{A}_1, \mathcal{A}_2, \mathcal{A}_3 \}$ and $\color{darkgreen} \tau_2 = \{\mathcal{A}_4, \mathcal{A}_5, \mathcal{A}_6 \}$, where $\mathcal{A}_1 = A_1 \cup A_2$, $\mathcal{A}_2 = A_3 \cup A_4$, $\mathcal{A}_3 = \{ A_5,A_6 \}$, $\mathcal{A}_4 = \{A_1,A_6\}$, $\mathcal{A}_5 = \{A_2,A_3 \}$, and $\mathcal{A}_6 = \{A_4,A_5 \}$. In this case $\tau_1 \wedge \tau_2 = \{ \mathcal{B} \}$.}
\label{f:resonances_4}
\end{figure}
We  define the closure $\overline{\mathscr{T}}$ of the set $\mathscr{T}$ of minimal covering decompositions by:
$$
\overline{\mathscr{T}} := \bigcup_{j=1}^N  \bigcup_{\{i_1, \ldots, i_j \} \in \mathcal{P}^N_j} \{ \tau_{i_1} \wedge \cdots \wedge \tau_{i_j} \}.
$$
We rewrite this set by labelling its elements as $\overline{\mathscr{T}} = \{ \tau_1, \ldots, \tau_M \}$, where $N < M$. Since $\prod_{c \in A} ( i \omega \cdot \gamma_{J}(c))^{-1} = C_{[J]}$, using \eqref{e:huge_cancelation} repeteadly we obtain that \eqref{e:simplification} holds provided that
\begin{align*}
\chi(\mathbf{J}) & := \sum_{j=1}^N \sum_{\{i_1, \ldots, i_j \} \in \mathcal{P}^N_j} (-1)^{\# \mathbf{J} + \kappa(\tau_{i_1} \wedge \cdots \wedge \tau_{i_j})+j}.
\end{align*}
It remains to show that $\vert \chi(\mathbf{J}) \vert \leq 2^n$. To this aim, we rewrite
\begin{equation}
\label{e:chi_expression}
 \chi(\mathbf{J}) = \sum_{l=1}^M (-1)^{\# \mathbf{J} + \kappa(\tau_l)} \lambda(\tau_l),
\end{equation}
where the coefficients $\lambda(\tau_l)$ are given by
\begin{equation}
\label{e:lambdas}
\lambda(\tau_l) = \sum_{j=1}^N \sum_{ \tau_{i_1} \wedge \cdots \wedge \tau_{i_j} = \tau_l} (-1)^{j}, \quad l = 1, \ldots, M.
\end{equation}

\noindent  We next notice that on $\overline{\mathscr{T}}$ one can define a partial ordering $\prec$ so that $\tau_i \prec \tau_j$ if $\tau_i \subset \tau_j$. This ordering has the unique maximal element $\tau_M$. Moreover, for every $\tau_i \in \overline{\mathscr{T}}$, we have
\begin{equation}
\label{e:cancelation_of_signs}
\sum_{ j \, : \, \tau_j \preceq \tau_i } \lambda(\tau_j) = \sum_{j = 1}^{n(\tau_i)} (-1)^j {n(\tau_i) \choose j} = (-1)^{n(\tau_i)},
\end{equation}
where $n(\tau_i) = \#  \{ j \leq N \, : \, \tau_j \preceq \tau_i \}$. Let $A$ be the adjacency matrix of the directed graph given by the order $\prec$, that is, $A_{ij} = 1$ if $ \tau_i \prec \tau_j$ and there is no $l \notin \{i,j \}$ such that $\tau_i \prec \tau_l \prec \tau_j$, and zero otherwise. Let $\vec{\lambda} := ( \lambda(\tau_1), \ldots, \lambda(\tau_M))$ and $\vec{I} := ((-1)^{n(\tau_1)}, \ldots, (-1)^{n(\tau_M)})$. Condition \eqref{e:cancelation_of_signs} means that
$$
(\operatorname{Id} + A + \cdots + A^M) \vec{\lambda} = \vec{I}.
$$
Since $A$ is a nilpotent matrix and $A^M = 0$, this implies that $\vec{\lambda} = (I-A) \vec{I}$. Then, using that $M \leq 2^n$ we obtain $\vert \chi(\mathbf{J}) \vert \leq 4^n$. 
\medskip

In the general case, notice that 
$\mathbf{J} \in \mathbf{ad}(\gamma)$ can be decomposed in a unique way as:
$$
\mathbf{J} = \mathbf{G}_1 \cup \cdots \cup \mathbf{G}_L, \quad L \leq 2^n,
$$
where $\# \mathbf{G}_l  \leq 2$ and if $R_0(l),R^0(l) \in \mathbf{G}_l$, then $R_0(l)=(B_0,a)$, $R^0(l) = (B^0,a)$ and $B^0 \subset B_0$. Then, we can write
$$
[J] = \bigotimes_{l=1}^L \big( \mathbf{J} \cup [J]_1^l \cup \cdots \cup [J]_{k(l)}^l \big),
$$
where we define $[J]_1^l$ as before but only counting resonances that are in between $R_0(l)$ and $R^0(l)$. Following the previous argument separately for every $l \in \{1, \ldots, L\}$ and taking the product, the result holds.
\end{proof}

\subsection{Convergence of the Lindstedt series}

In this section we prove the convergence of the Lindstedt series, by reducing it to \cite[Prop. 3]{Elia89}.

\begin{teor} 
\label{t:bound_lindstedt_series}
Let $0 < \sigma < s_0$. Then, there exists $C = C(s_0,\sigma) > 0$ such that
\begin{align*}
\Vert H_n \Vert_{s_0-\sigma}  \leq  C^n \Vert V \Vert_{s_0}^n; \quad 
\Vert R'_n \Vert_{s_0-\sigma}  \leq C^n \Vert V \Vert_{s_0}^n.
\end{align*}
In particular,  if $\epsilon = \epsilon(V,\omega) > 0$ is sufficiently small, then $\Vert H \Vert_{s,\epsilon} < \infty$ and $\Vert R' \Vert_{s,\epsilon} < \infty$.
\end{teor}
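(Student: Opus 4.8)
The plan is to estimate $H_n$ and $R'_n$ directly from the explicit tree representation of Theorem~\ref{t:main_lindstedt_teor}. First I would insert those formulas into the definition \eqref{e:norm} of $\Vert\cdot\Vert_{s_0-\sigma}$. Since the Fourier transform of $H_n$ at the frequency $(v_1+\cdots+v_n,\,\eta_1+\cdots+\eta_n)$ is a sum over $\delta\in\Delta(n)$ and $v\in(\mathbb{Z}^d)^n$ of an integral of $\mathbf c(\delta)\,\widehat V(w_1)\cdots\widehat V(w_n)\,\sigma_\hbar(\mathbf w,\delta)\,\Omega_1(\delta,v)$, the triangle inequality reduces the whole problem to a bound of the form
\[
\sum_{\delta\in\Delta(n)}|\mathbf c(\delta)|\sum_{v\in(\mathbb{Z}^d)^n}\int_{(\mathbb{R}^d)^n}|\Omega_1(\delta,v)|\,|\sigma_\hbar(\mathbf w,\delta)|\prod_{j=1}^n|\widehat V(w_j)|\,e^{(s_0-\sigma)|\mathbf w|_1}\,d\eta\;\le\;C^n\Vert V\Vert_{s_0}^n,
\]
where $|\mathbf w|_1=\sum_j(|k_j|+|\eta_j|)$, together with the analogous bound with $\Omega_2$ for $R'_n$ (using the recursive relation between $\Omega_2$ and $\Omega_1$). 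There are three independent sources of growth to control: the combinatorial proliferation of trees, the symbolic factor $\sigma_\hbar$, and the small divisors.

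The first and third are the ``classical'' parts. For the number of trees I would use Lemma~\ref{l:counting_trees}, $\#\Delta(n)\le4^n$, together with the fact that $\mathbf c(\delta)$ is a product of $n-1$ reciprocals of partial sums --- hence $|\mathbf c(\delta)|\le1$ and, for the worst trees, as small as $1/(n-1)!$ --- a factorial decay that is crucial for the next step. For the small divisors I would pass from Lemma~\ref{l:first_description} to Lemma~\ref{l:second_description}, so that the sign cancellations among overlapping resonance families are already built in, and then invoke Eliasson's key estimate \cite[Prop.~3]{Elia89} (our Lemma~\ref{l:lemma_eliasson}) as a black box: this bounds $|\Omega_1(\delta,v)|$, \emph{after} the cancellations, by $C^n$ times a quantity summable against $\prod_j|\widehat V(w_j)|\,e^{s_0|w_j|}$, absorbing the polynomial-in-momenta growth coming from the Diophantine divisors into a fixed fraction of the analyticity radius.

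The genuinely new, and hardest, point is the control of $\sigma_\hbar(\mathbf w,\delta)$ \emph{uniformly as $\hbar\to0^+$}. Here I would use the elementary inequality $|\sin x|\le|x|$ --- the one bound insensitive to $\hbar$ --- to estimate each factor $\sigma^1_\hbar$ by the corresponding symplectic bracket, hence, via the recursive structure \eqref{e:graph_commutator}--\eqref{e:multilinear_map}, to bound $|\sigma_\hbar(\mathbf w,\delta)|$ by a polynomial of degree $2(n-1)$ in the momenta $|w_j|$, organized as a product over the $n-1$ edges in which each variable occurs a controlled number of times and with at least $n-1$ ``leaf'' factors $\prod_j|w_j|$. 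The polynomial growth is then absorbed into a loss of analyticity of size $\sigma/2$: the estimate $|w|^p e^{-\tfrac{\sigma}{2}|w|}\lesssim p!\,(2/\sigma)^p$ produces factorials, but, by Stirling together with the arithmetic--geometric mean inequality applied to the leaf factors, these are exactly matched by the $1/(n-1)!$-type decay of $\mathbf c(\delta)$, so that only a geometric loss $C(s_0,\sigma)^n$ survives. This is the step that forces one to work at the level of the semiclassical symbolic calculus, using the loss-of-analyticity and commutator lemmas of Appendix~\ref{s:tools_of_analytic_calculus} (as in \cite{Pau12}), rather than merely with the operator algebra; it is where I expect the bookkeeping to be most delicate, since one must keep track simultaneously of the degree distribution produced by $\sigma_\hbar$, the resonance structure entering $\Omega_1$, and two competing factorials.

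Finally I would assemble the pieces: once all polynomial weights have been absorbed, the remaining $v$-sum and $\eta$-integral factor into $n$ identical copies of $\sum_{k\in\mathbb{Z}^d}\int_{\mathbb{R}^d}|\widehat V(k,\eta)|\,e^{s_0(|k|+|\eta|)}\,d\eta=\Vert V\Vert_{s_0}$, which yields $\Vert H_n\Vert_{s_0-\sigma}\le C^n\Vert V\Vert_{s_0}^n$ and the same for $R'_n$. Setting $s:=s_0-\sigma$ and choosing $\epsilon=\epsilon(V,\omega)$ with $\epsilon\,C\,\Vert V\Vert_{s_0}<1$, the series defining $\Vert H\Vert_{s,\epsilon}=\sum_{m\ge1}\epsilon^{m-1}\Vert H_m\Vert_s$ and $\Vert R'\Vert_{s,\epsilon}=\sum_{m\ge1}\epsilon^m\Vert R'_m\Vert_s$ are dominated by convergent geometric series, giving the last assertion.
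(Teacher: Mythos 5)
Your overall strategy is the same as the paper's: insert the tree representation of Theorem \ref{t:main_lindstedt_teor} into the norm, count trees with Lemma \ref{l:counting_trees}, treat Eliasson's estimate (Lemma \ref{l:lemma_eliasson}) as a black box for the small divisors, and absorb the polynomial growth of $\sigma_\hbar(\mathbf{w},\delta)$ (via $\vert \sin x\vert \le \vert x \vert$, uniformly in $\hbar$) into a loss of analyticity balanced against the factorial decay of $\mathbf{c}(\delta)$ — the latter is exactly the content of Lemma \ref{e:analytical_part}, which the paper proves by induction with the Beta-function inequality \eqref{e:beta_function_estimate}, so your sketch of that step, while vague on the bookkeeping, is in the right spirit.

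However, there is a genuine gap in your treatment of the small divisors. You claim that, after passing from Lemma \ref{l:first_description} to Lemma \ref{l:second_description}, the black box \cite[Prop.~3]{Elia89} bounds $\vert \Omega_1(\delta,v)\vert$ by $C^n$ times the polynomial weight $\prod_{v_j\neq 0}\vert v_j\vert^{3\gamma}$. It does not: the bounds \eqref{e:bound_omega_1}--\eqref{e:bound_omega_2} carry the extra factor $\rho(\delta,v)=\#\operatorname{\mathbf{ad}}(\gamma)$, the number of equivalence classes of admissible resonance families, and Eliasson's work leaves this quantity unestimated (the paper points this out explicitly after Lemma \ref{l:lemma_eliasson}). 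The sign cancellation of Lemma \ref{l:second_description} only reduces the sum to one term per class; it does not by itself control how many classes there are. Closing this hole is precisely the content of Proposition \ref{p:combinatorial_lemma}, whose proof (the injective map $\mathcal{T}:\operatorname{\mathbf{ad}}(\gamma)\to\mathfrak{B}(A)^3$ into triples of free-resonance subsets, via maximal covering decompositions and the coloring argument of Lemma \ref{l:injective}) gives $\rho(\delta,v)\le 2^{3n}$ and is the main new combinatorial input of this section. Without an argument of this kind your estimate does not close, since a priori the number of admissible families could grow super-exponentially in $n$.
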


The main part of the proof of the convergence of the Lindstedt series lies in \cite[Prop. 3]{Elia89}. 
We define the coefficients
\begin{equation}
\label{e:rho}
\rho(\delta,v) := \sum_{\mathbf{J} \in \operatorname{\mathbf{ad}}(\gamma)} \vert \chi(\mathbf{J}) \vert; \quad \rho^*(\delta,v) := \sum_{\mathbf{J} \in \operatorname{\mathbf{ad}}^*(\gamma)} \vert \chi(\mathbf{J}) \vert
\end{equation}
In \cite{Elia89}, the estimate of the coefficients $\Omega_j$, for $j = 1,2$, is reduced to the estimate of $\rho, \rho^*$ by using a generalization of several lemmas of Siegel. 

\begin{lemma}\label{l:lemma_eliasson} Let $\delta \in \Delta(n)$. Then:
\begin{align}
\label{e:bound_omega_1}
\vert \Omega_1(\delta,v) \vert & \leq \rho(\delta,v) (2^{4 \gamma + 3} \varsigma )^n \prod_{v_j \neq 0} \vert v_j \vert^{3 \gamma}; \\[0.2cm]
\label{e:bound_omega_2}
\vert \Omega_2(\delta,v) \vert &  \leq \rho^*(\delta,v) ( 2^{4 \gamma + 3} \varsigma )^n \prod_{\substack{v_j \neq 0 \\ j \neq n}} \vert v_j \vert^{3 \tau},
\end{align}
where $\gamma$ and $\varsigma$ are given by \eqref{e:diophantine}.
\end{lemma}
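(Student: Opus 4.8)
The plan is to deduce \eqref{e:bound_omega_1}--\eqref{e:bound_omega_2} from the sign-resolved closed formula of Lemma \ref{l:second_description}, so that the combinatorial count $\rho(\delta,v)=\#\operatorname{\mathbf{ad}}(\gamma)$ is factored out and what remains is a purely arithmetic estimate for the product of small divisors attached to a single admissible family. Concretely, Lemma \ref{l:second_description}, together with $\vert(-1)^{\kappa(\mathbf{J})}\vert=1$ and the triangle inequality, gives
$$
\vert \Omega_1(\delta,v) \vert \;\leq\; \sum_{\mathbf{J} \in \operatorname{\mathbf{ad}}(\gamma)} \; \prod_{c \in A} \vert \omega \cdot \gamma_{\mathbf{J}}(c) \vert^{-1} \;\leq\; \rho(\delta,v) \, \max_{\mathbf{J} \in \operatorname{\mathbf{ad}}(\gamma)} \; \prod_{c \in A} \vert \omega \cdot \gamma_{\mathbf{J}}(c) \vert^{-1},
$$
and, by the very definition of admissibility, $\gamma_{\mathbf{J}}(c) \neq 0$ for every $c \in A$, so \eqref{e:diophantine} applies to each of the $n=\#A$ factors and yields $\vert \omega \cdot \gamma_{\mathbf{J}}(c) \vert^{-1} \leq \varsigma^{-1} \vert \gamma_{\mathbf{J}}(c) \vert^{\gamma}$. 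Thus \eqref{e:bound_omega_1} is reduced to the estimate
$$
\prod_{c \in A} \vert \gamma_{\mathbf{J}}(c) \vert^{\gamma} \;\leq\; \big(2^{4\gamma+3}\big)^{n}\!\!\prod_{\substack{1 \leq j \leq n\\ v_j \neq 0}}\!\! \vert v_j \vert^{3\gamma}, \qquad \mathbf{J} \in \operatorname{\mathbf{ad}}(\gamma),
$$
the remaining factor $\varsigma^{-n}$ coming from the $n$ applications of \eqref{e:diophantine}.

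For this last, genuinely arithmetic, inequality I would argue by induction on $n$. Each $\gamma_{\mathbf{J}}(c)$ is by construction a partial sum of the integer vectors $v(b)$ lying on the subtree below $c$, truncated below the innermost resonance of $\mathbf{J}$ enclosing $c$, so the elementary bound $\sum_i a_i \leq 2^{\#\{i\}}\prod_i a_i$ for $a_i\geq1$ controls each isolated factor. The real mechanism, however, is that an admissible family contains no cluster which is itself a resonance, and this is where a Siegel-type lemma on chains of nested momenta enters: along a totally ordered chain $c_0 \prec c_1 \prec \cdots \prec c_p$ whose momenta $\gamma_{\mathbf{J}}(c_i)$ are all abnormally small, one rewrites the differences $\gamma_{\mathbf{J}}(c_{i+1})-\gamma_{\mathbf{J}}(c_i)$ as sums of incoming momenta and applies \eqref{e:diophantine} to suitable integer combinations; admissibility forbids those combinations from vanishing, and one concludes that only a bounded number of nested Diophantine quotients can accumulate before the external momentum forces a large denominator, which is what produces the polynomial exponent $3$. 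Organising the nodes of $A$ into maximal such clusters, estimating each cluster by this lemma and multiplying along the tree then closes the induction with an exponential constant $2^{O(n)}$.

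The bound \eqref{e:bound_omega_2} for $\Omega_2$ follows by the same scheme applied to the recursion \eqref{e:Omega_1}--\eqref{e:Omega_2}: since $\Omega_2(\delta,v)$ is nonzero only when $\Sigma(v)=\gamma(n)=0$, the root $n$ of the tree contributes no small divisor, which is exactly why the product on the right-hand side of \eqref{e:bound_omega_2} ranges over $j\neq n$ (the exponent $\tau$ there is to be read as $\gamma$).

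I expect the main obstacle to be precisely this Siegel-type chain lemma together with the associated cluster bookkeeping: proving rigorously that admissible families carry no accumulation of resonances, and extracting the sharp exponent $3\gamma$, is the technical heart of \cite{Elia89} (and of \cite{Eliasson90,Eliasson96}). This is the reason we keep $\rho(\delta,v)$ in the statement and invoke Lemma \ref{l:lemma_eliasson} as a black box rather than reproducing the full argument here.
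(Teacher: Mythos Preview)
The paper does not prove this lemma at all: immediately before the statement it says ``In \cite{Elia89}, the estimate of the coefficients $\Omega_j$, for $j = 1,2$, is reduced to the estimate of $\rho$ by using a generalization of several lemmas of Siegel. Here we just state this reduction,'' and the lemma is then cited verbatim as \cite[Prop.~3]{Elia89} and used as a black box throughout Section~\ref{s:convergence_of_the_lindstedt_series}. Your final paragraph reaches exactly the same conclusion, so in that sense your proposal matches the paper perfectly.

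What you add beyond the paper is the reduction sketch in the first two paragraphs: starting from Lemma~\ref{l:second_description}, pulling out the factor $\rho(\delta,v)=\#\operatorname{\mathbf{ad}}(\gamma)$, applying \eqref{e:diophantine} termwise, and isolating the remaining arithmetic inequality $\prod_{c\in A}\vert\gamma_{\mathbf J}(c)\vert^{\gamma}\le (2^{4\gamma+3})^{n}\prod_{v_j\neq 0}\vert v_j\vert^{3\gamma}$ as the place where Siegel's chain lemma enters. This is a correct and helpful reading of how Eliasson's argument is organized, and it is consistent with the paper's own remark about ``a generalization of several lemmas of Siegel.'' Note one small point: Eliasson's original \cite[Prop.~3]{Elia89} does not pass through the equivalence classes $\operatorname{\mathbf{ad}}(\gamma)$ of Lemma~\ref{l:second_description}; the paper introduces that refinement precisely so that $\rho(\delta,v)$ can later be bounded by Proposition~\ref{p:combinatorial_lemma}. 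Your reduction via Lemma~\ref{l:second_description} is therefore a slight reorganisation of the original, but the Siegel part---which you correctly identify as the genuine gap you are not filling---is unchanged.
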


\begin{remark}
 Lemma \ref{l:lemma_eliasson} is essentially equivalent to \cite[Prop. 3]{Elia89}. However, in \cite{Elia89} there is not any explicit reference to the coefficient $\rho$ given by \eqref{e:rho}, but to the number of summands in \eqref{e:first_omega}, that is, in this reference $\rho(\delta,v) = \# \operatorname{ad}(\gamma)$.  
However, by the cancelation of signs given by Lemma \ref{l:second_description}, the number of non-vanishing terms in \eqref{e:first_omega} is bounded by \eqref{e:rho}.
\end{remark} 
\begin{proof}
 We prove Lemma \ref{l:lemma_eliasson} from \cite[Prop. 2]{Elia89} by adapting the proof of \cite[Prop. 3]{Elia89}. We proceed by induction. The case $n = 1$ holds so we assume that it holds for $n-1$. By the recurrence relations described in Theorem \ref{t:main_lindstedt_teor}, the estimate of $\Omega_2$ follows from that of $\Omega_1$, so we concentrate on $\Omega_1$. Let $\delta \in \Delta(n)$, let us consider the family
 $$
 \mathcal{E}_1(\delta) = \{ (B,a): B = \{ b _1, \ldots, b_s \}  \},
 $$ 
such that $b_1, \ldots, b_s$ are pairwise unrelated in $A(a) \setminus \{a \}$. Let $\mathcal{E}_0(\delta)$ the family of pairs $(B,a)$ as before with $\# B = 1$. We have $\mathcal{E}_0(\delta) \subset \mathcal{E}_1(\delta)$. Let $\mathcal{E}(\delta)$ be such that $\mathcal{E}_0(\delta) \subset \mathcal{E}(\delta) \subset \mathcal{E}_1(\delta)$,
then one can define coefficients $\Omega_1(\delta,v; \mathcal{E})$ and $\Omega_2(\delta,v; \mathcal{E})$ by changing $\Gamma_1(\delta)$ into 
$$
\Gamma(\delta) = \{ B \, : \, (B,n) \in \mathcal{E}(\delta) \}
$$ 
in the definition \eqref{e:definition_omega} of $\Omega$. Similarly, one can define $\operatorname{ad}(\gamma,\mathcal{E})$ and $\operatorname{\mathbf{ad}}(\gamma,\mathcal{E})$ (and similarly $\operatorname{ad}^*(\gamma,\mathcal{E})$ and $\operatorname{\mathbf{ad}}^*(\gamma,\mathcal{E})$) by considering only resonances belonging to $\mathcal{E}$. Let $A' \subset A$ be such that $\delta/A'$ is a simple index set, we define $\mathcal{E}(\delta/A')$ as the family of pairs $(B,a) \in \mathcal{E}(\delta)$ such that $a \in A'$ and $B = \{b_1, \ldots, b_s \} \subset A'$ consists of pairwise unrelated elements in $A(a) \setminus \{a\}$. 

Let $(B_1,a) \in \mathcal{E}(\delta) \setminus \mathcal{E}_0(\delta)$ be such that for no $(B_2,a) \in \mathcal{E}(\delta) \setminus \mathcal{E}_0(\delta)$, $A(a)\setminus A(B_2) \supset A(A) \setminus A(B_1)$ (such $(B_1,a)$ always exists). Set $\mathcal{E}'(\delta) = \mathcal{E}(\delta) \setminus \{ (B_1,a) \}$. Then:
\begin{equation}
\label{e:omega_decomposition}
\Omega_1(\delta, v; \mathcal{E}) = \Omega_1(\delta,v; \mathcal{E}') - \Omega_2(\delta,v /A(a) \setminus A(B_1); \mathcal{E}') \Omega_1(\delta',v'; \mathcal{E}'),
\end{equation}
where $\delta' : A \setminus (A(a) \setminus A(B_1)) \cup \{ a \} \to \mathbb{Z}$ is defined by $\delta/A \setminus (A(a) \setminus A(B_1))$ on $A \setminus (A(a) \setminus A(B_1))$ and $\delta'(a) = \# B_1$; and $v' : A \setminus (A(a) \setminus A(B_1)) \cup \{ a \} \to \mathbb{Z}^d$ is defined to be equal to $v$ everywhere except at $a$ where $v'(a) = 0$. Notice that $\Omega_2(\delta,v /A(a) \setminus A(B_1); \mathcal{E}')$ is non vanishing only if $(B_1,a)$ is a resonance. Denoting 
$$
\gamma_{\mathbf{J}} := \prod_{c \in A} ( i \omega \cdot \gamma_{\mathbf{J}}(c))^{-1}, \quad \gamma^*_{\mathbf{J}}:= \prod_{c \in A \setminus \{n \}} ( i \omega \cdot \gamma_{\mathbf{J}}(c))^{-1},
$$
using \eqref{e:omega_decomposition} and Lemma \ref{l:second_description}, this gives:
\begin{align*}
\sum_{ \mathbf{J} \in \operatorname{\mathbf{ad}}(\gamma,\mathcal{E})}   \chi(\mathbf{J}) \gamma_{\mathbf{J}} & = \sum_{ \mathbf{J} \in \operatorname{\mathbf{ad}}(\gamma,\mathcal{E}')}   \chi(\mathbf{J}) \gamma_{\mathbf{J}} - \sum_{ \mathbf{J} \in \operatorname{\mathbf{ad}}(\gamma,\mathcal{E}) \setminus \mathbf{ad}(\gamma, \mathcal{E}')}   \chi(\mathbf{J}) \gamma_{\mathbf{J}}  \\[0.2cm]
 & =  \sum_{ \mathbf{J} \in \operatorname{\mathbf{ad}}(\gamma,\mathcal{E}')}   \chi(\mathbf{J}) \gamma_{\mathbf{J}} -  \left( \sum_{ \mathbf{J}_1 \in \operatorname{\mathbf{ad}}^*(\gamma_1,\mathcal{E}')}   \chi(\mathbf{J}_1) \gamma^*_{\mathbf{J}_1}   \right) \left( \sum_{ \mathbf{J}_2 \in \operatorname{\mathbf{ad}}(\gamma_2,\mathcal{E}')}   \chi(\mathbf{J}_2) \gamma_{\mathbf{J}_2} \right),
\end{align*}
where $\gamma_1 = \gamma/ A(a) \setminus A(B_1) $ and $\gamma_2 = \gamma_{\delta',v'}$. Moreover, notice that 
$$
\mathbf{J} \in \operatorname{\mathbf{ad}}(\gamma,\mathcal{E}) \setminus \mathbf{ad}(\gamma, \mathcal{E}')
$$
if and only if $\mathbf{J} \in \operatorname{\mathbf{ad}}(\gamma,\mathcal{E})$ and $(B_1,a) \in \mathbf{J}$. Then, for every $ \mathbf{J} \in \operatorname{\mathbf{ad}}(\gamma,\mathcal{E}) \setminus \mathbf{ad}(\gamma, \mathcal{E}')$ there exist unique $\mathbf{J}_1 \in \operatorname{\mathbf{ad}}^*(\gamma_1,\mathcal{E}')$ and $\mathbf{J}_2 \in \operatorname{\mathbf{ad}}(\gamma_2,\mathcal{E}')$ such that $\mathbf{J} = \mathbf{J}_1 \cup \mathbf{J}_2$. 
Finally, observing that $ \vert \chi(\mathbf{J}) \vert = \vert \chi(\mathbf{J}_1) \vert \vert \chi(\mathbf{J}_2) \vert$, the claim holds by iterating this decomposition and using the induction hypothesis.
\end{proof}

After \cite[Prop. 3]{Elia89}, the proof of the convergence of the Lindstedt series in \cite{Elia89} is not straightforward since it is not provided any estimate on the number $\rho(\delta,v)$. Actually, the cancelation of signs given here by Lemma \ref{l:second_description} is not exploited in \cite{Elia89}, which makes the end of the proof more involved. Alternatively, we prove the following:

\begin{prop} 
\label{p:combinatorial_lemma}
There exists a universal constant $C > 0$ such that, for every $\delta \in \Delta(n)$ and $v \in (\mathbb{Z}^d)^n$, $\rho(\delta,v) \leq C^{n}$ and $ \rho^*(\delta,v) \leq C^n$. 
\end{prop}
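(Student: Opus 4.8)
The plan is to bound $\rho(\delta,v) = \#\operatorname{\mathbf{ad}}(\gamma)$ by counting, not the admissible families $J \in \operatorname{ad}(\gamma)$ directly, but only the \emph{minimal representatives} $\mathbf{J}$ of the equivalence classes, and to show that each such $\mathbf{J}$ is determined by a bounded amount of combinatorial data that can be encoded in at most $C^n$ ways. Recall from the definition of $\sim$ and the structure used in the proof of Lemma \ref{l:second_description} that a minimal representative $\mathbf{J}$ has the special shape $\mathbf{J} = \bigcup_{a \in A} E_a$ with $\# E_a \leq 2$ for every $a \in A$; that is, for each node $a$ at most two resonances of $\mathbf{J}$ have $a$ as their right endpoint, and when there are two, say $E_a = \{R_0(a), R^0(a)\}$, they are nested, $\mathbf{J}\text{-minimal}$ forcing $R_0(a)\subset R^0(a)$. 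So a minimal admissible family is specified by choosing, for each $a \in A$, a set $E_a$ of at most two mutually nested resonances with right endpoint $a$.

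First I would reduce to counting a single resonance with a fixed right endpoint $a$. A resonance $(B,a)$ is by definition a subset $B = (b_1,\dots,b_k)$ of pairwise unrelated elements of $A(a)\setminus\{a\}$; but the key structural constraint is that $B$ is an \emph{antichain that disconnects} $a$ from part of the subtree, and once one records the associated ``fat'' set $\mathcal{B}_R = A(a)\setminus A(B)$ (with which $R$ is identified), $B$ is recovered as the minimal antichain above $\mathcal{B}_R$. The number of such $\mathcal{B}_R$ is at most the number of \emph{order ideals} (down-sets) of the finite tree $A(a)$ that contain $a$. For a rooted tree on $m$ nodes the number of subtrees containing the root, or more generally the number of down-closed subsets, is at most $2^m \le 2^n$: indeed a subtree containing the root is determined by an arbitrary subset of ``cut edges'', and there are at most $n-1$ edges. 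Hence there are at most $2^n$ resonances with a given right endpoint $a$, so at most $4^n$ choices for an unordered pair, and therefore at most $(1 + 4^n) \le 5^n$ choices of $E_a$ for each fixed $a$. Multiplying over the $n$ nodes gives $\rho(\delta,v) \le 5^{n^2}$, which is the wrong order; the actual work is to beat the naive product.

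The main obstacle, and the real content, is passing from $C^{n^2}$ to $C^n$. The point is that the resonances in a minimal family $\mathbf{J}$ are not free across nodes: their supports $\operatorname{supp}$ are governed by the non-overlapping condition, so $\mathbf{J}$ is really a \emph{nested/laminar system} of subsets of $A$ (together with, at each right endpoint, the data of which antichain $B$ is cut). A laminar family on $n$ elements has at most $2n-1$ members, so $\#\mathbf{J} \le 2n-1$; thus one should count, not node-by-node, but by choosing the laminar family of fat sets $\{\mathcal{B}_R : R \in \mathbf{J}\}$ first, and only then the endpoints and antichains. The number of laminar families on $n$ elements is bounded by $C^n$ (each is an unordered forest structure on $\le 2n-1$ blocks, encodable by a balanced-parenthesis / Catalan-type word of length $O(n)$), and once the blocks $\mathcal{B}_R$ are fixed, each right endpoint $a$ and each antichain $B$ is \emph{determined} by $\mathcal{B}_R$ and the ambient tree $\delta$ (as noted above, $B$ is the minimal antichain above $\mathcal{B}_R$ inside $A(a)$, and $a$ is forced by $\gamma(a)=\sum\gamma(b_i)$ together with minimality of $\mathbf{J}$) up to a further bounded multiplicity coming from the at-most-two resonances sharing a right endpoint. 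Putting these together, I would conclude
\[
\rho(\delta,v) \;=\; \#\operatorname{\mathbf{ad}}(\gamma) \;\le\; (\text{\# laminar families on } \le n \text{ blocks}) \cdot 2^{\,2n-1} \;\le\; C^n,
\]
with $C$ universal (independent of $d$, of $\delta$, and of $v$). I expect the delicate step to be justifying rigorously that, after fixing the laminar system of fat sets, the remaining combinatorial choices (endpoints $a$, antichains $B$, and the dichotomy $R_1\subset R_2$ vs. $R_1,R_2$ disjoint) contribute only a factor $C^n$ rather than a super-exponential factor; this is where one must use minimality of $\mathbf{J}$ and Lemma \ref{l:second_description} to rule out redundant families, exactly as the equivalence relation $\sim$ was designed to do.
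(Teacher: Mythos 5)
There is a genuine gap at the central counting step. Your bound rests on the claim that ``the number of laminar families on $n$ elements is bounded by $C^n$'' because each such family is ``encodable by a balanced-parenthesis/Catalan-type word of length $O(n)$''. That claim is false: a Catalan word only records the nesting \emph{shape} of the family, not which ground elements lie in which block, and that assignment is what costs a super-exponential factor. Concretely, every (unordered) binary hierarchy on $n$ labelled points yields a distinct laminar family, and there are $(2n-3)!!\sim (2n/e)^{n-1}$ of these, which is not $O(C^n)$. So even granting your (correct) observations that $\#E_a\le 2$ for a minimal representative, that each fat set $\mathcal{B}_R$ determines its endpoint $a$ (its maximum) and its antichain $B$ (the maximal elements of $A(a)\setminus\mathcal{B}_R$), and that $\#\mathbf{J}\le 2n-1$, the step ``number of admissible laminar systems of fat sets $\le C^n$'' is exactly the assertion to be proved, not a known fact about laminar families. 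Note also that the equivalence relation $\sim$ only collapses nested resonances sharing the \emph{same} right endpoint; long nested chains with \emph{distinct} endpoints survive in $\operatorname{\mathbf{ad}}(\gamma)$, so laminarity plus the tree structure of the blocks does not by itself rule out the kind of ``label each element by the time it enters the chain'' freedom that produces super-exponential counts. Any repair must genuinely use the zero-sum condition $\sum_{b\in\mathcal{B}_R}v(b)=0$, admissibility ($\gamma_J(c)\neq 0$) and minimality — which is precisely the part you defer to ``the delicate step''.

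For comparison, the paper does not count laminar families at all: it constructs an explicit map $\mathcal{T}$ from $\operatorname{\mathbf{ad}}(\gamma)$ into triples of \emph{zero-sum} subsets of $A$ (the set $\mathfrak{B}(A)$ of ``free resonances'', of cardinality $\le 2^n$), built from the layer decomposition $\mathbf{J}=\mathbf{I}_1\cup\cdots\cup\mathbf{I}_k$ into successive maximal resonances (via an alternating union of their fat sets), together with the unions of fat sets over the maximal layer $\mathbf{I}_1$ and the minimal layer $\mathbf{i}_1$. The entire difficulty is then the injectivity of $\mathcal{T}$ (Lemma \ref{l:injective}), whose reconstruction argument uses admissibility through Lemma \ref{l:key_intersection_lemma} and the uniqueness of maximal covering decompositions; this yields $\rho(\delta,v)\le 2^{3n}$. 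Your proposal would need an analogous injective encoding (or an honest bound on admissible minimal laminar systems) before the multiplication of factors you write down can be justified.
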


\begin{remark}
We prove Proposition \ref{p:combinatorial_lemma} with  $C = 32$, although this constant is not shown to be sharp. A particular case of this proposition is given in \cite[Lemma 15]{Eliasson96} with $C = 2$.
\end{remark}

\begin{proof}

We prove only the inequality $\rho(\delta,v) \leq C^n$. The case for $\rho^*$ follows from this. Let $A = \{1, \ldots, n \}$, we define the set $\mathbf{B}(A)$ of \textit{free resonances} in $A$ by saying that $\mathcal{B} \in \mathbf{B}(A)$ if $\mathcal{B} \subset A$ and $\sum_{b \in \mathcal{B}} v(b) = 0$. Let $\iota: (A, \delta) \to A$ be the natural identification. We have the trivial bound $\# \mathbf{B}(A) \leq 2^n$.

We next define the following map $T:  \operatorname{\mathbf{ad}}(\gamma) \to \mathbf{B}(A)$. Let $\mathbf{J} \in \operatorname{\mathbf{ad}}(\gamma)$, we decompose 
\begin{equation}
\label{e:decomposition_family}
\mathbf{J} = \mathbf{I}_1 \cup \cdots \cup \mathbf{I}_k,
\end{equation}
where $\mathbf{I}_1$ is the set of maximal\footnote{Let $\mathbf{J} \in \mathbf{ad}(\gamma)$, a resonance $R \in \mathbf{J}$ is said to be maximal if there is no resonance $R' \in \mathbf{J}$ such that $R \varsubsetneq R'$.} resonances in $\mathbf{J}$, $\mathbf{I}_2$ is the set of maximal resonances in $\mathbf{J} \setminus \mathbf{I}_1$ and so on. Let $\mathbf{i}_1$ be the set of minimal resonances in $\mathbf{J}$. We set:
$$
T( \mathbf{J}) := \iota \Big( \Big( \Big( \bigcup_{R \in \mathbf{I}_1} \mathcal{B}_R  \setminus  \bigcup_{R \in \mathbf{I}_2} \mathcal{B}_R \Big) \cup \bigcup_{R \in \mathbf{I}_3} \mathcal{B}_R \Big) \setminus \cdots \bigcup_{R \in \mathbf{I}_k} \mathcal{B}_R \Big) .
$$
We also define the map $\mathcal{T} : \operatorname{\mathbf{ad}}(\gamma) \to \mathbf{B}(A) \times \mathbf{B}(A) \times \mathbf{B}(A)$ by
$$
\mathcal{T}( \mathbf{J}) := \Big( T(\mathbf{J}) , \iota \Big( \bigcup_{R \in \mathbf{i}_1} \mathcal{B}_R \Big) , \iota \Big( \bigcup_{R \in \mathbf{I}_1} \mathcal{B}_R \Big) \Big).
$$ 

\begin{lemma}
\label{l:key_intersection_lemma}
Let $\mathbf{J} \in \operatorname{\mathbf{ad}}(\gamma)$ and $R \in \mathbf{J}$. Assume that $\mathcal{B}_R = \mathcal{B}_{R_1} \cup \mathcal{B}_{R_2}$ with $R_1,R_2 \subset R$ disjoint. Then:
\begin{enumerate}
\item $R_1, R_2 \notin \mathbf{J}$.

\item There exists $R' \in \mathbf{J}$ such that $R' \subset R$, and $R'$ overlaps with $R_1$ and $R_2$. In particular, $R \notin \mathbf{i}_1$.
\end{enumerate}
\end{lemma}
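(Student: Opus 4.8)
The plan is to argue by the structure of the equivalence relation $\sim$ and the minimality of $\mathbf{J}$ in its class. For part (1), suppose for contradiction that $R_1 \in \mathbf{J}$ (the case $R_2 \in \mathbf{J}$ is identical). Since $\mathcal{B}_R = \mathcal{B}_{R_1} \cup \mathcal{B}_{R_2}$ with $R_1, R_2$ disjoint and both contained in $R$, the defining relation $\gamma(a) = \gamma(b_1) + \cdots + \gamma(b_k)$ for $R = (B,a)$ decomposes: writing $R = (B,a)$, $R_1 = (B_1, a)$ (note that a resonance strictly inside $R$ with $\mathcal{B}_{R_1} \subset \mathcal{B}_R$ must share the same top node $a$, since $\mathcal{B}_{R_1}$ is an up-set below $a$ minus an antichain), one sees that the complementary set $\mathcal{B}_{R_2} = \mathcal{B}_R \setminus \mathcal{B}_{R_1}$ carries zero $v$-sum, so $R_2$ is also a resonance with the same top node, and $R_1 \subset R_2' \subset R$ or a similar nesting holds, producing a family $J' \supsetneq \mathbf{J}$ with $J' \sim \mathbf{J}$ — contradicting that $\mathbf{J}$ is the minimal element of its class. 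More precisely, I would exhibit the chain $R_1 \subset R \subset R$ with the intermediate resonance built from $\mathcal{B}_{R_1} \cup (\text{part of } \mathcal{B}_{R_2})$; by the definition of $\sim$, adding such an intermediate resonance keeps us in $[\mathbf{J}]$, violating minimality. Hence $R_1, R_2 \notin \mathbf{J}$.

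For part (2), the key observation is that $R$ itself is a resonance of $\mathbf{J}$ whose support set $\mathcal{B}_R$ splits into two disjoint free pieces. If $R$ were a minimal resonance of $\mathbf{J}$ (i.e. $R \in \mathbf{i}_1$), then there is no resonance of $\mathbf{J}$ strictly inside $R$; but then one could \emph{add} either $R_1$ or $R_2$ to $\mathbf{J}$ to obtain an admissible family $J'$ (admissibility is preserved because splitting $\mathcal{B}_R$ into $\mathcal{B}_{R_1} \sqcup \mathcal{B}_{R_2}$ does not create any node $c$ with $\gamma_{J'}(c) = 0$ — for $c \in \; ]b_1,a[ \cup \cdots$ inside $R_1$, the value $\gamma_{J'}(c)$ is a partial sum over $A(c) \setminus A(B_1)$, which one checks is nonzero using admissibility of $\mathbf{J}$ together with the disjointness), and this $J'$ would satisfy $J' \sim \mathbf{J}$, again contradicting minimality. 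Therefore $R \notin \mathbf{i}_1$, so there exists $R' \in \mathbf{J}$ with $R' \subsetneq R$. It remains to show such an $R'$ must overlap both $R_1$ and $R_2$: if some $R'$ were non-overlapping with, say, $R_1$, then either $R' \subset R_1$ or $\mathcal{B}_{R'} \cap \mathcal{B}_{R_1} = \emptyset$; one pushes to a minimal such $R'$ and shows that in the first case $R_1$ can be reinserted into $\mathbf{J}$ as an intermediate resonance between $R'$ and $R$, and in the second case $R'$ lies inside $R_2$, so $R_2$ can be reinserted — both contradicting minimality. Hence every resonance of $\mathbf{J}$ strictly inside $R$ overlaps both $R_1$ and $R_2$, which in particular forces $R \notin \mathbf{i}_1$ and gives the claim.

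The main obstacle I anticipate is the careful bookkeeping in part (2) showing that admissibility is genuinely preserved when one reinserts $R_1$ or $R_2$ into $\mathbf{J}$: one must track how the functions $\gamma_J(c)$ change for $c$ in the support of the newly added resonance, and verify $\gamma_{J'}(c) \neq 0$ there by combining the hypothesis $\mathcal{B}_R = \mathcal{B}_{R_1} \sqcup \mathcal{B}_{R_2}$ (which is what makes the partial sums behave) with the admissibility of $\mathbf{J}$ itself. The disjointness of $R_1$ and $R_2$ is used crucially here, since it guarantees the two support sets do not interfere and that the intermediate resonances one constructs are genuine (have nonzero-free-sum complements at the right places). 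Once this bookkeeping is in place, the contradictions with minimality of $\mathbf{J} \in [\mathbf{J}]$ close both parts.
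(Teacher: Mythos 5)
Your argument does not work, and the failure is structural rather than a matter of missing bookkeeping. Both of your contradictions are obtained by \emph{adding} resonances to $\mathbf{J}$: you produce (or claim to produce) a family $J' \supsetneq \mathbf{J}$ with $J' \in \operatorname{ad}(\gamma)$ and $J' \sim \mathbf{J}$, and declare this incompatible with the minimality of $\mathbf{J}$. But $\mathbf{J}$ is by definition the \emph{smallest} member of its class $[J]$; the existence of strictly larger equivalent families is not only consistent with that, it is the generic situation (see the paper's example where $J_1 \sim J_2$ with $J_1 \subsetneq J_2$). A contradiction with minimality can only come from \emph{removing} a resonance from $\mathbf{J}$ while staying admissible and equivalent, which is not what you do. Moreover, the equivalences you invoke would not hold anyway: $J' \sim \mathbf{J}$ requires every added resonance $(C,d)$ to be sandwiched, $R_i \subset (C,d) \subset R_k$, between two members of $\mathbf{J}$ \emph{with the same top node} $d$ and nested antichains; for $R_2=(B_2,a_2)$ no such pair exists in general, and your claim that both pieces share the top node of $R$ is false: exactly one of $R_1,R_2$ contains the root $a$ of $\mathcal{B}_R$ (and then has top node $a$), while the other has top node $a_2 \prec a$ lying in the antichain $B_1$ of the first. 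So neither (1) nor (2) is established.

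The mechanism the paper uses is different and does not involve minimality at all: it exploits admissibility of $\mathbf{J}$ at the single node $a_2$. Ordering the pieces so that $a_2 \in B_1$, one has $A(a_2)\setminus A(B) = \mathcal{B}_{R_2}$, hence $\sum_{b \in A(a_2)\setminus A(B_2)} v(b) = 0$; since $a_2 \in \operatorname{supp}\mathbf{J}$ and $\gamma_{\mathbf{J}}(a_2) \neq 0$, the \emph{smallest} resonance of $\mathbf{J}$ whose support contains $a_2$ cannot be $R$ itself (otherwise $\gamma_{\mathbf{J}}(a_2)$ would equal that vanishing sum). This forces a resonance $R' \in \mathbf{J}$ with $R' \subsetneq R$ and $a_2 \in \operatorname{supp} R'$, which is then seen to overlap $R_1$ and $R_2$; part (1) follows immediately because $\mathbf{J}$ is a non-overlapping family, so it cannot contain $R_1$ or $R_2$ alongside $R'$. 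Your proposal never uses admissibility as the engine of the argument (only as a side condition to be ``preserved'' under insertions), and it never uses the non-overlapping property to deduce (1) from (2); these are the two ingredients that actually prove the lemma.
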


\begin{proof}
We can assume without loss of generality that $R_1 = (B_1,a_1)$, $R_2 = (B_2,a_2)$ and $a_2 \in B_1$. Then $a_2 \in \operatorname{supp} \mathbf{J}$. Since we have $\sum_{b \in A(a_2) \setminus A(B_2)} v(b) = 0$, then $\gamma_{\mathbf{J}}(a_2) \neq \sum_{b \in A(a_2) \setminus A(B_2)} v(b)$, hence there exists another resonance $R' \subsetneq R$, $R' \in \mathbf{J}$, such that $a_2 \in \operatorname{supp} R'$. This implies that $R'$ overlaps with $R_1$ and $R_2$. Then, clearly $R_1$ and $R_2$ can not belong to $\mathbf{J}$. This finishes the proof.

\end{proof}

\begin{lemma} 
Let $(B,\delta/B) \subset (A, \delta)$ be a simple index set such that $\sum_{b \in B} v(b) = 0$. Then there exists a unique maximal decomposition of the form $B = \mathcal{B}_{R_1} \cup \cdots \cup \mathcal{B}_{R_k}$ where $R_j$ are disjoint resonances. By maximal we mean that each $R_j$ can not be decomposed in a non trivial union of disjoint resonances. We call the set $\{R_1, \ldots, R_k \}$ the \textbf{maximal covering decomposition} of $(B, \delta/B)$. We can generalize this definition to index sets by considering separately each connected component. 
\end{lemma}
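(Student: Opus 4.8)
The plan is to translate the statement into elementary combinatorics of the underlying rooted tree, then prove existence by an easy induction and uniqueness by a lattice argument. First I would reformulate: for a resonance $R=(B',a)$ the set $\mathcal B_R=A(a)\setminus A(B')$ is exactly the connected ``truncated subtree'' obtained from $A(a)$ by deleting the full subtrees below the pairwise unrelated points of $B'$, and the resonance condition $\gamma(a)=\sum_i\gamma(b_i)$ is equivalent to $\sum_{c\in\mathcal B_R}v(c)=0$. Conversely, every subset $S\subseteq A$ that is $\prec$-connected, has a unique maximal element, is path-closed below it, and has vanishing $v$-sum is of this form, with $B'$ the set of maximal elements of $A(\max S)\setminus S$; hence $R\mapsto\mathcal B_R$ is a bijection, under which $R_1\subset R_2$, resp.\ disjointness of $R_1,R_2$, become inclusion, resp.\ disjointness, of the associated sets. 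The hypotheses on $(B,\delta/B)$ are precisely that $B$ is itself such a set, i.e.\ a resonance set. Call a \emph{resonance decomposition} of $B$ a partition $B=\mathcal B_{R_1}\sqcup\dots\sqcup\mathcal B_{R_k}$ into pairwise disjoint resonance sets, call $R$ (or $\mathcal B_R$) \emph{indecomposable} if $\{R\}$ is its only resonance decomposition, and let $\Pi(B)$ be the set of resonance decompositions of $B$ ordered by refinement. The claim becomes: $\Pi(B)$ has a unique minimum, its blocks are exactly the indecomposable resonances, and this minimum is the only decomposition into indecomposables.

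Existence is then immediate by induction on $\#B$: if $B$ is indecomposable, $\{B\}$ works; otherwise split $B$ into $\ge 2$ disjoint resonance sets, each of smaller cardinality, and decompose each by the inductive hypothesis. For uniqueness I would reduce everything to the statement that \emph{any two resonance decompositions of $B$ admit a common refinement lying in $\Pi(B)$}. Granting this, $\Pi(B)$ is a finite meet-semilattice and hence has a minimum $\mathcal D_0$; no block of $\mathcal D_0$ can be properly decomposed (that would produce a strictly finer element of $\Pi(B)$), so its blocks are indecomposable; and any decomposition into indecomposables admits no proper refinement in $\Pi(B)$, so its meet with $\mathcal D_0$ is itself, whence it coincides with $\mathcal D_0$. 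This yields existence and uniqueness simultaneously.

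It remains to construct common refinements, which is where I expect the real work to lie. I would argue by induction on $\#B$, peeling off from two given decompositions $\mathcal D=\{S_i\}$ and $\mathcal E=\{T_j\}$ the unique block containing the root $a=\max B$. If these two blocks coincide with a common value $U$, deleting $U$ leaves two resonance decompositions of the index set $B\setminus U$, each of whose connected components is a union of whole blocks (hence has vanishing $v$-sum and strictly fewer elements); one applies the inductive hypothesis componentwise and restores $U$. The hard case is when the root blocks $S_{i_0}\ne T_{j_0}$ differ. Then one must reorganise the decompositions near the root: although a pairwise intersection $S_i\cap T_j$ need not be a resonance set (its $v$-sum need not vanish), the fact that $\mathcal D$ \emph{and} $\mathcal E$ each partition all of $B$ forces the $v$-sums over the overlapping regions to cancel, and this should allow one to pass to a strictly finer decomposition in $\Pi(B)$ whose root block lies inside $S_{i_0}\cap T_{j_0}$, after which the induction closes. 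Carrying out this cancellation — tracking $\gamma$ along the paths of the tree and splitting the combinatorial bookkeeping exactly as in the proof of Lemma~\ref{l:key_intersection_lemma} — is the one genuinely delicate point; the rest is formal.

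Finally, the extension to index sets asserted at the end of the statement is automatic: since every resonance set is $\prec$-connected, it lies inside a single connected component, so one applies the connected case to each component separately and takes the union of the resulting decompositions.
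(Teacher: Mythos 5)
Your setup is fine as far as it goes: the identification of resonance sets with upward-closed, connected subtrees of vanishing $v$-sum, the existence induction, and the refinement-order bookkeeping (a finite, downward-directed $\Pi(B)$ has a minimum, its blocks are indecomposable, and any decomposition into indecomposables must equal it) are all correct, and they are essentially the formal shell around what the paper also does. But the one step that carries all the content is exactly the step you leave open, and you say so yourself: the ``hard case'' in the construction of common refinements, where the root blocks $S_{i_0}\neq T_{j_0}$ of the two decompositions overlap. Your proposed mechanism --- that the two partitions force $v$-sums over the overlapping regions to cancel, so one can pass to a strictly finer element of $\Pi(B)$ whose root block lies inside $S_{i_0}\cap T_{j_0}$ --- is not justified and is not clearly true as stated: cancellation of sums over set differences does not by itself produce blocks of the required form, since a block must in addition be a set $\mathcal{B}_R$ (a connected, upward-closed subtree), and two genuinely overlapping blocks need not contain any common resonance set containing the root of $B$. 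The appeal to the proof of Lemma \ref{l:key_intersection_lemma} does not fill this in either: that argument uses admissibility of a family $\mathbf{J}$ (nonvanishing of $\gamma_{\mathbf{J}}$), and no admissible family is present here --- the lemma concerns arbitrary decompositions of a resonance set.

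What the paper actually proves, and what your plan is missing, is that the hard case never occurs: blocks of two decompositions into disjoint resonances can never overlap. The argument is tree-theoretic rather than a sum cancellation. One orders the blocks of one decomposition according to the position of their roots in the tree, takes a minimal block $R_1$, takes the minimal block $R_1'$ of the other decomposition that overlaps it, and picks a minimal point $b$ of the symmetric difference $\mathcal{B}_{R_1}\bigtriangleup\mathcal{B}_{R_1'}$; the block of the second decomposition that covers $b$ then precedes $R_1'$ while still overlapping $R_1$, contradicting minimality. Hence any two decompositions are mutually nested-or-disjoint, from which both your common refinement and the uniqueness of the decomposition into indecomposables follow at once. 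Without an argument of this kind (or a completed version of your cancellation step), your proposal establishes existence and the lattice reduction but leaves the uniqueness assertion --- the actual point of the lemma --- unproved.
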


\begin{remark}
Notice that we use the same terminology for maximal covering decompositions of simple index sets $(B,\delta/B)$ and for  maximal covering decompositions of $\mathcal{B} = \mathcal{B}_{R^0} \setminus \mathcal{B}_{R_0}$ in the proof of Lemma \ref{l:second_description}, which are different objects. We apology for this redundancy.
\end{remark}

\begin{proof}
Assume that there exist two different maximal decompositions in disjoint resonances
$$
B = \mathcal{B}_{R_1} \cup \cdots \mathcal{B}_{R_k} = \mathcal{B}_{R_1'} \cup \cdots \cup \mathcal{B}_{R_m'}.
$$
Notice that we can define a partial order in $\{R_1, \ldots, R_k \}$ (respectively in $\{R_1', \ldots, R_m' \}$) in the following way:
$$
R_p = (B_p, a_p) \prec R_q = (B_q,a_q) \Longleftrightarrow a_p \preceq b_q \text{ for some } b_q \in B_q.
$$
Let us assume that $R_1$ is a minimal resonance for this order. We claim that there exists $R_\iota'$ (say $R_1'$) such that $R_1 \subset R_1'$ or $R_1' \subset R_1$. Indeed, let $R_1' = (B_1', a_1')$ be the minimal resonance of the set $\{R_1', \ldots, R_m' \}$ intersecting $R_1$. If $R_1'$ overlaps with $R_1$, let  $a'_2 \in \mathcal{B}_{R_1} \setminus \mathcal{B}_{R_1'}$ be the maximal element for the order $\delta/ \mathcal{B}_{R_1} \setminus \mathcal{B}_{R_1'}$. Then there exists $b_1' \in B_1'$ such that $a_2' \preceq b_1'$. This implies that there exists another resonance $R_2'$ such that $R_2' = (B_2',a_2')$, and then $ R_2' \prec R_1'$, but this is a contradiction since $R_1'$ is the minimal resonance of the set $\{R_1', \ldots, R_m' \}$ intersecting $R_1$. Assume now that $R_1 \subset R_1'$. If this inclusion is strict, then there exists another resonance $S_1$ defined by $\mathcal{B}_{S_1} =\mathcal{B}_{R'_1} \setminus \mathcal{B}_{R_1}$,  hence $R_1 \cap S_1 = \emptyset$ and $R_1' = S_1 \cup R_1$. But then the decomposition $\{R_1', \ldots, R_m'\}$ is not maximal. Therefore $S_1 = \emptyset$, that is, $R_1 = R_1'$. If Otherwise $R_1' \subset R_1$, we obtain again that $R_1' = R_1$ by interchanging the roles of $R_1$ and $R_1'$. Finally, we can iterate the same argument for the sets $\{ R_2, \ldots, R_k \}$ and $\{R_2', \ldots, R_m' \}$.  This concludes the proof.
\end{proof}

\begin{lemma}
\label{l:injective}
The map $\mathcal{T}$ is injective.
\end{lemma}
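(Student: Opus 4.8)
The plan is to prove injectivity by \emph{reconstructing} $\mathbf{J}$ from the triple $\mathcal{T}(\mathbf{J})=\bigl(T(\mathbf{J}),\ \iota\bigl(\bigcup_{R\in\mathbf{i}_1}\mathcal{B}_R\bigr),\ \iota\bigl(\bigcup_{R\in\mathbf{I}_1}\mathcal{B}_R\bigr)\bigr)$. The main device is the maximal covering decomposition: a subset of $A$ which is a disjoint union $\bigcup_j \mathcal{B}_{R_j}$ attached to pairwise disjoint resonances $R_j$, each $\mathcal{B}_{R_j}$ \emph{indecomposable} (not a nontrivial union of disjoint resonances), coincides with the unique maximal covering decomposition of its support, hence determines the $R_j$.

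First I would recover the set $\mathbf{i}_1$ of minimal resonances from the second component. Minimal resonances are pairwise disjoint (non-overlapping and unrelated, since a nested pair has a non-minimal inner element), so $\bigcup_{R\in\mathbf{i}_1}\mathcal{B}_R$ is a disjoint union; and each $\mathcal{B}_R$ with $R\in\mathbf{i}_1$ is indecomposable, because if its maximal covering decomposition had $\ge 2$ pieces one could peel off a leaf of that decomposition and write $\mathcal{B}_R$ as a union of \emph{two} disjoint resonances, whence Lemma \ref{l:key_intersection_lemma}(2) would force $R\notin\mathbf{i}_1$. Therefore $\{\mathcal{B}_R:R\in\mathbf{i}_1\}$ is exactly the maximal covering decomposition of the second component, and is read off from it.

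Next I would handle the outer layer together with the whole tower $U_j:=\bigcup_{R\in\mathbf{I}_j}\mathcal{B}_R$. A short computation (using that every resonance of $\mathbf{I}_{j+1}$ is strictly contained in one of $\mathbf{I}_j$) shows $U_1\supseteq U_2\supseteq\cdots$ and $T(\mathbf{J})=U_1\setminus\bigl(U_2\setminus(U_3\setminus\cdots)\bigr)$; equivalently, for $c\in U_1$ one has $c\in T(\mathbf{J})$ iff the largest $j$ with $c\in U_j$ is odd. The third component gives $U_1$ as a set. I would then recover the tower $U_1\supseteq U_2\supseteq\cdots$ and the resonances making up each $\mathbf{I}_j$ by induction on the depth $k$ of $\mathbf{J}$: once $\mathbf{I}_1$ is identified, one checks that the $\mathcal{T}$-datum of $\mathbf{J}\setminus\mathbf{I}_1$ is computable from $\mathcal{T}(\mathbf{J})$ and $\mathbf{I}_1$ (the $T$-set loses its top layer and the parities shift; the minimal-resonance union only drops the resonances of $\mathbf{I}_1\cap\mathbf{i}_1$), applies the inductive hypothesis to recover $\mathbf{J}\setminus\mathbf{I}_1$, and sets $\mathbf{J}=\mathbf{I}_1\cup(\mathbf{J}\setminus\mathbf{I}_1)$; equivalently one may localize inside each $\mathcal{B}_R$, $R\in\mathbf{I}_1$, and induct on $n$.

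The hard part will be bridging \emph{abstract} decompositions of a footprint into disjoint resonances and the \emph{actual} resonances belonging to $\mathbf{J}$: a union of disjoint resonances need not be a resonance, while a resonance of $\mathbf{J}$ may be decomposable, so the maximal covering decomposition of $U_1$ can be strictly finer than $\mathbf{I}_1$. To regroup the correct pieces I would again use Lemma \ref{l:key_intersection_lemma}: whenever $\mathcal{B}_R$ with $R\in\mathbf{J}$ splits into two disjoint resonances, $\mathbf{J}$ must contain a strictly smaller resonance straddling the split, which therefore registers in the deeper layers $U_2,U_3,\dots$ and hence in the alternating-sign data $T(\mathbf{J})$; matching these two sources of information is what determines the grouping. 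Verifying carefully that removing $\mathbf{I}_1$ really yields reconstructible $\mathcal{T}$-data — in particular that nothing about $\mathbf{J}\setminus\mathbf{I}_1$ is lost when minimal and maximal resonances of $\mathbf{J}$ coincide — is the delicate bookkeeping point.
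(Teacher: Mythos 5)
Your first step (recovering $\mathbf{i}_1$ from the second component as the maximal covering decomposition, with indecomposability of $\mathcal{B}_R$ for $R\in\mathbf{i}_1$ coming from Lemma \ref{l:key_intersection_lemma}) is the same as the paper's, and your parity description of $T(\mathbf{J})$ via the nested tower $U_j=\bigcup_{R\in\mathbf{I}_j}\mathcal{B}_R$ is correct. But the remainder of the plan has a genuine gap, and it sits exactly where the work of the lemma lies. First, the identification of $\mathbf{I}_1$ from $U_1$ — regrouping the (possibly strictly finer) maximal covering decomposition of $U_1$ into the actual resonances of $\mathbf{J}$ — and, more generally, the determination of all intermediate resonances, is only gestured at ("matching these two sources of information"); this is precisely what the paper's proof does concretely, by the two‑coloring of $B'=\iota^{-1}(\mathcal{R}_3)$ and the step‑by‑step growth $\mathcal{U}_1,\mathcal{U}_1',\mathcal{U}_2,\dots$ out of each minimal resonance $Q_j$. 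Second, the induction step as you state it does not go through: the third component of $\mathcal{T}(\mathbf{J}\setminus\mathbf{I}_1)$ is $U_2=\bigcup_{R\in\mathbf{I}_2}\mathcal{B}_R$, and this is \emph{not} computable from $\mathcal{T}(\mathbf{J})$ and $\mathbf{I}_1$: the parity set $T(\mathbf{J})$ cannot distinguish a point whose deepest layer index is $1$ from one whose deepest layer index is $3,5,\dots$, so $U_2=(U_1\setminus T(\mathbf{J}))\cup U_3$ involves the unknown $U_3$. Hence you cannot simply feed the stripped family to the inductive hypothesis (the same problem reappears if you localize inside each $\mathcal{B}_R$, $R\in\mathbf{I}_1$).

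Moreover, your argument never invokes the minimality of $\mathbf{J}$ in its $\sim$-class, and some such use is unavoidable: the triple $\mathcal{T}$ does not separate equivalent families (the paper remarks that $\mathcal{T}(J)=\mathcal{T}(\mathbf{J})$ for $J\sim\mathbf{J}$), so a reconstruction that ignores the equivalence relation cannot pin down which intermediate resonances are present. In the paper this enters at the key step asserting that there is a \emph{unique} resonance $R_1'\in\mathbf{J}$ with $Q_1\subset R_1'\subset\mathcal{U}_1'$, since a second one could be removed while staying in the same class, contradicting minimality. To repair your scheme you would need (i) a proof that the grouping of the fine pieces of $U_1$ into members of $\mathbf{I}_1$, and the intermediate layers, are forced by the data — which in effect reproduces the paper's coloring construction — and (ii) an explicit appeal to minimality (and admissibility of $\gamma_{\mathbf J}$, as in Lemma \ref{l:key_intersection_lemma}) to exclude spurious intermediate resonances. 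As written, the proposal records the easy half of the argument and leaves the essential half as an unproven heuristic.
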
 

\begin{figure}[h]
\label{f:resonances_3}
\includegraphics[scale=0.6]{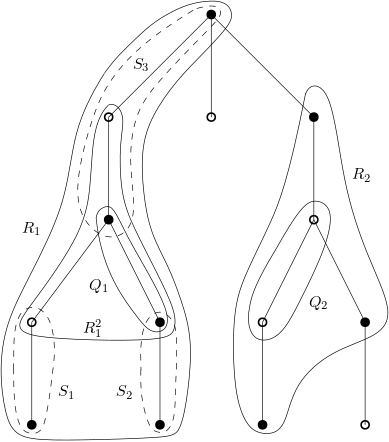}
\caption{In this example, $\mathbf{J} = \{Q_1,R_1^2,R_1, Q_2, R_2 \}$, the dashed resonances $S_1,S_2,S_3 \notin \mathbf{J}$, the set $\iota^{-1}(T(\mathbf{J}))$ is given by the black points, $\mathbf{I}_1 = \{R_1, R_2 \}$, and $\mathbf{i}_1 = \{Q_1 , Q_2 \}$. Notice that $\mathcal{B}_{R_1} = \mathcal{B}_{S_1} \cup \mathcal{B}_{S_2} \cup \mathcal{B}_{S_3}$ is a maximal covering decomposition of $\mathcal{B}_{R_1}$.}
\end{figure}

\begin{proof}
Let $(\mathcal{R}_1, \mathcal{R}_2, \mathcal{R}_3) \in \operatorname{Im}(\mathcal{T})$. Let $\mathbf{J} \in \operatorname{\mathbf{ad}}(\gamma)$ such that $\mathcal{T}(\mathbf{J}) = (\mathcal{R}_1, \mathcal{R}_2, \mathcal{R}_3)$. We have that
$$
\mathbf{I}_1 = \{ R_1, \ldots, R_r \}, \quad \mathbf{i}_1 = \{Q_1, \ldots, Q_r \},
$$
where $Q_j \subset R_j$ for all $1 \leq j \leq r$. Let $(B, \delta/B) = \iota^{-1}(\mathcal{R}_2)$. We decompose the set $B$ as
$$
B = \mathcal{B}_{Q_1} \cup \cdots \cup \mathcal{B}_{Q_r}
$$
in maximal covering decomposition (this finds $Q_1, \ldots, Q_r$ by Lemma \ref{l:key_intersection_lemma} and hence $\mathbf{i}_1$). Let us define $\mathbf{i}_2$ to be the set of minimal resonances in $\mathbf{J} \setminus \mathbf{i}_1$, and so on, so that
$$
\mathbf{J}= \mathbf{i}_1 \cup \cdots \cup \mathbf{i}_k,
$$ 
where notice that $k$ is the same as the one obtained by \eqref{e:decomposition_family}. We decompose:
\begin{equation}
\label{e:decomposition_family_in_minimal}
\mathbf{i}_l = \{ R^l_1, \ldots, R^l_{r_l} \}, \quad 1 \leq l \leq r,
\end{equation}
where $r_1 = r$, and $1 \leq r_l \leq r_{l-1}$ for $2 \leq l \leq r$. We have $R^1_j = Q_j$ and for any $r_{l-1} < j \leq r_l$, $R^{l-1}_j = R_j$, so that, for any $1 \leq j \leq r$,
$$
Q_j = R^1_j \subset R^2_j \subset \cdots \subset R_j.
$$
Next consider the maximal covering decomposition of $B' = \iota^{-1}(\mathcal{R}_3)$:
$$
B' = \mathcal{B}_{S_1} \cup \cdots \cup \mathcal{B}_{S_s},
$$
where $r \leq s$. Notice that at most one $Q_j$ can intersect a given $S_i$. Indeed, if $R_j \supset S_i$, then only $Q_j$ can intersect $S_i$.

We asign now two colors to the points of $(B',\delta/B')$. The set of black points is given by $\iota^{-1}(\mathcal{R}_1)$ and the set of white points is given by $B' \setminus \iota^{-1}(\mathcal{R}_1)$. We start with the resonance $Q_1$ and define $\mathcal{U}_1 := \mathcal{B}_{Q_1} \cup \mathcal{B}_1$, where $\mathcal{U}_1$ is a simple index set in $(B',\delta/B')$ and $\mathcal{B}_1$ is given only by points with different color of that of $\mathcal{B}_{Q_1}$ (possibly $\mathcal{B}_1 = \emptyset$). If $\mathcal{B}_1 = \emptyset$, then we define $R_1 = Q_1$. If now $\mathcal{B}_1$ is given by white points, we set $\mathcal{U}_1' = \mathcal{U}_1$. Otherwise, if $\mathcal{B}_1$ is given by black points, let $\mathscr{S}_1 := \{S_{j_1}, \ldots, S_{j_l} \}$ be such that $\mathcal{B}_{S_{j_\iota}} \cap \mathcal{B}_1 \neq \emptyset$, $\mathcal{B}_{S_{j_\iota}} \cap \mathcal{B}_{Q_1} = \emptyset$, and observe that there exists another $Q_i$ such that $\mathcal{B}_{Q_i} \cap \mathcal{B}_{S_{j_\iota}} \neq \emptyset$ (possibly $\mathscr{S}_1 = \emptyset$). Then $S_{j_\iota}$ can not intersect $R_1$ (otherwise, $S_{j_\iota}$ could not be contained in $R_i$, but this is the case since  $\mathcal{B}_{Q_i} \cap \mathcal{B}_{S_{j_\iota}} \neq \emptyset$). 

We next define $\mathcal{U}_1' := \mathcal{U}_1 \setminus ( \mathcal{B}_{S_{j_1}} \cup \cdots \cup \mathcal{B}_{S_{j_l}})$. We set $\mathcal{U}_2 = \mathcal{U}_1' \cup \mathcal{B}_2$ where $\mathcal{U}_2$ is a simple index set in $(B',\delta/B')$ and $\mathcal{B}_2$ is given by points of different color of that of $\mathcal{B}_1$. Notice that there exists only one resonance $R_1' \in \mathbf{J}$ such that $Q_1 \subset R_1' \subset \mathcal{U}_1'$. Indeed, if two resonances $R_1', R_1''$ satisfies this condition, then $J' = \mathbf{J} \setminus \{R_1' \}$ satisfies $J' \sim \mathbf{J}$, but this is a contradiction.  If $\mathcal{B}_2 = \emptyset$, then we define $\mathcal{B}_{R_1} := \mathcal{U}_1'$. Otherwise, if $\mathcal{B}_2 \neq \emptyset$, we define $R_1^2$ by $\mathcal{B}_{R_1^2} = \mathcal{U}_1'$. If now $\mathcal{B}_2$ is given by white points, we set $R_1^2$ by $\mathcal{B}_{R_1^2} = \mathcal{U}_2$. Otherwise, if $\mathcal{B}_2$ is given by black points, let $\mathscr{S}_2 := \{S_{j_1}, \ldots, S_{j_l} \}$ such that $\mathcal{B}_{S_{j_\iota}} \cap \mathcal{B}_2 \neq \emptyset$, $\mathcal{B}_{S_{j_\iota}} \cap \mathcal{U}_1' = \emptyset$. 

We next define $\mathcal{U}_2' := \mathcal{U}_2 \setminus ( \mathcal{B}_{S_{j_1}} \cup \cdots \cup \mathcal{B}_{S_{j_l}})$ and continue with this process.  We repeat this construction with all the minimal resonances $Q_j$. This finds $\mathbf{i}_2$. Repeating this process now starting from $\mathbf{i}_2$, we find $\mathbf{i}_3$, and so on. Using decomposition \eqref{e:decomposition_family_in_minimal}, we find $\mathbf{J}$ and this concludes the proof. We remark that if we define the map $\mathcal{T}$ on $[ \mathbf{J} ]$ then in general it is not injective. Indeed, if $J \sim \mathbf{J}$ (hence $\mathbf{J} \subset J$), then $\mathcal{T}(J) = \mathcal{T}(\mathbf{J})$.

\end{proof}

Therefore, by Lemma \ref{l:injective}, we get  $\# \operatorname{\mathbf{ad}}(\gamma) \leq \# ( \mathbf{B}(A) \times \mathbf{B}(A) \times \mathbf{B}(A)) \leq 2^{3n}$. This and Lemma \ref{l:second_description} conclude the proof of Proposition \ref{p:combinatorial_lemma}  with $C = 32$. 

\end{proof}

After this, it remains only to bound the function 
$$
\mathcal{G}_\hbar(\delta,x,\xi) := \sum_{v \in (\mathbb{Z}^d)^n} \int_{(\R^d)^n} \prod_{v_j \neq 0} \vert v_j \vert^{3 \tau} \widehat{\mathcal{F}}_\hbar(\delta,v,\eta) e^{i (v_1+ \cdots + v_n) \cdot x} e^{i (\eta_1 + \cdots + \eta_n) \cdot \xi} d\eta.
$$
To do so, we prove the following:

\begin{lemma}
\label{e:analytical_part}
Let $\delta \in \Delta(n)$ with $n \geq 2$ and $\sigma > 0$. Then there exists a universal constant $C > 0$ such that
$$
\mathbf{c}(\delta) \sup_{\mathbf{w} \in (\mathcal{Z}^d)^n}  \Big \lbrace \vert \sigma_\hbar(\mathbf{w},\delta) \vert e^{-\sigma (\vert w_1  + \cdots +  w_n \vert)} \Big \rbrace  \leq C^{n-1} \left(\frac{(n-1)^{n-1}}{(n-1)!} \right)^2 \left( \frac{1}{e \sigma} \right)^{2(n-1)}.
 $$
\end{lemma}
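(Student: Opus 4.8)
The plan is to dominate $|\sigma_\hbar(\mathbf{w},\delta)|$ by a homogeneous polynomial in $|w_1|,\dots,|w_n|$, to extract all the $\sigma$‑dependence by homogeneity together with the elementary identity $\sup_{x>0}x^{2(n-1)}e^{-\sigma x}=(2(n-1)/(e\sigma))^{2(n-1)}$, and then to prove a purely combinatorial estimate by induction along the tree $\delta$. Throughout write $W:=|w_1|+\cdots+|w_n|$ and $a_j:=|w_j|$.

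First I would use $|\sin t|\le|t|$, which gives $|\sigma^1_\hbar(w,w')|\le|\{w,w'\}|\le|w|\,|w'|$. Iterating the recursive definitions \eqref{e:multilinear_map} and \eqref{e:graph_commutator} and applying the triangle inequality, for the root $n$ of $\delta$ with $A\setminus\{n\}=A_1\cup\cdots\cup A_r$, $n_l:=\#A_l$, one finds
\[
|\sigma_\hbar(\mathbf{w},\delta)|\ \le\ \Big(\prod_{m=1}^{r}\big(a_n+b_1+\cdots+b_{m-1}\big)\,b_m\Big)\prod_{l=1}^{r}|\sigma_\hbar(\mathbf{w}/A_l,\delta/A_l)|,\qquad b_l:=\textstyle\sum_{j\in A_l}a_j .
\]
Unwinding the recursion, $|\sigma_\hbar(\mathbf{w},\delta)|\le g_\delta(a_1,\dots,a_n)$ where $g_\delta$ is a product of $2(n-1)$ linear forms with non‑negative coefficients, homogeneous of degree $2(n-1)$. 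Since $\sup_{W>0}W^{2(n-1)}e^{-\sigma W}=4^{\,n-1}\big((n-1)^{n-1}\big)^{2}(e\sigma)^{-2(n-1)}$, the Lemma reduces, after absorbing $4^{\,n-1}$ into $C$, to
\[
\mathbf{c}(\delta)\,\sup\big\{\,g_\delta(a):a_j\ge0,\ \textstyle\sum_j a_j=1\,\big\}\ \le\ \frac{C^{\,n-1}}{\big((n-1)!\big)^{2}} .
\]
I would prove this last inequality by induction on $n$ (the case $n=1$, where $\sigma_\hbar\equiv1$ and $\mathbf{c}\equiv1$, being trivial). Splitting at the root, using $\mathbf{c}(\delta)=\mathbf{c}_{n_1,\dots,n_r}\prod_l\mathbf{c}(\delta/A_l)$ and the homogeneity (of degree $2(n_l-1)$) of the subtree factors, one reduces to
\[
\mathbf{c}_{n_1,\dots,n_r}\,P^*\ \le\ \frac{C^{\,r}\prod_l\big((n_l-1)!\big)^{2}}{\big((n-1)!\big)^{2}} ,\qquad P^*:=\sup_{a_n+\sum_l W_l\le1}\ \prod_{m=1}^{r}\big(a_n+W_1+\cdots+W_{m-1}\big)W_m^{\,2n_m-1} ,
\]
where I used $\prod_l C^{\,n_l-1}=C^{\,(n-1)-r}$ and the multinomial identity $(n-1)!/\prod_l n_l!=\binom{n-1}{n_1,\dots,n_r}$, together with $\mathbf{c}_{n_1,\dots,n_r}=\prod_{i=1}^r(n_1+\cdots+n_i)^{-1}\le1/r!$. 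The core of the argument is then the sharp evaluation of $P^*$, a product of affine forms over a simplex: passing to the partial sums $S_m=a_n+W_1+\cdots+W_m$ gives $P^*=\sup_{0\le S_0\le\cdots\le S_r\le1}\prod_m S_{m-1}(S_m-S_{m-1})^{2n_m-1}$, which I would compute by Lagrange multipliers (equivalently, by maximizing the $S_m$'s recursively from the top), obtaining a closed product formula whose maximiser has $S_0\sim1/(n-1)$ and increments $S_m-S_{m-1}$ roughly proportional to $2n_m-1$. Substituting this into the displayed inequality and estimating the resulting ratios of factorials with Stirling's formula closes the induction with a universal constant $C$ (one may take $C=8$, matching Proposition~\ref{p:combinatorial_lemma}).

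The hard part is precisely this optimization of $P^*$ (or, upstream, of $g_\delta$). Bounding its $2(n-1)$ linear factors one at a time is far too wasteful: each internal node then contributes a spurious factor polynomial in $n$, and these accumulate to a super‑exponential loss, so the naive estimate only reproduces $C^{\,n-1}\big((n-1)^{n-1}\big)^{2}(e\sigma)^{-2(n-1)}$, i.e.\ the bound without the gain $1/\big((n-1)!\big)^{2}$. That gain is recovered only by keeping track of the coupling — the arguments $a_n+W_1+\cdots+W_{m-1}$ of the successive $\sigma^1_\hbar$'s are nested partial sums drawing on one fixed budget $W$ — together with the smallness of $\mathbf{c}(\delta)$, which equals $1/(n-1)!$ exactly for the chain and the star. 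A slightly lossier but still sufficient variant that avoids the explicit Lagrange computation is to split, at each recursion step, the available budget between the ``offset'' partial sums and the ``increment'' blocks and to apply the weighted AM--GM inequality block by block; this produces two independent factorials — one from the offsets, one from the increments — that combine into $1/\big((n-1)!\big)^{2}$, at the price of a larger but still universal constant.
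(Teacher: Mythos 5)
Your proposal is correct in substance but organized differently from the paper's proof. The paper proves the lemma by an induction that keeps the analytic weight inside the induction: it peels one factor $\sigma^1_\hbar$ off the root block, attaches to it exponential weights with proportions $m_1/(n-1),m_2/(n-1)$ (chosen as $m_i=(n-1)/l_i$), bounds it by $\sup_{r\geq0}re^{-\alpha r}=1/(e\alpha)$, applies the induction hypothesis to the remaining pieces, and merges the resulting factorials with the Beta-function inequality \eqref{e:beta_function_estimate}. You instead strip all $\sigma$-dependence at once by homogeneity (your reading of the weight as $e^{-\sigma(|w_1|+\cdots+|w_n|)}$ is the intended one, and the only one under which the statement is true and usable in the proof of Theorem~\ref{t:bound_lindstedt_series}), reducing the lemma to the weight-free inequality $\mathbf{c}(\delta)\sup_{\rm simplex}g_\delta\le C^{n-1}/((n-1)!)^2$, and run the tree induction on that purely combinatorial statement, splitting at the root into all $r$ subtrees simultaneously; the factorial gain is then extracted from the constrained optimization $P^*$ rather than from the Beta function. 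This variant does close: the top-down recursive maximization you indicate gives the closed form $P^*=\prod_{k=1}^{r}(2N_{k-1}+1)^{2N_{k-1}+1}(2n_k-1)^{2n_k-1}/(2N_k)^{2N_k}$ with $N_k=n_1+\cdots+n_k$, and after writing $((n-1)!)^2/\prod_k((n_k-1)!)^2=\prod_k\bigl(N_k!/(N_{k-1}!\,(n_k-1)!)\bigr)^2$ a per-block Stirling comparison (including the boundary block $k=1$, where $N_0=0$) shows each block contributes only a universal constant, so your displayed inequality for $\mathbf{c}_{n_1,\dots,n_r}P^*$ holds and the induction closes. You are also right about where the difficulty sits: bounding the offsets trivially by $S_{m-1}\le1$ is fatal (already for one large subtree plus a single leaf at the root it loses a factor of order $n$), so keeping the nested partial sums within the common budget, together with $\mathbf{c}(\delta)$, is exactly what produces $1/((n-1)!)^2$; your AM--GM fallback would need the same care, but the primary route is sound. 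In comparison, your approach cleanly separates the analytic weight from the combinatorics and makes the optimization explicit, at the price of actually computing $P^*$ and doing the Stirling bookkeeping, whereas the paper avoids the optimization by distributing the weight along the induction and invoking \eqref{e:beta_function_estimate}.
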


By Stirling's approximation and the trivial case $n =1$, Lemma \ref{e:analytical_part} implies  immediately the following:
\begin{corol}
Let $\delta \in \Delta(n)$ with $n \geq 1$ and $\sigma > 0$. Then there exists a constant $C_\sigma > 0$ such that
$$
\mathbf{c}(\delta) \sup_{\mathbf{w} \in (\mathcal{Z}^d)^n}  \Big \lbrace \vert \sigma_\hbar(\mathbf{w},\delta) \vert e^{-\sigma (\vert w_1  + \cdots +  w_n \vert)} \Big \rbrace  \leq  C_\sigma^{n-1}.
$$
\end{corol}

\begin{proof}
We proceed by induction. The case $n=2$ is trivial. Now, by \eqref{e:graph_commutator}, we have
\begin{align*}
 \sigma_\hbar(\mathbf{w}, \delta) & = \sigma_\hbar^{r}(w_n, \Sigma_1(\mathbf{w}), \ldots, \Sigma_r(\mathbf{w})) \sigma_\hbar(\mathbf{w}/A_1,\delta/A_1) \cdots   \sigma_\hbar(\mathbf{w}/A_r,\delta/A_r) \\[0.2cm]
  & = \sigma_\hbar^1( \Sigma_1(\mathbf{w}), w_n + \Sigma_2(\mathbf{w}) + \cdots + \Sigma_r(\mathbf{w}))\sigma_\hbar^{r-1}(w_n, \Sigma_2(\mathbf{w}), \ldots, \Sigma_{r}(\mathbf{w})) \\[0.2cm]
  & \quad \times \sigma_\hbar(\mathbf{w}/A_1,\delta/A_1) \cdots \sigma_\hbar(\mathbf{w}/A_r,\delta/A_r).
\end{align*}
We observe that
$$
\sigma_\hbar^1( \Sigma_1(\mathbf{w}), w_n + \Sigma_2(\mathbf{w}) + \cdots + \Sigma_r(\mathbf{w})) e^{- \frac{m_1 \sigma}{n-1} \vert \Sigma_1(\mathbf{w}) \vert} e^{- \frac{ m_2 \sigma}{n-1} \vert w_n + \Sigma_2(\mathbf{w}) + \cdots + \Sigma_r(\mathbf{w})) \vert)} \leq \frac{2(n-1)^2}{m_1 m_2 (e \sigma)^2},
$$
with $1 \leq m_1, m_2 \leq n-1$ to be determined. Recall that
\begin{align*}
\mathbf{c}(\delta) & = \mathbf{c}_{k_1, \ldots, k_r} \mathbf{c}(\delta/A_1) \cdots \mathbf{c}(\delta/A_r) = \frac{1}{n-1} \mathbf{c}_{k_2, \ldots, k_r} \mathbf{c}(\delta/A_1) \cdots \mathbf{c}(\delta/A_r),
\end{align*}
We denote $l_1 = \#A_1$, $l_2 = \# (A_2 \cup \cdots \cup A_r)$. If $l_1 = 1$ and $l_2 = n-2$, by the induction hypothesis we have:
\begin{align*}
\mathbf{c}(\delta) \sup_{\mathbf{w} \in (\mathcal{Z}^d)^n}  \Big \lbrace \vert \sigma_\hbar(\mathbf{w},\delta) \vert e^{-\sigma (\vert w_1  + \cdots +  w_n \vert)} \Big \rbrace & \\[0.2cm]
& \hspace*{-6.2cm} \leq 2C^{n-2} \left( \frac{ (n-1)}{e \sigma} \right)^{2(n-1)} \cdot \frac{ (n-2)^{2(n-2)}}{(n-1)m_1 m_2 (n-m_2 - 1)^{2(n-2)}(  (n-2)!)^2}.
\end{align*}
Taking $m_1 = n-1$, and $m_2 = 1$, the claim holds in this case. The case $l_2 = 0$ follows by same arguments. On the other hand, if $l_1 \geq 2$ and $l_2 \geq 1$, by the induction hypothesis we get:

\begin{align*}
\mathbf{c}(\delta) \sup_{\mathbf{w} \in (\mathcal{Z}^d)^n}  \Big \lbrace \vert \sigma_\hbar(\mathbf{w},\delta) \vert e^{-\sigma (\vert w_1  + \cdots +  w_n \vert)} \Big \rbrace & \\[0.2cm]
& \hspace*{-6.2cm} \leq 2C^{n-2} \left( \frac{ (n-1)}{e \sigma} \right)^{2(n-1)} \cdot \frac{(l_1 - 1)^{2(l_1-1)} l_2^{2l_2}}{(n-1)m_1 m_2 (n-m_1-1)^{2(l_1-1)} (n-m_2 - 1)^{2l_2}( (l_1-1)! l_2 !)^2}.
\end{align*}
We next use the following inequality for the Beta function \cite[Ex. 45, p. 263]{Whi_Wat96}:
\begin{equation}
\label{e:beta_function_estimate}
B(x,y) > \sqrt{2\pi} \frac{x^{x-\frac{1}{2}} y^{y-\frac{1}{2}}}{(x+y)^{x+y-1}}, \quad x,y \geq1,
\end{equation}
with $x = l_1$ and $y = l_2 +1$. We obtain:
\begin{align*}
\mathbf{c}(\delta) \sup_{\mathbf{w} \in (\mathcal{Z}^d)^n}  \Big \lbrace \vert \sigma_\hbar(\mathbf{w},\delta) \vert e^{-\sigma (\vert w_1  + \cdots +  w_n \vert)} \Big \rbrace & \\[0.2cm]
& \hspace*{-6cm} \leq  \left( \frac{ (n-1)}{e \sigma} \right)^{2(n-1)} \frac{1}{((n-1)!)^2} \cdot \frac{2 C^{n-2} n^{2n-1}}{(n-1)m_1 m_2 l_1 l_2 (n-m_1-1)^{2(l_1-1)}(n-m_2-1)^{2l_2}}.
\end{align*}
Choosing $m_1 = \frac{n-1}{l_1}$ and $m_2 = \frac{n-1}{l_2}$, the claim holds by taking $C$ sufficiently large.
\end{proof}

\begin{proof}[Proof of Theorem \ref{t:bound_lindstedt_series}]

Putting together Theorem \ref{t:main_lindstedt_teor}, Lemma \ref{l:lemma_eliasson}, and Proposition \ref{p:combinatorial_lemma}, we get:
$$
\Vert H_n \Vert_{s_0 - \sigma} \leq C^n \sum_{\delta \in \Delta(n)} \int_{(\mathcal{Z}^{d})^n }\prod_{v_j \neq 0} \vert v_j \vert^{3 \gamma} \vert  \mathbf{c}(\delta) \sigma_\hbar(\mathbf{w},\delta) \vert \vert \widehat{V}(\mathbf{w}) \vert e^{(s_0 - \sigma) \vert \Sigma(\mathbf{w}) \vert}  \kappa(d\mathbf{w}),
$$
where $\widehat{V}(\mathbf{w}) = \widehat{V}(w_1) \cdots \widehat{V}(w_n)$. Then, using Lemma \ref{e:analytical_part}, Lemma \ref{l:counting_trees}, and \eqref{e:elementary_estimate}, we obtain the claim by taking $s = s_0 - \sigma$ for suitable choice of $\sigma > 0$. The proof for the estimate of the counterterm $R'_n$ follows the same arguments but using \eqref{e:bound_omega_2} instead of \eqref{e:bound_omega_1}.

\end{proof}

\begin{remark}
\label{r:simplification}
Notice that in the case $\hbar = 1$ the term $\mathbf{c}(\delta) \sigma_\hbar(\mathbf{w},\delta)$ can be estimated by the trivial bound  $\vert \mathbf{c}(\delta)\sigma_\hbar(\mathbf{w},\delta) \vert \leq 2^{n-1}$.
This simplifies the proof in this case (and similarly in the case $\epsilon_\hbar = \epsilon \hbar$)  and allows to remove the analytic hypothesis on the $\xi$ variable.
\end{remark}

\section{Quantum renormalization and semiclassical measures} 
\label{s:main_proofs}

In this section we prove Theorems \ref{c:1}, \ref{c:2} and \ref{t:classic_main_theorem}.

\subsection{Proof of Theorems \ref{c:1} and \ref{t:classic_main_theorem}.} 

\begin{proof}[Proof of Theorem \ref{c:1}]

We start from \eqref{e:main_equation_for_lindstedt}. By Theorem \ref{t:bound_lindstedt_series} there exist solutions  $H \in \mathcal{A}_{s,\epsilon}(T^* \mathbb{T}^d)$ and $R' \in \mathcal{A}_{s,\epsilon}(\R^d)$ for sufficiently small $\epsilon = \epsilon(V,\omega) > 0$. This gives
$$
\frac{d}{dt} \Big( U_{-H}(t)^* \widehat{L}_{\omega,\hbar} U_{-H}(t) \Big) = \frac{i}{\hbar}U_{-H}(t)^* [\Op_\hbar(-H), \widehat{L}_{\omega,\hbar}] U_{-H}(t) = V - R'(t).
$$
Integrating this equation on the interval $[0,t]$ and defining $R(t) = \int_0^t R'(\tau) d\tau$, we get:
$$
U_{-H}(t)^* \widehat{L}_{\omega,\hbar} U_{-H}(t) = \widehat{L}_{\omega,\hbar} + tV - R(t).
$$
Defining $\mathcal{U}_\hbar(t) = U_{-H}(t)^*$, we obtain finally that
$$
\widehat{L}_{\omega,\hbar} = \mathcal{U}_\hbar(t)^* \Big( \widehat{L}_{\omega,\hbar} + \Op_\hbar (t V - R(t)) \Big) \mathcal{U}_\hbar(t).
$$

\end{proof}

\begin{proof}[Proof of Theorem \ref{t:classic_main_theorem}]
The proof reduces to that of Theorem \ref{c:1} simply by changing the Moyal commutator $[\cdot, \cdot]_\hbar$ into the Poisson Bracket $ \{ \cdot , \cdot \}$, and the quantum propagator $U_{-H}(t)$ into the time-dependent Hamiltonian flow $\Phi_t^{-H}$, since Lemmas \ref{l:Jacobi_rule}, \ref{l:commutator_loss} and \ref{l:propagator_symbolic_lemma} remain valid in the classical setting (see Remarks \ref{r:remark_poisson} and \ref{r:remark_flow}). Finally, defining $\Phi_t := (\Phi_t^{-H})^{-1}$, using that
$$
\Phi_t(z) - z =  \int_0^t X_{(\Phi_\tau^{-1})^* H} \circ \Phi_\tau(z) \,d\tau,
$$
where $X_H = \Omega \nabla H$ denotes the Hamitlonian vector field of $H$ and $\Omega$ stands for the canonical symplectic matrix on $\R^{2d}$, and using Lemma \ref{l:propagator_symbolic_lemma} (notice that the same proof applies for $\Phi_\tau$ and $\Phi_\tau^{-1}$), we obtain that
\begin{equation}
\label{e:estimate_flow}
\Vert \Phi_t - \operatorname{Id} \Vert_s \leq C \epsilon,
\end{equation}
for some  $0< s < s_0$ and all $0 \leq t \leq \epsilon$.
\end{proof}
 
\subsection{Proof of Theorem \ref{c:2}.}

Let $(\Psi_\hbar, \lambda_\hbar)$ be a sequence such that $\Vert \Psi_\hbar \Vert_{L^2(\mathbb{T}^d)} = 1$, $\lambda_\hbar \to 1$ as $\hbar \to 0^+$ and
$$
\Big( \widehat{L}_{\omega,\hbar} + \Op_\hbar (t V - R(t)) \Big) \Psi_\hbar = \lambda_\hbar \Psi_\hbar.
$$
Then, by Theorem \ref{c:1}, $\mathcal{U}_\hbar(t)^* \Psi_\hbar$ is a normalized sequence in $L^2(\mathbb{T}^d)$ and satisfies
$$
\big( \widehat{L}_{\omega,\hbar} - \lambda_\hbar \big) \, \mathcal{U}_\hbar(t)^* \Psi_\hbar = 0.
$$
Then, for any $a \in \mathcal{C}_c^\infty(T^* \mathbb{T}^d)$, we have:
$$
\big \langle \Op_\hbar(a) \Psi_\hbar, \Psi_\hbar \big \rangle_{L^2(T^* \mathbb{T}^d)} = \big \langle \mathcal{U}_\hbar(t)^* \Op_\hbar(a) \, \mathcal{U}_\hbar(t) \, \mathcal{U}_\hbar(t)^* \Psi_\hbar,  \mathcal{U}_\hbar(t)^* \Psi_\hbar \big \rangle_{L^2(T^* \mathbb{T}^d)}.
$$
Finally, using Egorov's Theorem (see for instance \cite[Thm 11.1]{Zw12}), we get
$$
\big \langle \Op_\hbar(a) \Psi_\hbar, \Psi_\hbar \big \rangle_{L^2(T^* \mathbb{T}^d)} = \big \langle \Op_\hbar(a \circ (\Phi_t^{-H})^{-1}) \, \mathcal{U}_\hbar(t)^* \Psi_\hbar, \mathcal{U}_\hbar(t)^* \Psi_\hbar \big \rangle_{L^2(T^* \mathbb{T}^d)},
$$
where $\Phi_t^{-H}$ is the classical flow generated by the Hamiltonian $-H$. Defining $\Phi_t := (\Phi_t^{-H})^{-1}$ the claim reduces to \cite[Prop.1]{Ar20} (notice that here we have defined semiclassical measures as probability measures). Finally, \eqref{e:estimate_flow} holds by the proof of Theorem \ref{t:classic_main_theorem}.

\section{Comparison with Eliasson's classical proof}
\label{s:comparison_eliasson}

In this section we compare the Lindstedt series given by Theorem \ref{t:main_lindstedt_teor} with the original idea of \cite{Elia89} to solve the classical problem. We write $z = (x,\xi) \in T^*\mathbb{T}^d$ and, c.f. \cite{Elia89}, we look for a symplectic map $\Phi_t$ written in the form $\Phi_t(z) = z + \mathcal{Z}_t(z)$ such that
\begin{equation}
\label{e:conjugation}
(\mathcal{L}_\omega + tV - R(t))\big \vert_{z + \mathcal{Z}_t(z)} = \mathcal{L}_\omega,
\end{equation} 
with unknowns $\mathcal{Z}_t(z)$ and $R(t)$.
Taking the symplectic gradient $\Omega \nabla$ at both sides of \eqref{e:conjugation}, we get
\begin{equation}
\label{e:differentiated_equation}
\Omega \nabla \mathcal{L}_\omega = \Omega ( \operatorname{Id} + d \mathcal{Z}_t(z))^T  \nabla (\mathcal{L}_\omega + tV-R(t))  \big \vert_{z + \mathcal{Z}_t(z)}.
\end{equation}
Then, using that the matrix $(\operatorname{Id} + d \mathcal{Z}_t(z))$ is symplectic (by assumption), we take its inverse at both sides of \eqref{e:differentiated_equation} to obtain
\begin{equation}
\label{e:classic_key_equation}
\omega \cdot  \nabla_x \mathcal{Z}_t(z) =  \Omega \nabla (tV-R(t))  \big \vert_{z + \mathcal{Z}_t(z)}.
\end{equation}

Expanding this equation in powers of $t$ we obtain recursive cohomological equations for $\mathcal{Z}_t$ and $R(t)$. Denoting $\mathcal{V} = \Omega \nabla V$, $\mathcal{R} = \Omega \nabla R$, and writing formally
$$
\mathcal{Z}_t = \sum_{n=1}^\infty t^n Z_n, \quad \mathcal{R}(t) = \sum_{n=1}^\infty t^n \mathcal{R}_n,
$$
we find (compare with \eqref{e:cohomological_general_equation}):
\begin{align}
\label{e:general_cohomological_classic}
\omega \cdot  \nabla_x Z_n & = \sum_{i_1 + \cdots + i_j = n} \frac{ \nabla_z^j (\mathcal{V} - \mathcal{R}_1) }{j!} [Z_{i_1}, \ldots, Z_{i_j}] - \sum_{i_1 + \cdots + i_k + k = n} \frac{\nabla_z^k \mathcal{R}_k }{k!} [ Z_{i_1}, \ldots, Z_{i_k}].
\end{align} 
From this equation we obtain the following classical Lindstedt series:
\begin{teor}
\label{t:main_lindstedt_teor_classic} For every $n \geq 1$, the solution to the cohomological equation \eqref{e:cohomological_general_equation} is given by:
\begin{align}
Z_{n}(x,\xi) & = \sum_{ \substack{ v \in (\mathbb{Z}^d)^n \\[0.1cm] \delta \in \Delta(n)}}  \int_{(\R^d)^n} \Omega_1(\delta,v) \widehat{\mathcal{V}}(\delta,v,\eta) e^{i(v_1 + \cdots + v_n) \cdot x} e^{i (\eta_1 + \cdots + \eta_n) \cdot \xi} d\eta, \\[0.2cm]
\mathcal{R}_n(\xi) & = \sum_{ \substack{ v \in (\mathbb{Z}^d)^n \\[0.1cm] \delta \in \Delta(n)}} \int_{(\R^d)^n} \Omega_2(\delta,v) \widehat{\mathcal{V}}(\delta,v,\eta) e^{i(\eta_1 + \cdots + \eta_n) \cdot \xi} d\eta,
\end{align}
where 
\begin{equation}
\label{e:tree_derivative_classic}
\widehat{\mathcal{V}}(\delta,v,\eta) =  \frac{ \widehat{\nabla_z^r \mathcal{V}}(v,\eta)}{r!}\left[ \widehat{\mathcal{V}}(\delta/A_1,v,\eta), \cdots, \widehat{\mathcal{V}}(\delta/A_r,v,\eta)\right],
\end{equation}
and $A \setminus \{n \} = A_1 \cup \cdots \cup A_r$ is the natural decomposition into simple index sets.
\end{teor}
From Theorem \ref{t:main_lindstedt_teor_classic} one can also show Theorem \ref{t:classic_main_theorem}. Moreover, Lemma \ref{e:analytical_part} can be replaced in this case by an estimate of \eqref{e:tree_derivative_classic}, which is easier since the coefficients $\mathbf{c}(\delta)$ simplify in this case by Lemma \ref{l:break_order} and symmetry of the multilinear map $\nabla_z^r \mathcal{V}$. However, it is not clear to the author how to quantize \eqref{e:conjugation} in an exact way. Alternatively, we consider more convinient to adopt the Hamiltonian formalism of Section \ref{s:hamiltonians}. This is the reason to search $\Phi_t$ (and hence $\mathcal{U}_\hbar(t)$), not as a general (quantizable) symplectic transformation but as a time-dependent Hamiltonian flow.

\appendix

\section{Pseudodifferential calculus on the torus}
\label{s:tools_of_analytic_calculus}

We include some basic lemmas on quantization of symbols in the spaces $\mathcal{A}_{s}(T^*\mathbb{T}^d)$, $\mathcal{A}_s(\R^d)$, $\mathcal{A}_s(\mathbb{T}^d)$. We fix $s > 0$ all along this appendix.

 \begin{definition}
 \label{d:semiclassical:quantization} Let $a : T^*\mathbb{T}^d \to \mathbb{C}$ be a symbol. The semiclassical Weyl quantization $\Op_\hbar(a)$ acting on $\psi \in \mathscr{S}(\mathbb{T}^d)$ is defined by
$$
\Op_\hbar(a) \psi(x) = \sum_{k,j\in\mathbb{Z}^d} \widehat{a}\left( k-j , \frac{ \hbar (k+j)}{2} \right) \widehat{\psi}(k) e^{ij\cdot x},
$$
where $\widehat{a}(k,\cdot)$ denotes the $(k)^{th}$-Fourier coefficient in the variable $x$.
\end{definition}

\begin{lemma}[Analytic Calderón-Vaillancourt theorem]
\label{l:analytic_calderon_vaillancourt_torus}
For every $a\in \mathcal{A}_s(T^*\mathbb{T}^d)$, the following holds:
\begin{equation}
\label{e:calderon_vaillancourt}
\Vert \Op_\hbar(a) \Vert_{\mathcal{L}(L^2(\mathbb{T}^d))} \leq C_{d,s} \Vert a \Vert_{s}, \quad \hbar \in (0,1].
\end{equation}
\end{lemma}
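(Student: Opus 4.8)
The plan is to pass to the Fourier side on $\mathbb{T}^d$ and reduce the bound to Schur's test (equivalently Young's inequality) for an infinite matrix dominated, \emph{uniformly in $\hbar$}, by a convolution operator on $\ell^2(\mathbb{Z}^d)$.

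First I would expand $\psi=\sum_{k\in\mathbb{Z}^d}\widehat{\psi}(k)e^{ik\cdot x}$ and read off from Definition \ref{d:semiclassical:quantization} that, on the Fourier side, $\Op_\hbar(a)$ acts as the infinite matrix
$$
M^{\hbar}_{jk} := \widehat{a}\Big(k-j,\tfrac{\hbar(k+j)}{2}\Big), \qquad j,k\in\mathbb{Z}^d,
$$
so that by Plancherel on $\mathbb{T}^d$ one has $\Vert \Op_\hbar(a)\Vert_{\mathcal{L}(L^2(\mathbb{T}^d))}=\Vert M^\hbar\Vert_{\mathcal{L}(\ell^2(\mathbb{Z}^d))}$ (checked first on trigonometric polynomials $\psi$ and then extended by density once the bound below is available). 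The key observation is that after taking absolute values the $\xi$-dependence of $\widehat{a}$ disappears: since $\widehat{a}(\ell,\cdot)$ is, up to a dimensional constant $c_d$, the inverse Fourier transform in $\xi$ of $\mathscr{F}a(\ell,\cdot)$, we get the $\xi$-uniform (hence $\hbar$-uniform) bound
$$
\vert M^\hbar_{jk}\vert \le c_d\int_{\R^d}\vert \mathscr{F}a(k-j,\eta)\vert\, d\eta =: b(k-j).
$$
Thus $M^\hbar$ is dominated entrywise by the convolution matrix with kernel $b$, and Schur's test yields $\Vert M^\hbar\Vert_{\mathcal{L}(\ell^2(\mathbb{Z}^d))}\le \Vert b\Vert_{\ell^1(\mathbb{Z}^d)}$, uniformly in $\hbar\in(0,1]$.

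Finally, since $s>0$ one has $e^{s(\vert k\vert+\vert\eta\vert)}\ge 1$, whence $\Vert b\Vert_{\ell^1(\mathbb{Z}^d)}=c_d\sum_{k\in\mathbb{Z}^d}\int_{\R^d}\vert\mathscr{F}a(k,\eta)\vert\, d\eta\le c_d\Vert a\Vert_{s}$, giving \eqref{e:calderon_vaillancourt} with $C_{d,s}=c_d$. There is no serious obstacle here; the only genuine bookkeeping is the Fourier normalization constant $c_d$ and the density argument for the matrix representation. Conceptually, the point distinguishing this analytic version from the general Calder\'on--Vaillancourt theorem is that the norm $\Vert\cdot\Vert_s$ already controls the full $(x\text{-Fourier series})\times(\xi\text{-Fourier transform})$ $L^1$ norm of the symbol --- much more than is needed --- so the usual finite-order symbol estimates and the Cotlar--Stein machinery are entirely bypassed.
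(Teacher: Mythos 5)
Your argument is correct, and it takes a genuinely different route from the paper. The paper does not reprove boundedness from scratch: it invokes the standard Calderón--Vaillancourt theorem on the torus (citing \cite[Prop. 3.5]{Pau12}), which bounds $\Vert \Op_\hbar(a)\Vert_{\mathcal{L}(L^2)}$ by $C_d\sum_{|\alpha|\le N_d}\Vert\partial_x^\alpha a\Vert_{L^\infty(T^*\mathbb{T}^d)}$, and then converts each of these finitely many derivative sup-norms into a multiple of $\Vert a\Vert_s$ using $\vert k^\alpha\vert\le |k|^{|\alpha|}$ together with the elementary estimate $\sup_{t\ge 0}t^m e^{-ts}=(m/(es))^m$; this is where the $s$-dependence of $C_{d,s}$ comes from. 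You instead bypass the Cotlar--Stein/finite-regularity machinery entirely: the Fourier-side matrix representation $M^\hbar_{jk}=\widehat a\bigl(k-j,\tfrac{\hbar(k+j)}{2}\bigr)$, the $\hbar$- and $\xi$-uniform entrywise domination by the convolution kernel $b(k-j)=c_d\int_{\R^d}|\mathscr{F}a(k-j,\eta)|\,d\eta$, and Schur's test give the bound directly, with a constant $c_d$ independent of $s$ and in fact only requiring $\mathscr{F}a\in L^1(\mathcal{Z}^d)$ (the $s=0$ level of the norm), which is strictly weaker than analyticity. What the paper's route buys is brevity given the cited black box and applicability to symbol classes without Fourier-$L^1$ structure; what your route buys is a self-contained elementary proof, a sharper (s-independent) constant, and transparency about exactly which feature of $\mathcal{A}_s$ is being used. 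The only bookkeeping points, which you flag, are the Fourier normalization constants and the density argument extending the matrix bound from trigonometric polynomials, and neither is an obstacle.
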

\begin{proof}
By the usual Calderón-Vaillancourt theorem, see for instance \cite[Prop 3.5]{Pau12}, the following estimate holds:
$$
\Vert \Op_\hbar(a) \Vert_{\mathcal{L}(L^2)} \leq C_d \sum_{\vert \alpha \vert \leq N_d} \Vert \partial_x^\alpha a \Vert_{L^\infty(T^* \mathbb{T}^d)}, \quad \hbar \in (0,1].
$$
Now, using the elementary estimate
\begin{equation}
\label{e:elementary_estimate}
\sup_{t \geq 0} t^m e^{-ts} = \left( \frac{m}{e s} \right)^m, \quad m > 0,
\end{equation}
we obtain
$$
\Vert \partial_x^\alpha a \Vert_{L^\infty(T^* \mathbb{T}^d)} \leq \frac{1}{(2\pi)^{d/2}} \sum_{k \in \mathbb{Z}^d} \vert k^\alpha \vert \Vert \widehat{a}(k, \cdot) \Vert_{L^\infty(\R^d)} \leq \left( \frac{\vert \alpha \vert}{e s} \right)^{\vert \alpha \vert} \Vert a \Vert_{s} = C_{\alpha, s} \Vert a \Vert_{s}.
$$
\end{proof}

Let $a,b :T^*\mathbb{T}^s \to \mathbb{C}$, the operator given by the composition $\Op_\hbar(a) \Op_\hbar(b)$ is another Weyl pseudodifferential operator with symbol $c$ given by the Moyal product $c = a \sharp_\hbar b$, see for instance \cite[Chp. 7]{Dim_Sjo99}. To write $c$ conveniently, we consider the product space $\mathcal{Z}^d := \mathbb{Z}^d \times \R^d$ and the measure $\kappa$ on $\mathcal{Z}^d$ defined by
\begin{equation}
\label{e:lebesgue_stieltjes}
\kappa(w) = \mathcal{K}_{\mathbb{Z}^d}(k) \otimes \mathcal{L}_{\R^d}(\eta), \quad w =  (k,\eta) \in \mathcal{Z}^d,
\end{equation}
where $\mathcal{L}_{\R^d}$ denotes the Lebesgue measure on $\R^d$, and
$$
\mathcal{K}_{\mathbb{Z}^d}(k) := \sum_{j \in \mathbb{Z}^d} \delta(k-j), \quad k \in \mathbb{Z}^d.
$$
Using this measure, we can write any function $a \in \mathcal{A}_{s,\rho}(T^*\mathbb{T}^d)$ as
\begin{equation}
\label{e:two_fourier}
a(z) = \frac{1}{(2\pi)^d} \int_{\mathcal{Z}^d} \mathcal{F} a(w) e^{iz \cdot w} \kappa(dw),
\end{equation}
where $z = (x,\xi) \in T^*\mathbb{T}^d$, and $\mathcal{F}$ denotes the Fourier transform in $T^*\mathbb{T}^d$:
\begin{equation}
\label{e:fourier_transform}
\mathcal{F} a(w) = \frac{1}{(2\pi)^d} \int_{T^*\mathbb{T}^d}  a(z) e^{-i w\cdot z} dz.
\end{equation}
With these conventions, the Moyal product $c = a \sharp_\hbar b$ can be written by the following  integral formula:
\begin{align}
\label{e:moyal_product}
a \sharp_\hbar b(z) = \frac{1}{(2\pi)^{2d}}\int_{\mathcal{Z}^{d} \times \mathcal{Z}^d} \big( \mathcal{F} a \big) (w') \big( \mathcal{F} b \big)(w-w') e^{\frac{i \hbar}{2} \{ w', w-w'\}} e^{i z \cdot w} \kappa(dw') \kappa(dw),
\end{align}
where $\{\cdot , \cdot  \}$ stands for the standard symplectic product in $\mathcal{Z}^d \times \mathcal{Z}^d$:
$$
\{w,w' \} = k \cdot \eta' - k' \cdot \eta, \quad w = (k,\eta), \quad w' = (k',\eta').
$$
Alternatively, we can deduce from \eqref{e:moyal_product} the following formula:
\begin{equation}
\label{e:moyal_product_2}
a \sharp_\hbar b(x,\xi) = \frac{1}{(2\pi)^d} \sum_{k,k'\in \mathbb{Z}^d} \widehat{a} \left(k', \xi + \frac{\hbar(k-k')}{2} \right) \widehat{b} \left( k-k', \xi - \frac{ \hbar k'}{2} \right) e^{i k \cdot x}.
\end{equation}
We will employ the notation 
\begin{equation}
\label{e:commutator_symbol}
[a,b]_\hbar := \frac{i}{\hbar} ( a \sharp_\hbar b - b \sharp_\hbar a),
\end{equation}
for the Moyal commutator. Hence $[\Op_\hbar(a), \Op_\hbar(b)] = \Op_\hbar([a,b]_\hbar)$. This immediately implies:
\begin{lemma}[Jacobi identity]
\label{l:Jacobi_rule}
Let $a,b,c \in \mathcal{A}_s(T^*\mathbb{T}^d)$. The following holds:
\begin{equation}
\label{e:jacobi_rule}
[a, [b,c]_\hbar ]_\hbar = [[a,b]_\hbar, c]_\hbar + [b, [ a, c]_\hbar]_\hbar.
\end{equation}
\end{lemma}
\begin{proof}
It is sufficient to observe that, for every $w_1,w_2,w_3 \in \mathcal{Z}^{2d}$,
\begin{align}
\label{e:jacobi_in_fourier}
 \sigma_\hbar^1(w_1,w_2+w_3)\sigma_\hbar^1(w_2,w_3) = \sigma_\hbar^1(w_1+w_2,w_3)\sigma_\hbar^1(w_1,w_2)  + \sigma_\hbar^1(w_2,w_1+w_3)\sigma_\hbar^1(w_1,w_3).
 \end{align}
\end{proof}
\begin{remark}
\label{r:jacobi_interpretation}
Notice that the Jacobi identity \eqref{e:jacobi_rule} can be understood as a chain rule of derivation.
\end{remark}
By definition \eqref{e:norm} of the norm of the space $\mathcal{A}_s(T^* \mathbb{T}^d)$, we have:
\begin{equation}
\label{e:compact_norm}
\Vert a \Vert_{s} = \frac{1}{(2\pi)^d} \int_{\mathcal{Z}^d} \vert \mathcal{F}a(w) \vert e^{\vert w \vert s} \kappa(dw).
\end{equation}
\begin{lemma}
\label{l:commutator_loss}
Let $a,b \in \mathcal{A}_{s}(T^*\mathbb{T}^d)$. Then, for $0 < \sigma_1 + \sigma_2 < s$:
\begin{equation}
\label{e:first_commutator}
\big \Vert [a,b]_\hbar \big \Vert_{s-\sigma_1-\sigma_2} \leq \frac{2}{e^2 \sigma_1(\sigma_1 + \sigma_2)}  \Vert a \Vert_{s} \Vert b \Vert_{s-\sigma_2}. 
\end{equation}
\end{lemma}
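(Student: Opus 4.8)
The plan is to carry out the whole estimate on the Fourier side, using the explicit integral formula \eqref{e:def_symbol_commutator} for the Moyal commutator and the description \eqref{e:compact_norm} of $\Vert\cdot\Vert_\rho$ as a weighted $L^1$-norm of the Fourier transform. Applying \eqref{e:def_symbol_commutator} (with the first symbol $a$ playing the role of $H$ and the second symbol $b$ that of $a$) and performing the change of variables $u=w+w_1$, $w=u-w_1$, one reads off from \eqref{e:two_fourier} that
\[
\mathscr{F}\big([a,b]_\hbar\big)(u)=\frac{1}{(2\pi)^d}\int_{\mathcal{Z}^d}\mathscr{F}a(w_1)\,\mathscr{F}b(u-w_1)\,\frac{2}{\hbar}\sin\!\Big(\frac{\hbar}{2}\{w_1,u-w_1\}\Big)\,\kappa(dw_1),
\]
where I used $\{w_1,u-w_1\}=\{w_1,u\}$ by bilinearity and antisymmetry. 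Plugging this into \eqref{e:compact_norm} and bounding the modulus of the integral by the integral of the moduli gives
\[
\Vert[a,b]_\hbar\Vert_{s-\sigma_1-\sigma_2}\le\frac{1}{(2\pi)^{2d}}\int_{\mathcal{Z}^d}\!\!\int_{\mathcal{Z}^d}\big|\mathscr{F}a(w_1)\big|\,\big|\mathscr{F}b(u-w_1)\big|\,\Big|\tfrac{2}{\hbar}\sin\big(\tfrac{\hbar}{2}\{w_1,u-w_1\}\big)\Big|\,e^{(s-\sigma_1-\sigma_2)|u|}\,\kappa(dw_1)\,\kappa(du).
\]

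\textbf{Pointwise estimates inside the integrand.} Next I would estimate the two non-Fourier factors. For the oscillatory factor one uses $|\sin x|\le|x|$ together with the elementary bound $|\{w_1,u-w_1\}|\le 2\,|w_1|\,|u-w_1|$, so that $\big|\tfrac{2}{\hbar}\sin(\tfrac{\hbar}{2}\{w_1,u-w_1\})\big|\le 2\,|w_1|\,|u-w_1|$, uniformly in $\hbar\in(0,1]$. For the exponential weight one splits via the triangle inequality $|u|\le|w_1|+|u-w_1|$ as
\[
e^{(s-\sigma_1-\sigma_2)|u|}\le \Big(e^{s|w_1|}\,e^{-(\sigma_1+\sigma_2)|w_1|}\Big)\Big(e^{(s-\sigma_2)|u-w_1|}\,e^{-\sigma_1|u-w_1|}\Big),
\]
the point being that the factors $e^{s|w_1|}$ and $e^{(s-\sigma_2)|u-w_1|}$ are exactly what is needed to reconstruct $\Vert a\Vert_s$ and $\Vert b\Vert_{s-\sigma_2}$. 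The residual decays then absorb the polynomial factor $2\,|w_1|\,|u-w_1|$ through the elementary estimate \eqref{e:elementary_estimate} with $m=1$: $|w_1|\,e^{-(\sigma_1+\sigma_2)|w_1|}\le \frac{1}{e(\sigma_1+\sigma_2)}$ and $|u-w_1|\,e^{-\sigma_1|u-w_1|}\le\frac{1}{e\sigma_1}$, which together with the factor $2$ produces precisely the constant $\frac{2}{e^2\sigma_1(\sigma_1+\sigma_2)}$.

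\textbf{Conclusion.} Pulling this constant out of the double integral leaves
\[
\frac{1}{(2\pi)^{2d}}\int_{\mathcal{Z}^d}\!\!\int_{\mathcal{Z}^d}\big|\mathscr{F}a(w_1)\big|\,e^{s|w_1|}\,\big|\mathscr{F}b(u-w_1)\big|\,e^{(s-\sigma_2)|u-w_1|}\,\kappa(dw_1)\,\kappa(du),
\]
which, by Tonelli's theorem and the invariance of $\kappa$ under the translation $u\mapsto u-w_1$ on $\mathcal{Z}^d$, factorizes as $\Vert a\Vert_s\,\Vert b\Vert_{s-\sigma_2}$ (each factor being exactly \eqref{e:compact_norm}). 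This yields \eqref{e:first_commutator}. There is no genuine obstacle here; the only points that require a little care are the bookkeeping of the $(2\pi)^d$ normalizations in \eqref{e:def_symbol_commutator}, \eqref{e:two_fourier}, \eqref{e:compact_norm}, and — more substantively — the asymmetric way the exponential weight must be split, with the factor $a$ (kept at level $s$) absorbing the full residual rate $\sigma_1+\sigma_2$ and the factor $b$ (already taken at level $s-\sigma_2$) absorbing only $\sigma_1$, which is what makes the bound homogeneous in the way stated.
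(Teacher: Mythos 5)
Your proof is correct and follows essentially the same route as the paper: bound the Fourier transform of the Moyal commutator via $|\sin x|\le |x|$ and $|\{w',w-w'\}|\le 2|w'||w-w'|$, split the exponential weight asymmetrically so that $a$ absorbs the decay rate $\sigma_1+\sigma_2$ and $b$ the rate $\sigma_1$, and conclude with \eqref{e:elementary_estimate} and Tonelli. The only differences are presentational (you make the convolution structure and the translation invariance of $\kappa$ explicit), so nothing further is needed.
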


\begin{remark}
\label{r:remark_poisson}
The same estimate holds for the classical Poisson bracket $\{ \cdot, \cdot \}$ replacing $[\cdot, \cdot ]_\hbar$.
\end{remark}

\begin{proof}
By \eqref{e:moyal_product}, we have
\begin{equation}
\label{e:symbol_commutator}
[a,b]_\hbar(z) = \frac{2i}{\hbar} \int_{\mathcal{Z}^{2d}} \mathcal{F}a(w') \mathcal{F}b(w-w') \sin \left( \frac{\hbar}{2} \{ w', w-w' \} \right) \frac{e^{iw\cdot z}}{(2\pi)^{2d}} \kappa(dw') \, \kappa(dw).
\end{equation}
Then, using that
\begin{equation}
\label{e:bound_symplecti_product}
\vert \{w', w-w' \} \vert \leq 2 \vert w' \vert \vert w- w' \vert,
\end{equation}
we obtain:
\begin{align*}
\big \Vert [a,b]_\hbar \big \Vert_{s-\sigma_1-\sigma_2} & \\[0.2cm]
 & \hspace*{-2cm} \leq 2 \int_{\mathcal{Z}^{2d}} \vert \mathcal{F}a(w') \vert \vert w' \vert \vert \mathcal{F}b(w-w') \vert \vert w - w' \vert e^{(s-\sigma_1 - \sigma_2)(\vert w-w' \vert + \vert w' \vert)} \kappa(dw') \kappa(dw) \\[0.2cm] 
 & \hspace*{-2cm} \leq 2 \Big( \sup_{r\geq 0} r e^{-\sigma_1r} \Big) \Big( \sup_{r \geq 0} r e^{-(\sigma_1 + \sigma_2)r} \Big) \Vert a \Vert_{s} \Vert b \Vert_{s - \sigma_2} \\[0.2cm]
 & \hspace*{-2cm} \leq \frac{2}{e^2 \sigma_1 (\sigma_1 + \sigma_2)} \Vert a \Vert_{s} \Vert b \Vert_{s-\sigma_2}.
\end{align*}
\end{proof}

\begin{lemma}
\label{l:propagator_symbolic_lemma}
Let $\epsilon > 0$. Given $H(t) \in \mathcal{A}_{s,\epsilon}(T^*\mathbb{T}^d)$, let $U_H(t)$ be the unitary operator solving  \eqref{e:evolution_problem}. Let $0 < \sigma < s$. Assume that
\begin{equation}
\label{e:analytic_constrain}
\frac{2 \epsilon \Vert H \Vert_{s,\epsilon}}{\sigma^2} \leq \frac{1}{2}.
\end{equation}
Then, for every $a \in \mathcal{A}_s(T^*\mathbb{T}^d)$, there exists $C = C(H,\epsilon,\sigma) > 0$ and a symbol $\Psi_t^H(a) \in \mathcal{A}_{s-\sigma}(T^*\mathbb{T}^d)$ such that, for $0 \leq t  \leq \epsilon$,
\begin{equation}
\label{e:symbol_conjugation}
U_H^*(t) \Op_\hbar(a) U_H(t) = \Op_\hbar(\Psi_t^H(a)),
\end{equation}
and moreover:
$$
\Vert \Psi_t^H(a) \Vert_{s-\sigma} \leq 2 \Vert a \Vert_s.
$$
\end{lemma}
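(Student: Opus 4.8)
The plan is to expand the conjugation $U_H(t)^*\Op_\hbar(a)U_H(t)$ into a time-ordered Dyson series of nested Moyal commutators, to estimate this series in $\mathcal{A}_{s-\sigma}(T^*\mathbb{T}^d)$ by iterated use of Lemma \ref{l:commutator_loss} with the admissible analyticity loss $\sigma$ distributed evenly among the commutators, and finally to identify the resulting symbol series with the operator conjugation through the analytic Calder\'on--Vaillancourt estimate (Lemma \ref{l:analytic_calderon_vaillancourt_torus}).

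First I would set up the expansion at the operator level. Starting from \eqref{e:general_conjugation_equation} and the normalisation of \eqref{e:commutator_symbol}, by which $[\Op_\hbar(b),\Op_\hbar(c)]=-i\hbar\,\Op_\hbar([b,c]_\hbar)$, one obtains the operator identity
$$
\frac{d}{dt}\big(U_H(t)^*\Op_\hbar(a)U_H(t)\big)=U_H(t)^*\Op_\hbar\big([H(t),a]_\hbar\big)U_H(t),
$$
with value $\Op_\hbar(a)$ at $t=0$. Iterating the corresponding Duhamel formula $m$ times --- applying it at each step to the \emph{fixed} inner symbol, which does not depend on the newly introduced integration variable --- yields the partial sums $\sum_{k=0}^{m-1}\Op_\hbar(\psi_k(t))$, where $\psi_0(t)=a$ and, for $k\ge1$,
$$
\psi_k(t)=\int_{\Sigma_k(t)}\big[H(t_k),\big[\cdots\big[H(t_1),a\big]_\hbar\cdots\big]_\hbar\big]_\hbar\,dt_k\cdots dt_1,\qquad\Sigma_k(t)=\{0\le t_k\le\cdots\le t_1\le t\},
$$
together with a remainder $\mathcal{R}_m(t)=\int_{\Sigma_m(t)}U_H(t_m)^*\Op_\hbar\big([H(t_m),[\cdots[H(t_1),a]_\hbar\cdots]_\hbar]_\hbar\big)U_H(t_m)\,dt_m\cdots dt_1$. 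Each $\psi_k(t)$ is a genuine element of $\mathcal{A}_{s-\sigma}(T^*\mathbb{T}^d)$ by finitely many applications of Lemma \ref{l:commutator_loss}, and $\int_{\Sigma_k(t)}\Op_\hbar(\,\cdot\,)=\Op_\hbar(\psi_k(t))$ since $\Op_\hbar$ is linear and continuous on $\mathcal{A}_{s-\sigma}$.

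The heart of the argument is the estimate of $\psi_m(t)$. For the $m$-fold commutator I split $\sigma=\sigma^{(1)}+\cdots+\sigma^{(m)}$ with $\sigma^{(j)}=\sigma/m$, so that the accumulated radius after forming the $j$-th bracket is $s-j\sigma/m$; applying Lemma \ref{l:commutator_loss} once per bracket, with each $H(t_j)$ carrying the full radius $s$ and a fresh loss $\sigma^{(j)}$, gives
$$
\Big\|\big[H(t_m),\big[\cdots\big[H(t_1),a\big]_\hbar\cdots\big]_\hbar\big]_\hbar\Big\|_{s-\sigma}\le\Big(\prod_{j=1}^m\frac{2}{e^2(\sigma/m)(j\sigma/m)}\Big)\prod_{j=1}^m\|H(t_j)\|_s\,\|a\|_s=\frac{2^m m^{2m}}{e^{2m}\sigma^{2m}m!}\prod_{j=1}^m\|H(t_j)\|_s\,\|a\|_s .
$$
Integrating over $\Sigma_m(t)$, whose $\prod_j\|H(t_j)\|_s$-integral equals $\tfrac{1}{m!}\big(\int_0^t\|H(\tau)\|_s\,d\tau\big)^m$, and using $\sup_{[0,\epsilon]}\|H(\tau)\|_s\le\|H\|_{s,\epsilon}$ (immediate from \eqref{e:norm} and $0\le t\le\epsilon$), the bound $0\le t\le\epsilon\le1$, and the elementary inequality $m^m\le e^m m!$, I obtain
$$
\|\psi_m(t)\|_{s-\sigma}\le\frac{m^{2m}}{e^{2m}(m!)^2}\Big(\frac{2\|H\|_{s,\epsilon}}{\sigma^2}\Big)^m\|a\|_s\le\Big(\frac{2\|H\|_{s,\epsilon}}{\sigma^2}\Big)^m\|a\|_s\le 2^{-m}\|a\|_s ,
$$
the last step being the hypothesis \eqref{e:analytic_constrain}. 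Hence $\Psi_t^H(a):=\sum_{m\ge0}\psi_m(t)$ converges in the Banach space $\mathcal{A}_{s-\sigma}(T^*\mathbb{T}^d)$ with $\|\Psi_t^H(a)\|_{s-\sigma}\le\sum_{m\ge0}2^{-m}\|a\|_s=2\|a\|_s$; moreover, bounding $\mathcal{R}_m(t)$ via Lemma \ref{l:analytic_calderon_vaillancourt_torus} and the unitarity of $U_H$ gives $\|\mathcal{R}_m(t)\|_{\mathcal{L}(L^2(\mathbb{T}^d))}\le C_{d,s-\sigma}\,2^{-m}\|a\|_s\to0$. Passing to the limit in the operator expansion and using the continuity of $\Op_\hbar$, this yields $U_H(t)^*\Op_\hbar(a)U_H(t)=\sum_{m\ge0}\Op_\hbar(\psi_m(t))=\Op_\hbar(\Psi_t^H(a))$, which is \eqref{e:symbol_conjugation} with the claimed norm bound (the constant $C$ in the statement is immaterial and may be taken equal to $2$).

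The bookkeeping of the first two paragraphs is routine given the good behaviour of the symbolic calculus on the spaces $\mathcal{A}_s$; the delicate point I expect to demand care is the estimate above. One must spread a \emph{fixed} loss $\sigma$ over arbitrarily many commutators so that the product of the constants of Lemma \ref{l:commutator_loss} --- each of order $(\text{loss})^{-2}$ --- is not overwhelmed by the two factorials entering the final bound: the $1/m!$ coming from the time-ordered simplex and the $1/m!=1/\prod_j j$ hidden in the accumulated radii $j\sigma/m$. The uniform split $\sigma^{(j)}=\sigma/m$ balances these competing factors exactly because $m^{2m}\le(e^m m!)^2$; any coarser distribution of the loss would destroy the convergence, which is precisely the loss-of-analyticity phenomenon alluded to in the introduction.
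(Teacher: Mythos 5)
Your proposal is correct in substance, but it organizes the expansion differently from the paper. The paper works with the power series in $t$: it expands $H(t)=\sum_n t^{n-1}H_n$ and $\Psi_t^H(a)=\sum_n t^n\psi_n(a)$, derives the recursion $\psi_n(a)=\frac{1}{n}\sum_{j}\psi_j([H_{n-j},a]_\hbar)$, resolves it explicitly through the combinatorial coefficients $\mathbf{c}_{k_j,\ldots,k_1}$, and then estimates each Taylor coefficient by iterating Lemma \ref{l:commutator_loss} and invoking Stirling, so that the sum over $n$ reproduces the geometric series in $2\Vert H\Vert_{s,\epsilon}/\sigma^2$. You instead expand in commutator depth via a time-ordered Duhamel (Dyson) iteration, so the factor $1/m!$ that the paper extracts from the coefficients $\mathbf{c}$ plus Stirling comes for free from the volume of the simplex $\Sigma_m(t)$, while the second $1/m!$ comes, exactly as in the paper, from accumulating the radii $j\sigma/m$ in the iterated use of Lemma \ref{l:commutator_loss}; the balance $m^{2m}\le e^{2m}(m!)^2$ then plays the role of Stirling. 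Your route has the additional merit of making the identification \eqref{e:symbol_conjugation} rigorous: you bound the Dyson remainder in operator norm through Lemma \ref{l:analytic_calderon_vaillancourt_torus} and unitarity, whereas the paper passes from the formal differential equation to the symbol series without an explicit convergence argument for the operator identity.

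One small point to repair: you invoke ``$0\le t\le\epsilon\le 1$'', but the lemma only assumes $\epsilon>0$. The detour through $\sup_{[0,\epsilon]}\Vert H(\tau)\Vert_s$ is unnecessary: directly from \eqref{e:norm}, writing $H(t)=\sum_{n\ge1}t^{n-1}H_n$, one has $\int_0^t\Vert H(\tau)\Vert_s\,d\tau\le\sum_{n\ge1}\frac{t^n}{n}\Vert H_n\Vert_s\le\Vert H\Vert_{s,\epsilon}$ for $0\le t\le\epsilon$, since the powers of $\epsilon$ are already built into the norm (this is also how the paper's proof avoids any restriction on $\epsilon$). With that one-line change your estimate $\Vert\psi_m(t)\Vert_{s-\sigma}\le 2^{-m}\Vert a\Vert_s$ and hence the bound $\Vert\Psi_t^H(a)\Vert_{s-\sigma}\le 2\Vert a\Vert_s$ under \eqref{e:analytic_constrain} go through for arbitrary $\epsilon>0$.
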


\begin{remark}
\label{r:remark_flow}
The same estimate holds for the classical flow $\Phi_t^H$ replacing $\Psi_t^H$.
\end{remark}

\begin{proof}
Using equations \eqref{e:evolution_problem} and \eqref{e:adjoint_evolution_equation}, we have formally that
$$
\frac{d}{dt} \Psi_t^H(a) = \Psi_t^H\big( [H(t), a]_\hbar \big).
$$
Expanding
$$
H(t) = \sum_{n=1}^\infty t^{n-1} H_n, \quad \Psi_t^{H}(a) = \sum_{n=0}^\infty t^n \psi_n(a),
$$
we find $\psi_0^{-1}(a) = 0$ and for $n \geq 1$, the recursive equation
$$
\psi_n(a) = \frac{1}{n} \sum_{j=0}^{n-1} \psi_j( [H_{n-j}, (a)]_\hbar).
$$
This gives:
$$
\psi_n(a) = \sum_{j=1}^n \sum_{k_1+ \cdots + k_j = n} \mathbf{c}_{k_j,\ldots,k_1}[ H_{k_j}, \cdots ,[ H_{k_1}, a]_\hbar \cdots ]_\hbar,
$$
where
$$
\mathbf{c}_{k_j,\ldots, k_1} := \frac{1}{k_1 + \cdots + k_j} \cdot \frac{1}{k_2 + \cdots + k_{j}} \cdots \frac{1}{k_j}.
$$
Using repeteadly \eqref{e:first_commutator} and Stirling's approximation, we get:
\begin{align*}
\Vert \psi_n(a) \Vert_{s-\sigma} & \leq \sum_{j=1}^n \sum_{k_1 + \cdots + k_j = n} \left(\frac{2}{\sigma^2}\right)^j \left( \frac{j^j}{e^j j!} \right)^2  j! \, \mathbf{c}_{k_j,\ldots,k_1} \Vert H_{k_1} \Vert_s \cdots \Vert H_{k_j} \Vert_s \Vert a \Vert_s \\[0.2cm]
 & \leq  \sum_{k_1 + \cdots + k_j = n} \left( \frac{2}{\sigma^2} \right)^j \Vert H_{k_1} \Vert_s \cdots \Vert H_{k_j} \Vert_s \Vert a \Vert_s .
\end{align*}
Summing up, we obtain:
$$
\Vert \Psi_t^H(a) \Vert_{s-\sigma} \leq \Vert a \Vert_s \sum_{j=0}^\infty \left( \frac{2 t \Vert H \Vert_{s,t}}{\sigma^2} \right)^j.
$$
Taking $t = \epsilon$ and using \eqref{e:analytic_constrain}, the claim holds.

\end{proof}

\subsection{Cohomological equations}
\label{s:cohomological_equations}

In this section we explain how to solve the cohomological equations appearing in our averaging method. This is a standard technique when dealing with small divisors problems.

\begin{lemma}
\label{l:solution_cohomological_equation}
Let $V \in \mathcal{A}_{s}(T^*\mathbb{T}^d)$. Then, the cohomological equation
\begin{equation}
\label{e:cohomological_equation_lemma}
\frac{i}{\hbar} [ \widehat{L}_{\omega,\hbar} , \Op_\hbar(F)] = \Op_\hbar(V - \langle V \rangle), \quad \langle F \rangle = 0,
\end{equation}
where
$$
\langle V \rangle(\xi) = \frac{1}{(2\pi)^d} \int_{\mathbb{T}^d} V(x,\xi) dx =  \frac{1}{(2\pi)^{d/2}} \widehat{V} (0,\xi),
$$
has a unique solution $F \in \mathcal{A}_{s-\sigma}(T^*\mathbb{T}^d)$, for every $0 < \sigma < s$, such that
\begin{equation}
\label{e:bound_solution_cohomologial}
\Vert F \Vert_{s-\sigma} \leq \varsigma^{-1} \left( \frac{\gamma}{e\sigma} \right)^{\gamma} \Vert V \Vert_{s}.
\end{equation}
\end{lemma}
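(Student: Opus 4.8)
The plan is to pass to the level of symbols, where the equation becomes a transport equation that is solvable Fourier mode by Fourier mode, and then to read off the analytic bound \eqref{e:bound_solution_cohomologial} from the Diophantine condition as in the classical small-divisor estimate. First I would use that $\mathcal{L}_\omega(x,\xi)=\omega\cdot\xi$ is linear, so that its Moyal commutator with any symbol reduces to the Poisson bracket (the same fact already invoked for \eqref{e:conjugation_equation}: the $O(\hbar^2)$ terms in the Moyal expansion of $[\mathcal{L}_\omega,F]_\hbar$ involve at least third-order derivatives of $\mathcal{L}_\omega$ and hence vanish). Therefore
\[
\frac{i}{\hbar}[\widehat{L}_{\omega,\hbar},\Op_\hbar(F)] = \Op_\hbar\big([\mathcal{L}_\omega,F]_\hbar\big) = \Op_\hbar\big(\{\mathcal{L}_\omega,F\}\big) = \Op_\hbar(\omega\cdot\nabla_x F),
\]
so that \eqref{e:cohomological_equation_lemma} is equivalent, at the level of symbols, to the transport equation $\omega\cdot\nabla_x F = V - \langle V\rangle$ together with $\langle F\rangle=0$.

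Next I would solve this equation in Fourier. Writing $w=(k,\eta)\in\mathcal{Z}^d$ and applying the Fourier transform $\mathscr{F}$ on $T^*\mathbb{T}^d$, the transport equation reads $i(\omega\cdot k)\,\mathscr{F}F(w)=\mathscr{F}V(w)$ for $k\neq0$ (it holds trivially for $k=0$, since $V-\langle V\rangle$ has vanishing $x$-mean), while $\langle F\rangle=0$ forces $\mathscr{F}F(0,\eta)=0$. Hence the unique candidate is
\[
\mathscr{F}F(w) := \frac{\mathscr{F}V(w)}{i\,\omega\cdot k}\quad\text{for } k\neq0,\qquad \mathscr{F}F(0,\eta):=0.
\]
Uniqueness within zero-mean symbols is immediate: two solutions differ by a symbol $G$ with $\omega\cdot\nabla_x G=0$ and $\langle G\rangle=0$, and since $\omega\cdot k\neq0$ for every $k\neq0$ by \eqref{e:diophantine}, every Fourier mode of $G$ vanishes.

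It remains to check that this $F$ lies in $\mathcal{A}_{s-\sigma}(T^*\mathbb{T}^d)$ with the claimed norm, which is where the Diophantine condition enters quantitatively. Using the expression \eqref{e:compact_norm} for the norm, the lower bound $|\omega\cdot k|\ge\varsigma|k|^{-\gamma}$, the inequality $|k|\le|w|$, and the elementary estimate \eqref{e:elementary_estimate} in the form $|w|^{\gamma}e^{-\sigma|w|}\le(\gamma/(e\sigma))^{\gamma}$, I would compute
\[
\Vert F\Vert_{s-\sigma} = \frac{1}{(2\pi)^d}\int_{\mathcal{Z}^d}|\mathscr{F}F(w)|\,e^{(s-\sigma)|w|}\,\kappa(dw) \le \frac{1}{\varsigma}\Big(\frac{\gamma}{e\sigma}\Big)^{\gamma}\frac{1}{(2\pi)^d}\int_{\mathcal{Z}^d}|\mathscr{F}V(w)|\,e^{s|w|}\,\kappa(dw) = \frac{1}{\varsigma}\Big(\frac{\gamma}{e\sigma}\Big)^{\gamma}\Vert V\Vert_s,
\]
which is \eqref{e:bound_solution_cohomologial}; the finiteness of this bound also justifies a posteriori that the integral defining $F$ converges absolutely and that $\Op_\hbar(F)$ genuinely solves \eqref{e:cohomological_equation_lemma}. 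I do not expect a real obstacle here: the only step requiring a moment's care is the linearity observation that removes the $\hbar$-dependence from the commutator, and after that the proof is the textbook small-divisor argument, with the loss of analyticity width $\sigma$ absorbing the polynomial factor $|k|^{\gamma}$.
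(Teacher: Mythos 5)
Your proposal is correct and follows essentially the same route as the paper: reduce to the symbol-level transport equation (exact because $\mathcal{L}_\omega$ is linear), solve mode by mode with the small divisors $i\,\omega\cdot k$, and absorb the Diophantine loss $|k|^{\gamma}$ via \eqref{e:elementary_estimate} through the analyticity loss $\sigma$. The only cosmetic difference is that you bound $|k|^{\gamma}e^{-\sigma|w|}$ by $|w|^{\gamma}e^{-\sigma|w|}$, whereas the paper estimates directly in $|k|$ (so the loss of analyticity occurs only in the $x$-variable), but both give the same constant \eqref{e:bound_solution_cohomologial}.
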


\begin{proof}
Using the properties of the symbolic calculus for the Weyl quantization, which in this case is exact since $\mathcal{L}_\omega$ is a polynomial of degree one, equation \eqref{e:cohomological_equation_lemma} at symbol level is just
\begin{equation}
\label{e:symbolic_cohomological}
\{ \mathcal{L}_\omega, F \} = V - \langle V \rangle, \quad \langle F \rangle = 0.
\end{equation}
On the other hand, since
$$
\{ \mathcal{L}_\omega, F \}(x,\xi) =  \sum_{k \in \mathbb{Z}^d} i \omega \cdot k \,  \widehat{F}(k,\xi) e_k(x),
$$
we obtain the following formal series for the solution of \eqref{e:symbolic_cohomological}:
\begin{equation}
\label{e:solution_cohomological_equation}
F(x,\xi) =  \sum_{k \in \mathbb{Z}^d\setminus \{ 0 \}} \frac{\widehat{V}(k,\xi)}{i\omega \cdot k}  e_k(x).
\end{equation}
Finally, by the Diophantine condition \eqref{e:diophantine} and estimate \eqref{e:elementary_estimate}, we get \eqref{e:bound_solution_cohomologial}.
 Notice that the loss of analyticity of $F$ with respect to $V$ occurs only in the variable $x$.
\end{proof}

\section{Rooted planar trees and simple index sets}
\label{a:trees_and_index}

\begin{proof}[Proof of Lemma \ref{l:trees_and_oreders}] We proceed by induction. The case $n = 1$ is trivial. Let us assume that the first part of the statement is true for $n - 1 \geq 1$. Let $(A,\delta)$ be a  simple index set, we assume without loss of generality that $A = \{1, \ldots, n \}$. Let $a \in A$ be the unique element of $A$ such that $\delta(a) \geq 1$ and $\delta(b) = 0$ for all $1 \leq b < a$ (in particular $a \geq 2$). Then we design $B = \{ b \in A \, : \, 1 \leq b \leq \delta(a) \}$ to be the set of immediate predecessors of $a$. Now, let $A' = A \setminus   B$, we define the simple index set $(A',\delta')$ satisfying $\delta'(a') = \delta(a')$ if $a' \neq a$ and $\delta'(a) = 0$. By the induction hypothesis, there is a unique rooted planar tree $\mathcal{T}' = (A',\prec)$ satisfying
$$
\delta' = \upsilon_{\mathbf{d}'}(\mathcal{T}') \times \cdots \times \upsilon_0(\mathcal{T}'),
$$
where $\mathbf{d}' = \mathbf{d}(\mathcal{T}')$. Now, it is clear that $\mathcal{T} = (A,\prec)$ given by adding the elements of $B$ to $A'$ as immediate predecessors of $a$ defines a rooted planar tree. Let us define $\widetilde{\upsilon}_{\mathbf{d}}$ by
$$
\widetilde{\upsilon}^{\mathbf{d}'}_j := \left \lbrace \begin{array}{ll}
\upsilon^{\mathbf{d}'}_j(\mathcal{T}'), & \text{if } a_j \neq a, \\[0.2cm]
\delta(a), & \text{if } a_j = a.
\end{array} \right.
$$
Thus, \eqref{e:labelling} holds provided that $\upsilon_l(\mathcal{T}) = \upsilon_l(\mathcal{T}')$ for all $0 \leq l \leq \mathbf{d}(\mathcal{T}')-1$, and
\begin{align*}
\upsilon_{\mathbf{d}}(\mathcal{T}) & := \underbrace{(0, \ldots, 0)}_{\delta(a)} \times \widetilde{\upsilon}_{\mathbf{d}'}, \hspace*{2.6cm} \text {if } \mathbf{d} := \mathbf{d}(\mathcal{T}) = \mathbf{d}(\mathcal{T}'), \\[0.2cm]
\upsilon_{\mathbf{d}}(\mathcal{T}) & := \underbrace{(0, \ldots, 0)}_{\delta(a)}, \quad \upsilon_{\mathbf{d}-1}(\mathcal{T}) = \widetilde{\upsilon}_{\mathbf{d}'}, \quad \text{ if } \mathbf{d} := \mathbf{d}(\mathcal{T}) = \mathbf{d}(\mathcal{T}') + 1.
\end{align*}
Conversely, let $\mathcal{T} = (A, \prec)$ be a tree structure, let $\delta$ be the labelling given by \eqref{e:labelling}. Since $l(1) = \mathbf{d}$, we have that $\delta(1) = 0$. Moreover, let $A' := A \setminus \{1\}$, by the induction hypothesis the labelling $\delta'$ given by \eqref{e:labelling} for the subtree $\mathcal{T}' := (A', \prec)$ defines a simple index set. Moreover, $\delta'(a') = \delta(a')$ except at the immediate successor $a$ of $1$ in $(\mathcal{T},\prec)$, where $\delta'(a) = \delta(a) -1 $. Then 
$$
\sum_{1 \leq i \leq n } \delta_i = n-1.
$$ 
Moreover, for every $j > a$, we have
$$
\sum_{j \leq i \leq n} \delta_i = \sum_{j \leq i \leq n} \delta'_i \geq n - j + 1,
$$
while if $j \leq a$, then
$$
\sum_{j \leq i \leq n} \delta_i = 1 + \sum_{j \leq i \leq n} \delta'_i \geq 1 + n - j.
$$
This shows that $(\delta,A)$ defines a simple index set, and concludes the proof.
\end{proof}

\bibliographystyle{plain}
\bibliography{Referencias}

\end{document}